\newtheorem{theorem}{Theorem}
\newtheorem{lemma}[theorem]{Lemma}
\newtheorem{definition}[theorem]{Definition}
\newtheorem{proposition}[theorem]{Proposition}
\newtheorem{corollary}[theorem]{Corollary}
\def\squareforqed{\hbox{\rlap{$\sqcap$}$\sqcup$}}
\def\qed{\ifmmode\squareforqed\else{\unskip\nobreak\hfil
\penalty50\hskip1em\null\nobreak\hfil\squareforqed
\parfillskip=0pt\finalhyphendemerits=0\endgraf}\fi}
\def\endenv{\ifmmode\;\else{\unskip\nobreak\hfil
\penalty50\hskip1em\null\nobreak\hfil\;
\parfillskip=0pt\finalhyphendemerits=0\endgraf}\fi}
\newenvironment{proof}[1][Proof]{\noindent\textbf{Proof.} }{\hfill\qed}
\DeclareMathOperator{\var}{Var}
\newcommand{\beq}{\begin{equation}}
\newcommand{\eeq}{\end{equation}}
\newcommand{\mc}{\mathcal}
\newcommand{\ox}{\otimes}
\newcommand*{\id}{\mathrm{id}}
\newcommand*{\tr}[1]{\mathrm{Tr}\left[#1\right]}
\newcommand*{\ket}[1]{| #1 \rangle}
\newcommand*{\bra}[1]{\langle #1 |}
\newcommand*{\proj}[1]{|#1\rangle\!\langle #1|}
\newcommand*{\mge}{\succcurlyeq}
\newcommand*{\mle}{\preccurlyeq}
\newcommand*{\exs}[2]{\mathbb{E}_{#1}\left[#2 \right]}
\newcommand*{\bid}{\mathbf{1}}
\newcommand*{\cA}{\mathcal{A}}
\newcommand*{\cE}{\mathcal{E}}
\newcommand*{\cG}{\mathcal{G}}
\newcommand*{\cH}{\mathcal{H}}
\newcommand*{\cN}{\mathcal{N}}
\newcommand*{\cM}{\mathcal{M}}
\newcommand*{\cO}{\mathcal{O}}
\newcommand*{\cP}{\mathcal{P}}
\newcommand*{\cS}{\mathcal{S}}
\newcommand*{\fS}{\mathfrak{S}}
\newcommand*{\cT}{\mathcal{T}}
\newcommand*{\cU}{\mathcal{U}}
\newcommand*{\cX}{\mathcal{X}}
\newcommand*{\cY}{\mathcal{Y}}
\newcommand{\supp}{{\operatorname{supp}}}
\title{Exponents for classical-quantum channel simulation \\in purified distance}
\author[]{Aadil Oufkir}
\author[]{Yongsheng Yao}
\author[]{Mario Berta}
\affil[]{\small{Institute for Quantum Information,
  RWTH Aachen University,
  Aachen, Germany}}
\begin{document}

\maketitle

\begin{abstract}
    We determine the exact error and strong converse exponent for entanglement-assisted classical-quantum channel simulation in worst case input purified distance. The error exponent is expressed as a single-letter formula optimized over sandwiched R\'enyi divergences of order $\alpha \in [1, \infty)$, notably without the need for a critical rate\,---\,a sharp contrast to the error exponent for classical-quantum channel coding. The strong converse exponent is expressed as a single-letter formula optimized over sandwiched R\'enyi divergences of order $\alpha\in [\frac{1}{2},1]$. As in the classical work [Oufkir {\it et al.}, arXiv:2410.07051], we start with the goal of asymptotically expanding the meta-converse for channel simulation in the relevant regimes. However, to deal with non-commutativity issues arising from classical-quantum channels and entanglement-assistance, we critically use various properties of the quantum fidelity, additional auxiliary channel techniques, approximations via Chebyshev inequalities, and entropic continuity bounds.
\end{abstract}


\section{Introduction}

\subsection{Motivation}

Channel simulation is a fundamental information processing task. Dual to the noisy channel coding problem, which focuses on simulating noiseless channels using noisy channels, channel simulation aims to simulate noisy channels using noiseless channels. Bennett {\it et al.}~\cite{BSST2002entanglement} proved that under shared randomness assistance, the minimal classical communication cost rate to achieve asymptotically reliable classical channel simulation equals to the channel's capacity. This result is a milestone in classical information theory known as the reverse Shannon theorem. Later, Bennett {\it et al.}~\cite{BDHSW2014quantum} developed the quantum reverse Shannon theorem, which states that the minimal classical communication rate for asymptotically perfect quantum channel simulation under free entanglement assistance is given by the the entanglement-assisted classical capacity (see also the work of Berta {\it et al.}~\cite{BCR2011the} based on one-shot information theory).

In recent work, \cite{fang2019quantum,cao2024channel} showed that the minimal amount of noiseless classical communication needed to achieve one-shot non-signaling assisted classical channel simulation and one-shot non-signaling assisted quantum channel simulation is characterized by the channel's partially smoothed max-information. Furthermore, the minimal classical communication cost rate for asymptotically perfect non-signaling assisted channel simulation can be derived from these one-shot results. Although the first-order asymptotics for channel simulation under the assistance of various resources have been well understood, much less is known about the large-deviation type exponential behaviors for channel simulation. The work by  \cite{LiYao2021reliable} provides a significant step in this direction by deriving both  lower and upper  bounds  for the error exponent for entanglement-assisted quantum channel simulation. When the classical communication cost rate is below a certain critical rate, the lower and upper bounds coincide, thus determining the exact error exponent in the low-rate regime.

In the classical setting, the recent work by \cite{Oufkir2024Oct} establishes the error and strong converse exponents, under the total variation distance, and for all rates. In contrast, in the quantum setting, the purified distance has the advantage that it is phrased in terms of the quantum fidelity, which features Uhlmann's theorem \cite{uhlmann1976transition} that is an essential tool in quantum information theory. The purified distance in particular, also satisfies desirable properties such as the triangle inequality, the data processing inequality, and the achievability by quantum measurements~\cite{tomamichel2010duality, tomamichel2012framework, tomamichel2015quantum, wilde2013quantum}. For these reasons, we choose here to focus on exponents of channel simulation under the purified distance. Interestingly, the exponents we find for the purified distance, particularly the strong converse exponent, are quite different from those for the total variation distance in the classical setting proved by \cite{Oufkir2024Oct}. The techniques we employ are also notably different (and work in the classical-quantum setting as well).\\

\textbf{Note added.} Concurrently and independently, the exponents for classical channel simulation are derived by Li {\it et al.}~\cite{Li2024Oct} for general $\alpha$-R\'enyi based distances. This includes in particular the $\alpha=1/2$ case, which corresponds to log-fidelity and is closely related to the purified distance considered here.


\subsection{Overview of results}\label{sec:overview}

In this paper, we study the error exponent and the strong converse exponent for channel simulation under the purified distance, which represent the best rates of exponential convergence of the error of this task towards the perfect and worst-case performance, respectively. Specifically, we derive the exact error exponents for classical-quantum channel simulation under the assistance of non-signaling and entanglement resources and the exact error exponents for classical channel simulation under the assistance of non-signaling and shared randomness resources. All of them  are given by the following expression (see Proposition \ref{prop:EE-Ach-P-EA})
\begin{equation}\label{eq:res-EE}
 \frac{1}{2} \sup_{\alpha \geq 0} \alpha \big(r-\widetilde{I}_{1+\alpha}(W)   \big),
\end{equation}
where $W$ is a classical-quantum channel or a classical channel, $r$ is the classical communication cost rate and $\widetilde{I}_{\beta}(W)$ is the the channel's sandwiched R\'enyi mutual information of order $\beta$, i.e.,
\begin{equation}
    \widetilde{I}_{\beta}(W)\coloneqq\inf_{\sigma \,:\, \text{state}} \max_{x} \widetilde{D}_{\beta}(W_x\|\sigma)
\end{equation}
for all $\beta\in (0, \infty)$, and $\widetilde{D}_{\beta}$ is the sandwiched R\'enyi divergence of order $\beta$, defined as 
\begin{equation}
    \widetilde{D}_{\beta}(\rho \| \sigma)\coloneqq\frac{1}{\beta-1} \log \operatorname{Tr}\big({\sigma}^{\frac{1-\beta}{2\beta}} \rho {\sigma}^{\frac{1-\beta}{2\beta}}\big)^\beta \qquad \forall \beta \in (0,\infty). 
\end{equation}
Compared to the previous work~\cite{LiYao2021reliable}, our characterization of the error exponents does not exhibit a critical rate. Hence, we fill in the gap for the error exponent in the high-rate region for classical-quantum channel simulation. We conjecture that a critical rate is not needed for the error exponent in general quantum channel simulation and leave this as an open question for future investigation. 

In addition, we establish the exact strong converse exponents for shared randomness assisted (resp.~entanglement assisted) and non-signaling assisted classical (resp.~classical-quantum) channel simulation, which are characterized  by the channel's sandwiched R\'enyi mutual information as follows (see Corollaries \ref{cor:SCE-ach} and \ref{cor:SCE-ach-CQ})
\begin{equation}\label{eq:res-SCE}
\sup_{\frac{1}{2} \leq \alpha \leq 1} \frac{1-\alpha}{\alpha} \big(\widetilde{I}_{\alpha}(W)-r   \big).
\end{equation}
We refer to Table~\ref{tab:results} for a summary of our results. We derive our main results \eqref{eq:res-EE} and \eqref{eq:res-SCE} using the rounding technique recently developed in \cite{berta2024optimality,Oufkir2024Oct}. We first find the exponents for non-signaling strategies. Then we deduce the exponents in the shared randomness and entanglement assisted scenarios using the approximations algorithms of \cite{berta2024optimality}. It turns out that the approximation guarantees of \cite{berta2024optimality} are strong enough to ensure that non-signaling and shared randomness (resp.~shared entanglement) have the same exponents in the classical (resp.~classical-quantum) setting. This reduces the problem to finding  the  exponents for non-signaling strategies, which correspond to the partially smoothed max-information of the channel~\cite{fang2019quantum,cao2024channel}.

In the following, we present some techniques we used to find the exponents of channel simulation with non-signaling strategies. We prove the converse for strong converse exponent using a type of Hölder's inequality in terms of R\'enyi divergences \cite{LWD2016strong}. Moreover, the converse for error exponent is already proven by \cite{LiYao2021reliable}. The achievability for the error exponent involves constructing the smoothing channel based on the actual channel, an arbitrary replacer channel, and the relation between them.   
Establishing the strong converse achievability exponent turns out to be challenging. To address this, we employ a technique that involves an auxiliary channel (see e.g.~\cite{Haroutunian2008Feb}) and construct the smoothing channel depending on this auxiliary channel. We carry out the analysis using a fidelity-relative entropy inequality of \cite{Li2024-oper}, approximations based on Chebyshev's inequality  and  certain entropic continuity bounds. 

\renewcommand{\arraystretch}{1.5}
\begin{table}[t!]
    \centering
    \begin{tabular}{>{\centering\arraybackslash}m{5.5cm}|c|c}
        \hline
          & \textbf{Error Exponent} & \textbf{Strong Converse Exponent} \\
        \hline
        \textbf{Classical Channel Simulation} (SR, EA, NS) & $\frac{1}{2} \sup_{\alpha \geq 0} \alpha \big(r-{I}_{1+\alpha}(W)   \big)$ & $\sup_{\frac{1}{2} \leq \alpha \leq 1} \frac{1-\alpha}{\alpha} \big({I}_{\alpha}(W)-r   \big)$ \\
        \hline
        \textbf{CQ Channel Simulation} \; \; \; (EA, NS) & $\frac{1}{2} \sup_{\alpha \geq 0} \alpha \big(r-\widetilde{I}_{1+\alpha}(W)   \big)$ & $\sup_{\frac{1}{2} \leq \alpha \leq 1} \frac{1-\alpha}{\alpha} \big(\widetilde{I}_{\alpha}(W)-r   \big)$ \\
        \hline
    \end{tabular}
       \caption{Overview of Results: Exponents of classical and classical-quantum (CQ) channel simulation under the purified distance and various forms of assistance\,--\,shared randomness (SR), entanglement assistance (EA) and non-signaling (NS). }
    \label{tab:results}
\end{table}

The remainder of this paper is organized as follows. In Section \ref{subsec:formu}, we formulate the problem of channel simulation in both the classical and classical-quantum settings, considering shared randomness, entanglement-assisted, and non-signaling strategies. We then present two equivalent expressions for the minimal simulation error under the purified distance. Section \ref{sec:prem} provides the necessary preliminaries and notations for this work. In Section \ref{sec:NS:SC}, we establish the strong converse exponent for classical and classical-quantum channel simulation. Section \ref{sec:EE} focuses on deriving the error exponent for these simulations. Finally, in Section \ref{sec:conclusion}, we conclude the paper with a discussion of open questions and future directions.


\subsection{Problem formulation}
\label{subsec:formu}

\paragraph{Classical channel simulation.} For a classical channel $W: \mathcal{X} \rightarrow \mathcal{Y}$, a shared randomness assisted simulation scheme for $W$ consists of an encoding channel $\mathcal{E}: \mathcal{X} \times \mathcal{S} \rightarrow
\mathcal{M}$, a decoding channel $\mathcal{D}: \mathcal{S} \times \mathcal{M} \rightarrow \mathcal{Y}$ and a shared random variable $p_S$, the corresponding simulation channel $\widetilde{W}$ can be described as 
\begin{equation}
\widetilde{W}(y|x)=\sum_{s \in \mathcal{S}} p_S(s)\sum_{i \in \mathcal{M}} \mathcal{E}(i|x,s)\mathcal{D}(y|i,s).
\end{equation}
The classical communication cost of this simulation is $M=|\cM|$ 
and the performance of this simulation is characterized by the channel purified distance 
$P(W,\widetilde{W})\coloneqq\max\limits_{x \in \mathcal{X}} 
P(W(\cdot|x),\widetilde{W}(\cdot|x))$. We denote the optimal performance among all simulation schemes with a fixed classical communication cost as $\epsilon^{\rm{SR}}(W,M)$, i.e.,
\begin{equation}
    \epsilon^{\rm{SR}}(W,M)\coloneqq\min_{\widetilde{W} \in \Delta_M} P(W,\widetilde{W}),
\end{equation}
where $\Delta_M$ is the set of all simulation schemes of communication size $M$. The reverse Shannon theorem \cite{BSST2002entanglement} implies that when the classical communication  rate $r$ is strictly larger than the classical capacity $C(W)$,
$\epsilon^{\rm{SR}}(W^{\otimes n},e^{nr})$ converges to $0$ exponentially and when the rate $r$ is strictly below the channel capacity $C(W)$, $\epsilon^{\rm{SR}}(W^{\otimes n},e^{nr})$ converges to $1$ exponentially. The corresponding exponential decay rates are called the error exponent and the strong  converse exponent  for shared randomness assisted classical channel simulation, respectively. 

A non-signaling assisted simulation channel $\widetilde{W}$ for a classical channel $W: \mathcal{X} \rightarrow \mathcal{Y}$ is given by 
\begin{equation}
\widetilde{W}(y|x)=\sum_{i \in \mathcal{M}} N(i,y|x,i),
\end{equation}
where the channel $N: \mathcal{X}\times \mathcal{M} \rightarrow \mathcal{M} \times \mathcal{Y}$ is non-signaling, i.e.,
\begin{align}
\sum_{y \in \mathcal{Y}} N(i,y|x,j)&=N(i|x),  &\forall x\in \cX,\; \forall i,j \in \cM, \\
\sum_{i \in \mathcal{M}} N(i,y|x,j)&=N(j|y), &\forall j\in \cM,\; \forall x \in \cX,\; \forall y\in \cY.
\end{align}
The classical communication cost of this simulation is $M=|\cM|$. 
The paper~\cite{cao2024channel}  proved that the minimal amount of classical communication $M^{\rm{NS}}(W,\epsilon)$ to achieve one-shot non-signaling assisted simulation for $W$ within tolerance $\epsilon$ can be characterized by the channel's partially smoothed max-information exactly, i.e., \footnote{All logarithms in this paper are taken to be natural logarithms, and information is measured in nats.}
\begin{equation}
\label{equ:relation}
\log M^{\rm{NS}}(W,\epsilon)=I_{\rm{max}}^\epsilon(W).
\end{equation}
Based on Eq.~(\ref{equ:relation}), the optimal error $\epsilon^{\rm{NS}}(W,M)$ among all non-signaling simulation schemes for $W$ with fixed classical communication size $M$ equals to the inverse function of $I_{\rm{max}}^\epsilon(W)$. So the optimal non-signaling channel simulation error $\epsilon^{\rm{NS}}(W,M)$ is the solution to the following program \cite{cubitt2011zero,cao2024channel}:
\begin{align}
    \epsilon^{\rm{NS}}(W,M) &\coloneqq \inf_{\widetilde{W},\, q} \left\{ P\big(W, \widetilde{W}\big) \,\middle|\, \widetilde{W} \text{ channel}, \widetilde{W}(y|x) \le M q_x  \; \forall x\in \cX,\; \forall y\in \cY,\sum_{x\in \cX} q_x=1 \right\}\label{ns-pur-program-cc}
    \\&= \sup_{p_X}\inf_{\widetilde{W},\, q} \left\{ P\big(W \circ p_X, \widetilde{W} \circ p_X\big) \,\middle|\, \widetilde{W} \text{ channel}, \widetilde{W}(y|x) \le M q_x  \; \forall x\in \cX,\; \forall y\in \cY,\sum_{x\in \cX} q_x=1 \right\},\label{ns-pur-program-cc-alt}
\end{align}
where the maximization is over input probability distribution $p_X$,  $W \circ p_X(\cdot) \coloneqq \sum_{x\in \cX} p_X(x) W(\cdot|x)$ is the output probability distribution  and the last equality is proven in Lemma \ref{lem:cl:eps:minimax}.

\paragraph{Quantum channel simulation.} For a quantum channel $\mathcal{N}_{A \rightarrow B}$ from Alice to Bob, a CPTP map $\mathcal{M}_{A \rightarrow B}$ is an entanglement-assisted simulation for $\mathcal{N}_{A \rightarrow B}$ if it consists of using a shared entangled state, applying encoding channel at Alice side, sending classical information of size $M$ to Bob, and at last applying decoding channel at Bob's side. The performance of this scheme is measured by the channel purified distance
$P(\mathcal{M}_{A \rightarrow B},\mathcal{N}_{A \rightarrow B}) \coloneqq \sup_{\phi_{RA}} P(\mathcal{M}_{A \rightarrow B}(\phi_{RA}), \mathcal{N}_{A \rightarrow B}(\phi_{RA}))$ where the optimization is over pure states and $R\simeq A$. We denote the optimal performance among all simulation schemes with a classical communication cost $M$ as $\epsilon^{\rm{EA}}(\mathcal{N}_{A \rightarrow B},M)$, i.e.,
\begin{equation}
    \epsilon^{\rm{EA}}(\mathcal{N}_{A \rightarrow B},M)\coloneqq\min_{\mathcal{M}_{A \rightarrow B} \in \Xi_M} P(\mathcal{M}_{A \rightarrow B},\mathcal{N}_{A \rightarrow B}),
\end{equation}
where $\Xi_M$ is the set of all entanglement assisted simulation schemes with communication size $M$.
For a fixed classical communication cost rate $r \geq 0$, the error exponent and the strong converse exponent for entanglement-assisted quantum channel simulation are the exact rate of exponential decay under which $\epsilon^{\rm{EA}}(\mathcal{N}^{\otimes n}_{A \rightarrow B},e^{nr})$ converges  to  $0$ and $1$, respectively.

A non-signaling assisted simulation channel $\widetilde{N}_{A \rightarrow B}$ for $\mathcal{N}_{A \rightarrow B}$ can be constructed as
\begin{equation}
\widetilde{N}_{A \rightarrow B}=\Pi_{AY \rightarrow BX} \circ \id_{X \rightarrow Y},
\end{equation}
where $\id_{X \rightarrow Y}$ is the identity channel and $\Pi_{AY \rightarrow BX}$ is a  non-signaling quantum super-channel, i.e.,
\begin{align}
    \text{Tr}_X J_{\Pi}&=\frac{I_A}{|A|}\otimes \text{Tr}_{AX} J_\Pi, \\
\text{Tr}_B J_{\Pi}&=\frac{I_{Y}}{|Y|}\otimes \text{Tr}_{BY} J_\Pi,
\end{align}
where $J_{\Pi}$ is the Choi matrix of $\Pi$. The  communication cost of this simulation is $M=|X|=|Y|$. 
The paper~\cite{fang2019quantum} showed that the minimal amount of classical communication $M^{\rm{NS}}(\mathcal{N}_{A \rightarrow B},\epsilon)$ to achieve one-shot non-signaling assisted simulation for $\mathcal{N}_{A \rightarrow B}$ within tolerance $\epsilon$ can be characterized by the channel's partially smoothed max-information exactly, i.e., 
\begin{equation}
\label{equ:relation1}
\log M^{\rm{NS}}(\mathcal{N}_{A \rightarrow B},\epsilon)=I_{\rm{max}}^\epsilon(\mathcal{N}_{A \rightarrow B}).
\end{equation}
Hence, the large-deviation behaviour of the optimal error $\epsilon^{\rm{NS}}(\mathcal{N}_{A \rightarrow B},M)$ in non-signaling assisted simulation for $\mathcal{N}_{A \rightarrow B}$ can be described by the inverse function of 
$I_{\rm{max}}^\epsilon(\mathcal{N}_{A \rightarrow B})$ completely. More precisely, for a classical-quantum channel $W = \{W_x\}_{x\in \cX}$, the optimal non-signaling channel simulation error  $\epsilon^{\rm{NS}}(W,M)$ is the solution to the following semi-definite program \cite{fang2019quantum}:
\begin{align}
    \epsilon^{\rm{NS}}(W,M) &\coloneqq \inf_{\widetilde{W},\, \sigma}\left\{ P\big(W, \widetilde{W}\big) \middle| \widetilde{W} \text{ classical-quantum channel}, \widetilde{W}_x \mle M \sigma  \; \forall x\in \cX, \tr{\sigma}=1\right\} \label{ns-pur-program-cq}
    \\ &=\sup_{p_X} \inf_{\widetilde{W},\, \sigma}\left\{ P\big(W\circ p_X, \widetilde{W}\circ p_X\big) \middle| \widetilde{W} \text{ classical-quantum channel}, \widetilde{W}_x \mle M \sigma  \; \forall x\in \cX, \tr{\sigma}=1\right\},\label{ns-pur-program-cq-alt}
\end{align}
where the maximization is over input probability distribution $p_X$,  $W\circ p_X \coloneqq \sum_{x\in \cX} p_X(x) \proj{x}\otimes W_x$  and the last equality is proven in Lemma \ref{lem:cq:eps:minimax}. 


\subsection{Preliminaries}\label{sec:prem}

\subsubsection{Notation}

For a finite alphabet set $\mathcal{X}$, we denote the set of probability measures on $\mathcal{X}$ and the size of $\mathcal{X}$ as $\mathcal{P}(\mathcal{X})$ and $|\mathcal{X}|$, respectively. The notation $I_{\mathcal{X}}$ is used for the identity measure on $\mathcal{X}$. The support of a probability measure $p$ is denoted by $\supp(p)$. A classical channel 
$W: \mathcal{X} \rightarrow \mathcal{Y}$ is a stochastic matrix from $\mathcal{X}$ to $\mathcal{Y}$. We let $W(\cdot|x)$
represent the output probability measure of $W$ corresponding to input signal $x$. Let $W:\mathcal{X} \rightarrow \mathcal{Y}$ be a classical channel, we denote the set of the classical channels whose supports of the output measures are contained in those of 
the output measures of $W$ as $S_W$, i.e,
\begin{equation}
    S_W\coloneqq\left\{V~|~\supp(V(\cdot|x)) \subset \supp(W(\cdot|x)),~\forall x\in \mathcal{X} \right\}.
\end{equation}
For a classical channel $W: \mathcal{X} \rightarrow \mathcal{Y}$ and $p \in \mathcal{P}(\mathcal{X})$, we denote 
$W\circ p$ as the jointly probability distribution $W\circ p(x,y)\coloneqq p(x)W(y|x)$. 

In quantum information theory, every quantum system is associated with a Hilbert space. Let $\mathcal{H}$ be a finite dimensional Hilbert space.  We 
use $\mathcal{L}(\mathcal{H})$ and $\mathcal{P}_0(\mathcal{H})$ for the set of linear operators and the set of positive semidefinite operators on $\mathcal{H}$. 
Quantum states are  positive semidefinite 
operators with trace $1$. The set of quantum states on $\mathcal{H}$  is denoted by $\mathcal{S}(\mathcal{H})$. The set of positive quantum states on $\mathcal{H}$  is denoted by $\cS_+(\cH)$. 
When $\mathcal{H}$ is associated with a system $A$, $\mathcal{L}(\mathcal{H})$ and $\mathcal{S}(\mathcal{H})$
are written as  $\mathcal{L}(A)$ and $\mathcal{S}(A)$, respectively. For two self-adjoint operators $\rho$ and $\sigma$, $\sigma \mle \rho$ stands for $\rho-\sigma$ positive semidefinite. 
The purified distance is a commonly used distance to measure the closeness of two quantum states. It is defined as 
\begin{equation}
    P(\rho, \sigma)\coloneqq\sqrt{1-F(\rho, \sigma)}, \quad \rho, \sigma \in \mathcal{S}(\mathcal{H}),
\end{equation}
where $F(\rho, \sigma)\coloneqq\|\sqrt{\rho}\sqrt{\sigma}\|^2_1$ is the fidelity function.

A quantum channel $\mathcal{N}_{A \rightarrow B}$ is a completely positive and trace-preserving linear map from 
$\mathcal{L}(A)$ to $\mathcal{L}(B)$. A classical-quantum channel $W: \mathcal{L}(\mathcal{X}) \rightarrow \mathcal{L}(\mathcal{H})$ is a quantum channel which maps the input signals in $\mathcal{X}$ into the quantum states on $\mathcal{H}$. It can be expressed as 
\begin{equation}
    W(\cdot)=\sum_{x \in \mathcal{X}}\bra{x}(\cdot)\ket{x} W_x,
\end{equation}
or equivalently $W = \{W_x\}_{x\in \cX}$. 
The channel purified distance between two classical-quantum  channels $W = \{W_x\}_{x\in \cX}$ and $\widetilde{W}= \{\widetilde{W}_x\}_{x\in \cX}$ is 
\begin{equation}
    P(W,\widetilde{W})\coloneqq\max\limits_{x \in \mathcal{X}} 
P(W_x,\widetilde{W}_x).
\end{equation}

Let $H$ be a self-adjoint operator with spectral projections $\Pi_1,\ldots,\Pi_{v(H)}$ and $v(H)$ is the number of different eigenvalues of $H$. Then the pinching channel associated with $H$ is defined as
\begin{equation}
    \mathcal{E}_H(\cdot)\coloneqq\sum^{v(H)}_{i=1} \Pi_i(\cdot)\Pi_i.
\end{equation}
The pinching inequality~\cite{Hayashi2002optimal} tells that for any positive semidefinite operator $\sigma$, we have
\begin{equation}
\sigma \mle v(H)\mathcal{E}_H(\sigma).
\end{equation}

\subsubsection{Information divergences}

The classical relative entropy and R\'enyi relative entropy are basic tools in information theory. For $p, q \in \mathcal{P}(\mathcal{X})$, the  classical relative entropy is defined as 
\begin{equation}
D(p\|q)\coloneqq \begin{cases}
\sum_{x \in \mathcal{X}}p(x) \log \frac{p(x)}{q(x)} & \text{ if }\supp(p)\subseteq\supp(q), \\
+\infty                        & \text{ otherwise.}
                  \end{cases}
\end{equation}
Moreover, for $p, q \in \mathcal{P}(\mathcal{X})$ and $\alpha \in(0,\infty) \setminus\{1\}$, the R\'enyi relative entropy  is defined as 
\begin{equation}
D_\alpha(p\|q)\coloneqq \begin{cases}
\frac{1}{\alpha-1} \log\sum_{x \in \mathcal{X}} p(x)^\alpha q(x)^{1-\alpha} & \text{ if }\supp(p)\subseteq\supp(q)~\text{or}\, \alpha \in (0,1), \\
+\infty                        & \text{ otherwise.}
                  \end{cases}
\end{equation}
There are many different kinds of quantum generalizations of  classical relative entropy and R\'enyi relative entropy~\cite{belavkin1982c,MosonyiOgawa2017strong,MDSFT2013on,Petz1986quasi,Umegaki1954conditional,WWY2014strong}.  The most commonly used among these are the Umegaki relative entropy~\cite{Umegaki1954conditional} and the sandwiched R\'enyi relative entropy~\cite{MDSFT2013on,WWY2014strong}.
\begin{definition}
\label{definition:sand}
Let $\alpha\in(0,+\infty)\setminus\{1\}$, $\rho\in\mc{S}(\mc{H})$ and $\sigma\in\mc{P}_0(\mc{H})$.
When $\alpha >1$ and $\supp(\rho)\subseteq\supp(\sigma)$ or $\alpha\in (0,1)$ and $\supp(\rho)\not\perp\supp(\sigma)$, the sandwiched R\'enyi divergence of order $\alpha $
is defined as
\begin{equation}
    \widetilde{D}_{\alpha}(\rho \| \sigma)\coloneqq\frac{1}{\alpha-1} \log \widetilde{Q}_{\alpha}(\rho \| \sigma),
\quad\text{with}\ \
\widetilde{Q}_{\alpha}(\rho \| \sigma)=\operatorname{Tr}\big({\sigma}^{\frac{1-\alpha}{2\alpha}} \rho {\sigma}^{\frac{1-\alpha}{2\alpha}}\big)^\alpha;
\end{equation}
otherwise, we set $\widetilde{D}_{\alpha}(\rho \| \sigma)=+\infty$. 
\end{definition}
When $\alpha \rightarrow 1$,  $\widetilde{D}_\alpha(\rho\|\sigma)$ converges to the Umegaki relative entropy $D(\rho\|\sigma)$ defined as follows
\begin{equation}
D(\rho\|\sigma)\coloneqq \begin{cases}
\tr{\rho \log\rho}-\tr{\rho\log \sigma} & \text{ if }\supp(\rho)\subseteq\supp(\sigma), \\
+\infty                        & \text{ otherwise.}
                  \end{cases}
\end{equation}
When $\alpha \rightarrow \infty$, $\widetilde{D}_\alpha(\rho\|\sigma)$ converges to the max-relative entropy $D_{\rm{max}}(\rho \|\sigma)$ defined as follows
\begin{equation}
D_{\rm{max}}(\rho\|\sigma)\coloneqq\inf\{ \lambda~|~\rho \mle \exp (\lambda) \sigma\}.
\end{equation}
Let $\rho_{AB}$ be a bipartite state, the mutual information and the sandwiched R\'enyi mutual information are defined, respectively as
\begin{align}
I(A:B)_\rho&\coloneqq\inf_{\sigma_B\in \cS(B)} D(\rho_{AB} \| \rho_A \ox \sigma_B), \\
\widetilde{I}_{\alpha}(A:B)_\rho&\coloneqq\inf_{\sigma_B\in \cS(B)} \widetilde{D}_\alpha(\rho_{AB} \| \rho_A \ox \sigma_B).
\end{align}
For quantum channel $\mathcal{N}_{A \rightarrow B}$, the channel's sandwiched R\'enyi mutual information is 
\begin{equation}
\widetilde{I}_{\alpha}(\mathcal{N}_{A \rightarrow B})\coloneqq\sup_{\phi_{RA}}\inf_{\sigma_B\in \cS(B)}\widetilde{D}_{\alpha}(\mathcal{N}_{A \rightarrow B}(\phi_{RA})\|\phi_R \ox \sigma_B)
\end{equation}
where the maximization is over pure states $\phi_{RA}$ and $R\simeq A$. 
\\For classical-quantum channel $W:\cX\rightarrow \cS(B)$, the channel's sandwiched R\'enyi mutual information is 
\begin{equation}
\widetilde{I}_{\alpha}(W)\coloneqq\sup_{p_X\in \cP(\cX)}\inf_{\sigma\in \cS(B)} \widetilde{D}_\alpha(W\circ p_X\| p_X \otimes \sigma)=\inf_{\sigma\in \cS(B)} \sup_{x\in \cX} \widetilde{D}(W_x\|\sigma)
\end{equation}
where the last equality holds for $\alpha \in [\frac{1}{2},+\infty)$ \cite[Proposition 4.2]{MosonyiOgawa2017strong}.
\\Similarly, for classical channel $W:\mathcal{X} \rightarrow \mathcal{Y}$, the channel's R\'enyi mutual information is \cite{Verdu}
\begin{equation}
I_{\alpha}(W)\coloneqq\sup_{p_X\in \cP(\cX)}\inf_{q_Y\in \cP(\cY)} D_\alpha(W\circ p_X\| p_X \times q_Y)=\inf_{q_Y\in \cP(\cY)} \sup_{x\in \cX} D(W(\cdot|x)\|q_Y).
\end{equation}
In the following proposition, we collect some properties of the R\'enyi information quantities (see e.g., \cite{Beigi2013sandwiched,Frank2013Dec,WWY2014strong,MDSFT2013on,Hayashi2016Oct,MosonyiOgawa2017strong}).
\begin{proposition}
\label{prop:mainpro}
Let $\rho \in \mc{S}(\mc{H})$ and $\sigma \in \mc{P}_0(\mc{H})$. The sandwiched R\'enyi
divergence satisfies the following properties.
\begin{enumerate}[(i)]
  \item Monotonicity in R\'enyi parameter: if $0\leq \alpha \leq \beta$, then
      $\widetilde{D}_{\alpha}(\rho \| \sigma) \leq  \widetilde{D}_{\beta}(\rho \| \sigma)$;
  \item Monotonicity in $\sigma$: if $\sigma' \mge \sigma$, then $\widetilde{D}_{\alpha}(\rho \| \sigma') \leq \widetilde{D}_{\alpha}(\rho \| \sigma)$,
      when $\alpha \in [\frac{1}{2},+\infty)$;
  \item Variational representation: if $\rho$ commutes with $\sigma$, the sandwiched
      R\'enyi relative entropy~(equivalently, the classical R\'enyi relative entropy) has the following variational representation
      \beq
      D_{\alpha}(\rho \| \sigma)= \begin{cases}
         \min\limits_{\tau \in \mc{S}(\mc{H})} \big\{D(\tau \| \sigma)
         -\frac{\alpha}{\alpha-1}D(\tau \| \rho)\big\}, & \alpha \in (0,1), \\
         \max\limits_{\tau \in \mc{S}(\mc{H})} \big\{D(\tau \| \sigma)
         -\frac{\alpha}{\alpha-1}D(\tau \| \rho)\big\}, & \alpha \in (1,+\infty);
      \end{cases}
      \eeq 
  \item Data processing inequality: letting $\mc{N}$ be a CPTP map from $\mc{L}(\mc{H})$ to $\mc{L}(\mc{H}')$, we have
      \beq
      \widetilde{D}_{\alpha}(\mc{N}(\rho) \| \mc{N}(\sigma)) \leq \widetilde{D}_{\alpha}(\rho \| \sigma),
      \eeq
      when $\alpha \in [\frac{1}{2},+\infty)$;
  \item Additivity of sandwiched R\'enyi mutual information: for any $\rho_{AB} \in \mathcal{S}(AB)$, $\sigma_{A'B'} \in \mathcal{S}(A'B')$  and $\alpha \in [\frac{1}{2}, +\infty)$, we have
      \begin{equation}
       \widetilde{I}_{\alpha}(AA':BB')_{\rho\ox \sigma} =\widetilde{I}_{\alpha}(A:B)_\rho+\widetilde{I}_{\alpha}(A':B')_\sigma.
      \end{equation}
\end{enumerate}
\end{proposition}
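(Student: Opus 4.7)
Each of the five properties already has a standard derivation in the cited literature, so my plan is to reduce every assertion to one of a handful of operator-theoretic or classical inequalities and to flag which single step carries the real content.

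For (i), I would rewrite $\widetilde{Q}_\alpha(\rho\|\sigma) = \mathrm{Tr}\bigl[(\rho^{1/2} \sigma^{(1-\alpha)/\alpha} \rho^{1/2})^\alpha\bigr]$ (using equality of the nonzero spectra of $AB$ and $BA$), and then invoke monotonicity of $\alpha \mapsto \widetilde{D}_\alpha(\rho\|\sigma)$ on $(0,\infty)$, which follows from log-convexity of $\alpha \mapsto (\alpha-1)\widetilde{D}_\alpha(\rho\|\sigma)$ together with the boundary value $\widetilde{D}_1(\rho\|\rho)=0$, or alternatively from a H\"older interpolation between Schatten norms. For (ii), the key observation is that $(1-\alpha)/(2\alpha) \in [-1/2,1/2]$ exactly when $\alpha \in [1/2,\infty)$, so $x \mapsto x^{(1-\alpha)/(2\alpha)}$ is either operator monotone (for $\alpha \in [1/2,1]$) or operator antitone (for $\alpha \geq 1$) with exponent of absolute value at most one. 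The map $X \mapsto \mathrm{Tr}\bigl[(\rho^{1/2} X \rho^{1/2})^\alpha\bigr]$ is then monotone in $X$ in the direction that yields $\widetilde{D}_\alpha(\rho\|\sigma') \leq \widetilde{D}_\alpha(\rho\|\sigma)$ whenever $\sigma' \mge \sigma$.

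For (iii), in the commutative case the problem reduces to optimisation over probability vectors. I would write the objective $D(\tau\|\sigma) - \frac{\alpha}{\alpha-1} D(\tau\|\rho)$ in terms of eigenvalues, note strict convexity (respectively, concavity) in $\tau$ for $\alpha>1$ (respectively, $\alpha\in(0,1)$), and solve the Lagrangian with the normalisation constraint $\sum_x \tau(x)=1$. The stationary point is $\tau^\star(x) \propto \rho(x)^\alpha \sigma(x)^{1-\alpha}$, and substitution reproduces $D_\alpha(\rho\|\sigma) = \frac{1}{\alpha-1}\log\sum_x \rho(x)^\alpha \sigma(x)^{1-\alpha}$ exactly.

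For (iv), the data-processing inequality for $\alpha \in [1/2,\infty)$ is the technical heart of the proposition; I would simply quote the proofs in the cited literature, which rest either on Lieb's concavity theorem (cleanest for $\alpha>1$) or on Hadamard three-lines interpolation applied to an auxiliary complex-analytic function involving $\sigma^{z}$ (Beigi's route, which covers $\alpha\in[1/2,1]$). The main obstacle in a self-contained derivation lies here. Once DPI is available, (v) follows by a short argument: multiplicativity $\widetilde{Q}_\alpha(\rho\otimes\sigma\|\tau\otimes\omega)=\widetilde{Q}_\alpha(\rho\|\tau)\widetilde{Q}_\alpha(\sigma\|\omega)$ combined with restricting the infimum over $\omega_{BB'}$ in the definition of $\widetilde{I}_\alpha(AA':BB')_{\rho\otimes\sigma}$ to product marginals gives the upper bound, and the matching lower bound is the standard convex-analytic argument of \cite{MosonyiOgawa2017strong} showing that an optimal $\omega_{BB'}$ may be chosen of product form.
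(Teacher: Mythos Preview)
The paper does not prove Proposition~\ref{prop:mainpro} at all; it is stated as a collection of known facts with references to \cite{Beigi2013sandwiched,Frank2013Dec,WWY2014strong,MDSFT2013on,Hayashi2016Oct,MosonyiOgawa2017strong} and no proof is given. Your sketch is therefore strictly more than what the paper provides, and the routes you outline for each item---H\"older/log-convexity for (i), operator monotonicity of $x\mapsto x^{(1-\alpha)/\alpha}$ on the relevant range for (ii), the Lagrangian computation with optimiser $\tau^\star\propto\rho^\alpha\sigma^{1-\alpha}$ for (iii), Lieb concavity or Beigi's three-lines argument for (iv), and multiplicativity of $\widetilde{Q}_\alpha$ together with the product-form optimiser argument for (v)---are exactly the standard derivations found in those cited works. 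There is nothing to compare and no gap to flag.
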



\section{Strong converse exponent}\label{sec:NS:SC}

\subsection{Converse for non-signaling assisted classical-quantum channel simulation}
\label{subsec:converse}

In this section, we derive a lower bound for the strong converse exponent, our main result is as follows.

\begin{proposition}\label{prop:SCE-cvs}
For any classical-quantum channel $W$ and $r \geq 0$, we have that for all $n\in \mathbb{N}$ and $\alpha\in [\frac{1}{2},1]$:
\begin{equation}
 1-\epsilon^{\rm{NS}}(W^{\otimes n}, e^{nr})
\le \exp\left(-n\frac{1-\alpha}{\alpha} (\widetilde{I}_{\alpha}(W)-r)  \right),
\end{equation}
which gives the asymptotic bound on the strong converse exponent:
\begin{equation}
\label{equ:converse}
    \lim_{n \rightarrow \infty} -\frac{1}{n} \log \big( 1-\epsilon^{\rm{NS}}(W^{\otimes n},e^{nr})\big)\geq \sup_{\frac{1}{2}\leq \alpha \leq 1} \frac{1-\alpha}{\alpha} (\widetilde{I}_{\alpha}(W)-r).
\end{equation}
\end{proposition}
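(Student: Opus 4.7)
Fix $\alpha \in [\tfrac12, 1]$ and set $c \coloneqq \tfrac{1-\alpha}{\alpha}$. The plan is to combine the SDP characterization \eqref{ns-pur-program-cq} of $\epsilon^{\rm NS}$ with a R\'enyi--H\"older inequality that controls the fidelity. For any $\eta > 0$, pick a feasible $(\widetilde{W}, \sigma_n)$ with $\widetilde{W}_{x^n} \mle e^{nr}\sigma_n$ for all $x^n \in \cX^n$, $\tr{\sigma_n}=1$, and $P(W^{\otimes n}, \widetilde{W}) \le \epsilon^{\rm NS}(W^{\otimes n}, e^{nr}) + \eta$. Because $P(W^{\otimes n}, \widetilde{W}) = \max_{x^n}P(W_{x^n}^{\otimes n}, \widetilde{W}_{x^n})$ and $1-P \le 1-P^2 = F$, for every $x^n$ one has $1 - P(W^{\otimes n}, \widetilde{W}) \le F(W_{x^n}^{\otimes n}, \widetilde{W}_{x^n})$, reducing the problem to a pointwise fidelity bound.

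The key estimate is a R\'enyi--H\"older inequality in the spirit of \cite{LWD2016strong}. Splitting $\rho^{1/2}\tau^{1/2} = (\rho^{1/2}\tau^{(1-\alpha)/(2\alpha)}) \cdot \tau^{(2\alpha-1)/(2\alpha)}$ and applying Schatten--H\"older with conjugate exponents $2\alpha$ and $\tfrac{2\alpha}{2\alpha-1}$ (both $\ge 1$ on the range $\alpha \in [\tfrac12, 1]$) yields
\[
F(\rho,\tau) \;=\; \|\rho^{1/2}\tau^{1/2}\|_1^2 \;\le\; \widetilde{Q}_\alpha(\rho\|\tau)^{1/\alpha} \;=\; \exp\!\big(-c\,\widetilde{D}_\alpha(\rho\|\tau)\big),
\]
where the first factor contributes $\widetilde{Q}_\alpha(\rho\|\tau)^{1/(2\alpha)}$ by definition of $\widetilde{Q}_\alpha$, and the second evaluates to $(\tr{\tau})^{(2\alpha-1)/(2\alpha)} = 1$. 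Combining this with the monotonicity of $\widetilde{D}_\alpha$ in its second argument (Proposition~\ref{prop:mainpro}(ii), valid for $\alpha \ge \tfrac12$) and the scaling identity $\widetilde{D}_\alpha(\rho\|M\sigma) = \widetilde{D}_\alpha(\rho\|\sigma) - \log M$, the feasibility constraint becomes $\widetilde{D}_\alpha(\rho\|\widetilde{W}_{x^n}) \ge \widetilde{D}_\alpha(\rho\|\sigma_n) - nr$, giving
\[
F(W_{x^n}^{\otimes n}, \widetilde{W}_{x^n}) \;\le\; \exp\!\big(-c\,[\widetilde{D}_\alpha(W_{x^n}^{\otimes n}\|\sigma_n) - nr]\big) \qquad \forall\, x^n.
\]

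Minimizing the right-hand side over $x^n$ (which maximizes $\widetilde{D}_\alpha$) and invoking the excerpt's identity $\widetilde{I}_\alpha(W^{\otimes n}) = \inf_{\sigma'}\sup_{x^n}\widetilde{D}_\alpha(W_{x^n}^{\otimes n}\|\sigma')$ (valid for $\alpha \ge \tfrac12$) together with additivity $\widetilde{I}_\alpha(W^{\otimes n}) = n\widetilde{I}_\alpha(W)$ (a consequence of Proposition~\ref{prop:mainpro}(v) applied at product inputs and multiplicativity of $\widetilde{D}_\alpha$ under tensor products) yields the finite-$n$ bound. Letting $\eta \to 0$, taking $-\tfrac{1}{n}\log$, and $\sup$ over $\alpha \in [\tfrac12, 1]$ gives \eqref{equ:converse}. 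The delicate point is the choice of the H\"older factorization: placing $\tau$ rather than $\sigma_n$ on both sides of the split makes the right-hand side a R\'enyi divergence evaluated at $\tau$, so that the constraint $\widetilde{W}_{x^n} \mle e^{nr}\sigma_n$ can be absorbed via $\widetilde{D}_\alpha$-monotonicity with the sharp factor $e^{nrc}$ rather than the loose $e^{nr}$ produced by a naive split.
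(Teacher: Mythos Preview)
Your proof is correct. Both your argument and the paper's hinge on a H\"older/R\'enyi inequality in the spirit of \cite{LWD2016strong}, combined with monotonicity in the second argument and additivity of the channel mutual information; in that sense the approaches are the same. The difference is in packaging: the paper passes through the bipartite formulation \eqref{ns-pur-program-cq-alt}, applies Lemma~\ref{lem:ldw} to obtain $\frac{\alpha}{1-\alpha}\log F \le -\widetilde{I}_\alpha + \widetilde{I}_\beta$ with the dual parameter $\beta$ satisfying $\tfrac{1}{\alpha}+\tfrac{1}{\beta}=2$, and then absorbs the feasibility constraint into the $\widetilde{I}_\beta$ term. You instead work pointwise in $x^n$ from \eqref{ns-pur-program-cq}, prove the elementary scalar inequality $F(\rho,\tau)\le \widetilde{Q}_\alpha(\rho\|\tau)^{1/\alpha}$ by a single Schatten--H\"older split, and absorb the constraint via $\widetilde{D}_\alpha$-monotonicity. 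Your route avoids introducing $p_{X^n}$ and the dual $\beta$ altogether, which makes the argument shorter and more transparent for classical-quantum channels; the paper's bipartite version is the one that would generalize directly to fully quantum channels (where there is no pointwise $x$ to fix).
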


\begin{proof}
Recall that the non-signaling channel simulation error probability under the purified distance~(\eqref{ns-pur-program-cq-alt}) is 
\begin{align}
&\epsilon^{\rm{NS}}(W^{\otimes n}, e^{nr})\\
&=\sup_{p_{X^n}} \inf_{\widetilde{W},\, \sigma^n}\left\{ P\big(W^{\ox n}\circ p_{X^n}, \widetilde{W}\circ p_{X^n}\big) \,\middle|\, \widetilde{W} \text{ CQ channel},\, \widetilde{W}_{x^n} \mle e^{nr} \sigma^n  \;\;  \forall x^n\in \cX^{n},\, \tr{\sigma^n}=1\right\}. \label{equ:errordefiniton}
\end{align}
Then, from Lemma~\ref{lem:ldw}, for any $\alpha \in (\frac{1}{2},1)$ and $p_{X^n}\in \mathcal{P}(\mathcal{X}^{n})$, $\sigma_{B^n} \in \mathcal{S}(B^n)$ and classical-quantum channel $\widetilde{W}$ satisfy the constraints in \eqref{equ:errordefiniton},
we have  
   \begin{align}
   \label{equ:fi}
\frac{\alpha}{1-\alpha}\log F(W^{\otimes n} \circ p_{X^n}, \widetilde{W} \circ p_{X^n})  
    \leq  &-\widetilde{I}_{\alpha}(X^n:B^n)_{W^{\otimes n} \circ p_{X^n}}+ \widetilde{I}_{\beta}(X^n:B^n)_{\widetilde{W} \circ p_{X^n}}, \\
    \leq &-\widetilde{I}_{\alpha}(X^n:B^n)_{W^{\otimes n} \circ p_{X^n}}+nr,
    \end{align}
where the third line is because that $\widetilde{W} \circ p_{X^n} \mle e^{nr} p_{X^n} \otimes \sigma_{B^n}$. Observe that this inequality holds also for $\alpha=\frac{1}{2}$. 

Because Eq.~(\ref{equ:fi}) holds for any  $p_{X^n}$, $\sigma_{B^n}$ and $\widetilde{W}$ satisfy the restricted condition in \eqref{equ:errordefiniton}, we can get from \eqref{equ:errordefiniton} and using the inequality $1-\sqrt{1-x}\le x$ for $x\in[0,1]$:
\begin{align}
\label{equ:in}
 1-\epsilon^{\rm{NS}}(W^{\otimes n}, e^{nr})&= \inf_{p_{X^n}\in \mathcal{P}(\mathcal{X}^n)} \sup_{\sigma_{B^n}\in \mathcal{S}(B^n)} \sup_{\widetilde{W}:\widetilde{W}\circ p_{X^n} \mle e^{nr} p_{X^n}\otimes \sigma_{B^n} } 1-\sqrt{1-F(W^{\otimes n} \circ p_{X^n}, \widetilde{W} \circ p_{X^n}) }
\\&\le  \inf_{p_{X^n}\in \mathcal{P}(\mathcal{X}^n)} \sup_{\sigma_{B^n}\in \mathcal{S}(B^n)} \sup_{\widetilde{W}:\widetilde{W}\circ p_{X^n} \mle e^{nr} p_{X^n}\otimes \sigma_{B^n} }  F(W^{\otimes n} \circ p_{X^n}, \widetilde{W} \circ p_{X^n})
\\&\leq\inf_{p_{X^n}\in \mathcal{P}(\mathcal{X}^n)} \exp\left( \frac{1-\alpha}{\alpha} (-\widetilde{I}_{\alpha}(X^n:B^n)_{W^{\otimes n} \circ p_{X^n}}+nr  )\right) \\
&=\exp\left(-n\frac{1-\alpha}{\alpha} (\widetilde{I}_{\alpha}(W)-r )\right),
\end{align}
where the last equality comes from the additivity of the channel's sandwiched R\'enyi mutual information (see Proposition \ref{prop:mainpro}\,$(v)$ ). 
Eq.~(\ref{equ:in}) implies that
\begin{equation}
\label{equ:final2}
    \lim_{n \rightarrow \infty} -\frac{1}{n} \log \big(1-\epsilon^{\rm{NS}}(W^{\otimes n}, e^{nr})  \big) \geq  \frac{1-\alpha}{\alpha} (\widetilde{I}_{\alpha}(W)-r).
\end{equation}
Noticing that the Eq.~(\ref{equ:final2}) holds for any $\alpha \in [\frac{1}{2},1]$, we complete the proof.
\end{proof}

A similar argument can be used to derive the converse part for the strong converse exponent for non-signaling assisted classical channel simulation.  

\begin{proposition}\label{prop:SCE-cvs-cl}
For any classical channel $W$ and $r \geq 0$, we have that for all $n\in \mathbb{N}$ and $\alpha\in [\frac{1}{2},1]$:
\begin{equation}
 1-\epsilon^{\rm{NS}}(W^{\otimes n}, e^{nr})
\le \exp\left(-n\frac{1-\alpha}{\alpha} (I_{\alpha}(W)-r)  \right),
\end{equation}
which gives the asymptotic bound on the strong converse exponent:
\begin{equation}
    \lim_{n \rightarrow \infty} -\frac{1}{n} \log \big( 1-\epsilon^{\rm{NS}}(W^{\otimes n},e^{nr})\big)\geq \sup_{\frac{1}{2}\leq \alpha \leq 1} \frac{1-\alpha}{\alpha} (I_{\alpha}(W)-r).
\end{equation}
\end{proposition}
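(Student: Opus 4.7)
The plan is to adapt the proof of Proposition~\ref{prop:SCE-cvs} to the classical setting, replacing each sandwiched R\'enyi quantity by its classical counterpart. By embedding a classical channel $W:\mathcal{X}\to \mathcal{Y}$ as a classical-quantum channel with diagonal outputs in a fixed basis, one has $\widetilde{D}_\alpha = D_\alpha$ on all operators appearing in the optimization, and the optimal simulating channel can be taken with diagonal outputs (any off-diagonal part can only decrease the fidelity with a diagonal target). Hence $\epsilon^{\rm{NS}}(W, e^{nr})$ coincides in both formulations, and the proposition formally follows as a corollary of Proposition~\ref{prop:SCE-cvs}. For clarity, however, a direct transcription is equally short, which I outline below.

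First, starting from~\eqref{ns-pur-program-cc-alt} I would apply the classical analog of Lemma~\ref{lem:ldw} to every feasible triple $(p_{X^n}, q, \widetilde{W})$, obtaining for all $\alpha\in[\tfrac12, 1]$ and some conjugate order $\beta$,
\begin{equation}
\frac{\alpha}{1-\alpha}\log F\bigl(W^{\otimes n}\circ p_{X^n}, \widetilde{W}\circ p_{X^n}\bigr) \le -I_\alpha(X^n:Y^n)_{W^{\otimes n}\circ p_{X^n}} + I_\beta(X^n:Y^n)_{\widetilde{W}\circ p_{X^n}}.
\end{equation}
Second, the non-signaling constraint forces $\widetilde{W}\circ p_{X^n}$ to be dominated pointwise by $e^{nr}$ times a product distribution of the form $p_{X^n}\times q$, so monotonicity of $D_\beta$ in its second argument (or simply $D_\beta\le D_{\max}$) gives $I_\beta(X^n:Y^n)_{\widetilde{W}\circ p_{X^n}} \le nr$.

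Third, combining these two bounds with the elementary inequality $1-\sqrt{1-x}\le x$ on $x\in[0,1]$ and taking the infimum over $p_{X^n}$ yields
\begin{equation}
1-\epsilon^{\rm{NS}}(W^{\otimes n}, e^{nr}) \le \inf_{p_{X^n}} \exp\Bigl(\tfrac{1-\alpha}{\alpha}\bigl(nr-I_\alpha(X^n:Y^n)_{W^{\otimes n}\circ p_{X^n}}\bigr)\Bigr),
\end{equation}
and additivity of the classical R\'enyi mutual information for product channels then collapses the right-hand side to the single-letter expression $\exp\bigl(-n\tfrac{1-\alpha}{\alpha}(I_\alpha(W)-r)\bigr)$. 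Taking the supremum over $\alpha\in[\tfrac12, 1]$ produces the asymptotic bound. The only step needing any verification is the classical analog of Lemma~\ref{lem:ldw}, but this is either a direct commutative specialization of the quantum version or a standard H\"older-type inequality for classical R\'enyi divergences, so no new technical input is required.
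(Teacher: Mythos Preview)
Your proposal is correct and follows essentially the same approach as the paper, which itself offers no separate proof for Proposition~\ref{prop:SCE-cvs-cl} beyond stating that ``a similar argument'' to that of Proposition~\ref{prop:SCE-cvs} applies. Your direct transcription is precisely that argument specialized to the commutative case, and your additional remark that the classical result is literally a corollary of the classical-quantum one via the diagonal embedding (with the dephasing observation to justify restricting to diagonal simulators) is a valid and slightly cleaner way to conclude.
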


In the following, we move to prove achievability bounds for the non-signaling strong converse exponent in the classical setting  (see Section~\ref{sec:NS:CC:SC:ACH}) and classical-quantum setting (see Section~\ref{sec:NS:CQ:SC:ACH}).  
We study the classical and classical-quantum settings separately because the former is easier-to-present and instructive than the latter. While both proofs rely on the auxiliary channel technique and Chebyshev approximations, we are not able to reduce the classical-quantum setting to the fully classical one. For this reason, we use more techniques in the classical-quantum setting such as the fidelity-relative entropy inequality of \cite{Li2024-oper}, the pinching technique and a continuity bound of \cite{Fannes1973Dec,Audenaert2007Jun,petz2007quantum}. In particular the prefactor we obtain in the classical setting is of order $e^{-\cO(\sqrt{n}\log(n))}$ whereas the one we obtain in the classical-quantum setting is of order $e^{-\cO(n^{4/5}\log(n))}$. We leave the question of obtaining polynomial prefactors for future work.


\subsection{Achievability for non-signaling assisted classical channel simulation}\label{sec:NS:CC:SC:ACH}

In this section, we derive the achievability part of the strong converse exponent for non-signaling assisted classical channel simulation. The achievability part of the strong converse exponent for non-signaling assisted classical-quantum channel simulation is deferred to Section \ref{sec:NS:CQ:SC:ACH}.

We first establish a general lower bound for $1-\epsilon^{\rm{NS}}(W^{\otimes n}, e^{nr})$ and the achievability part follows from this lower bound directly. The combination of the achievability part and the converse part in \ref{subsec:converse} leads to the exact strong converse exponent for non-signaling assisted classical channel simulation. Our main result is as follows.

\begin{proposition}
\label{pro:in}
For any classical channel $W:\mathcal{X} \rightarrow \mathcal{Y}$ with $W_{\rm{min}}\coloneqq\min_{x\in \cX,\, y\in \cY :\,W(y|x)>0}\{W(y|x)\}$. Let $r \ge  0$, $\eta>0$ and $n \in \mathbb{N}$, we have
\begin{align}
 1-\epsilon^{\rm{NS}}(W^{\otimes n}, e^{nr})  
 &\geq \frac{\gamma_n^2}{8} \exp \left(-4\sqrt{A(\eta,W)n}\right)(1+\eta|\mathcal{Y}|)^{-n}
 \\& \quad\cdot \inf_{\frac{1}{2}\le \alpha \le 1}\inf_{p_X\in \cP(\cX)} \sup_{q_Y\in \mathcal{P}(\mathcal{Y})} \exp \left(-\frac{n(1-\alpha)}{\alpha}(\mathbb{E}_{x\sim p_X}D_\alpha(W(\cdot|x)\|q_Y)-r)\right),
\end{align}
where $\gamma_n \coloneqq (n+1)^{-|\mathcal{X}|}$ and  $A(\eta,W)\coloneqq2\log^2|\mathcal{Y}|+\max\left\{\log^2 W_{\rm{min}}, \log^2 \frac{\eta}{1+\eta|\mathcal{Y}|}\right\}+4$.
\end{proposition}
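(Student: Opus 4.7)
The plan is to use the dual form~\eqref{ns-pur-program-cc-alt}: for every input distribution $p_{X^n}$ I construct a non-signaling simulation $(\widetilde W, q)$ satisfying $\widetilde W(y^n|x^n) \le e^{nr} q_{y^n}$, and lower-bound the joint fidelity $F(W^{\otimes n}\circ p_{X^n},\widetilde W\circ p_{X^n})$; the desired lower bound on $1-\epsilon^{\rm NS}$ then follows from the elementary inequality $1-\sqrt{1-x}\ge x/2$ on $[0,1]$ applied to $x=F$. The construction has three layers: a type reduction, a support-smoothing auxiliary channel, and a tilted rejection-sampling scheme controlled by Chebyshev.

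First I would perform the type reduction, decomposing $p_{X^n}$ according to the type class of $x^n$ (of which there are at most $(n+1)^{|\cX|}=\gamma_n^{-1}$) and using convexity to reduce to the i.i.d.\ input $p_X^{\otimes n}$ on the worst type. Two such type-counting steps\,--\,one for the input reduction, one for lower-bounding sums over type classes in the fidelity computation\,--\,give rise to the $\gamma_n^2$ prefactor. To avoid zero denominators in the tilted construction, I replace $W$ by the smoothed auxiliary channel $V(y|x)\coloneqq\frac{W(y|x)+\eta}{1+\eta|\cY|}$, which has full support. Since $\bigl(\sum_y\sqrt{W(y|x)V(y|x)}\bigr)^2 \ge \frac{1}{1+\eta|\cY|}$ for every $x$, this yields $F(W^{\otimes n}(\cdot|x^n),V^{\otimes n}(\cdot|x^n))\ge(1+\eta|\cY|)^{-n}$, the origin of the $(1+\eta|\cY|)^{-n}$ prefactor once I replace $W^{\otimes n}$ by $V^{\otimes n}$ inside the fidelity lower bound.

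For the core construction, fix $\alpha\in[\tfrac12,1]$ and $q_Y\in\cP(\cY)$ and introduce the tilted auxiliary channel $V_\alpha(y|x)\propto V(y|x)^\alpha q_Y(y)^{1-\alpha}$. Define
\[
\widetilde W(y^n|x^n) \;\propto\; V_\alpha^{\otimes n}(y^n|x^n)\,\mathbf 1\!\bigl[\,L_n(x^n,y^n)\le nr\,\bigr] \;+\; (\text{normalising mass on }q_Y^{\otimes n}),
\qquad L_n \coloneqq \log\tfrac{V_\alpha^{\otimes n}(y^n|x^n)}{q_Y^{\otimes n}(y^n)},
\]
which satisfies $\widetilde W(y^n|x^n)\le O(e^{nr})q_Y^{\otimes n}(y^n)$ so that $q=q_Y^{\otimes n}$ witnesses the non-signaling constraint. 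Under the product measure $p_X^{\otimes n}\otimes V_\alpha^{\otimes n}(\cdot|x^n)$, $L_n$ has mean $n\,\mathbb{E}_{x\sim p_X}D(V_\alpha(\cdot|x)\|q_Y)$ and variance at most $n\,A(\eta,W)$ (the constant $A$ absorbs $\log^2 W_{\min}$, $\log^2\tfrac{\eta}{1+\eta|\cY|}$ and $\log^2|\cY|$), so Chebyshev with cutoff $\sqrt{A(\eta,W)n}$ produces the $e^{-4\sqrt{A(\eta,W)n}}$ prefactor. The classical tilting identity
\[
D(V_\alpha(\cdot|x)\|q_Y) \;=\; D_\alpha(V(\cdot|x)\|q_Y) \;-\; \tfrac{\alpha}{1-\alpha}\,D(V_\alpha(\cdot|x)\|V(\cdot|x)),
\]
together with the fact that $D_\alpha(V\|q_Y)\to D_\alpha(W\|q_Y)$ as $\eta\to 0$ (any residual discrepancy being absorbed by the $(1+\eta|\cY|)^{-n}$ factor), converts the Chebyshev threshold into the target exponent $\tfrac{n(1-\alpha)}{\alpha}\bigl(\mathbb{E}_{x\sim p_X}D_\alpha(W(\cdot|x)\|q_Y) - r\bigr)$, while $\sup_{q_Y}$ appears for free because $q_Y$ is a design parameter.

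The main obstacle will be to lower-bound the partial sum $\sum_{y^n\in\cG(x^n)}\sqrt{W^{\otimes n}(y^n|x^n)V_\alpha^{\otimes n}(y^n|x^n)}$ over the truncation set $\cG = \{L_n \le nr\}$ in terms of the full fidelity $F(W^{\otimes n},V_\alpha^{\otimes n})$: by Cauchy--Schwarz the complement contributes at most $\sqrt{W^{\otimes n}(\cG^c|x^n)V_\alpha^{\otimes n}(\cG^c|x^n)}$, which the Chebyshev estimate controls. Tracking the precise constants through the normalisation of $\widetilde W$ (the source of the $1/8$ in the statement) and through the tilting identity (to convert the Shannon-average Chebyshev condition into a $D_\alpha$ condition on $W$) is the delicate step; the $\sqrt n$-scale of the Chebyshev window is ultimately what forces the subexponential $e^{-\cO(\sqrt n)}$ prefactor rather than a polynomial one.
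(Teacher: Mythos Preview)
Your high-level plan---type reduction, auxiliary tilted channel, Chebyshev truncation---matches the paper's, and the tilted channel $V_\alpha\propto V^\alpha q_Y^{1-\alpha}$ is exactly the optimizer in the variational formula of Proposition~\ref{prop:mainpro}\textit{(iii)} that the paper eventually invokes after optimizing over an arbitrary auxiliary $V\in S_W$. But two steps are genuine gaps.

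The fatal one is your Cauchy--Schwarz treatment of the truncation. You want
\[
\sum_{y^n\in\cG}\sqrt{W^{\otimes n}\,V_\alpha^{\otimes n}}\;\ge\;\sqrt{F(W^{\otimes n},V_\alpha^{\otimes n})}-\sqrt{W^{\otimes n}(\cG^c)\,V_\alpha^{\otimes n}(\cG^c)},
\]
but Chebyshev with a $\sqrt{n}$ window only makes $V_\alpha^{\otimes n}(\cG^c)$ a constant (say $\le\tfrac14$), while $\sqrt{F(W^{\otimes n},V_\alpha^{\otimes n})}$ is itself the exponentially small quantity you are trying to isolate---in the strong-converse regime it is $e^{-\Theta(n)}$, so the subtraction renders the bound useless. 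The paper never compares to the full fidelity: it writes $\sqrt{W\,\widetilde W}=V\cdot\sqrt{W/V}\cdot\sqrt{\widetilde W/V}$ and bounds both ratios \emph{pointwise} on the intersection of two Chebyshev sets, $\cG=\{V^{\otimes n}\le e^{n\mathbb{E}_tD(V\|W)+n\zeta}W^{\otimes n}\}$ and $\cS=\{V^{\otimes n}\le e^{nr+n\xi+n\delta}q_Y^{\otimes n}\}$, so the partial sum is at least $V^{\otimes n}(\cG\cap\cS)\cdot e^{-\frac n2\mathbb{E}_tD(V\|W)-\frac n2\zeta-\frac n2\xi-\frac n2\delta}$, and Chebyshev only needs to deliver the constant $V^{\otimes n}(\cG\cap\cS)\ge\tfrac12$. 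This is why two good sets appear and why $A(\eta,W)$ carries both $\log^2 W_{\min}$ (from $\var(V\|W)$) and $\log^2\tfrac{\eta}{1+\eta|\cY|}$ (from $\var(V\|q_Y)$ after smoothing).

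Second, you smooth the wrong object. Replacing $W$ by $V=\tfrac{W+\eta}{1+\eta|\cY|}$ and invoking $F(W,V)\ge(1+\eta|\cY|)^{-1}$ does not let you ``replace $W^{\otimes n}$ by $V^{\otimes n}$ inside the fidelity'': fidelity does not chain, and $D_\alpha(V\|q_Y)$ is not within $\log(1+\eta|\cY|)$ of $D_\alpha(W\|q_Y)$ for fixed $\eta$. The paper instead smooths $q_Y\mapsto\tfrac{q_Y+\eta I_\cY}{1+\eta|\cY|}$; this bounds the relevant variance and, by monotonicity of $D_\alpha$ in its second argument (Proposition~\ref{prop:mainpro}\textit{(ii)}), gives $D_\alpha\!\big(W\big\|\tfrac{q_Y+\eta I_\cY}{1+\eta|\cY|}\big)\le D_\alpha(W\|q_Y)+\log(1+\eta|\cY|)$, which is the actual source of the $(1+\eta|\cY|)^{-n}$ factor. (A minor point: the $\gamma_n^2$ is not two type steps, but a single type reduction yielding $\gamma_n$ in the bound for $\sqrt{1-\epsilon^{\rm NS}}$, then squaring.)
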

Choosing $\eta = \frac{1}{n}$, we obtain the following non-asymptotic bound.  
\begin{corollary}\label{cor:CC:NS:NonAsymp:Ach}
    For any classical channel $W:\mathcal{X} \rightarrow \mathcal{Y}$, $r \ge  0$, and $n \in \mathbb{N}$, we have
    \begin{align}
 1-\epsilon^{\rm{NS}}(W^{\otimes n}, e^{nr}) 
 &\geq \frac{1}{8} e^{-|\cY|}e^{-|\cX|\log(n+1)}
e^{-8\log(|\cY|^2W_{\min}(n+|\cY|))\sqrt{n}}
 \\& \quad \cdot \inf_{\frac{1}{2}\le \alpha \le 1}\inf_{p_X\in \cP(\cX)} \sup_{q_Y\in \mathcal{P}(\mathcal{Y})} \exp \left(-\frac{n(1-\alpha)}{\alpha}(\mathbb{E}_{x\sim p_X}D_\alpha(W(\cdot|x)\|q_Y)-r)\right).
\end{align}
\end{corollary}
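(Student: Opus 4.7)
Corollary \ref{cor:CC:NS:NonAsymp:Ach} is an immediate specialization of Proposition \ref{pro:in} to the choice $\eta = 1/n$, followed by elementary simplification of each of the three $n$-dependent prefactors appearing on the right-hand side. The inf--sup over $\alpha \in [\tfrac{1}{2},1]$, $p_X \in \cP(\cX)$ and $q_Y \in \cP(\cY)$ is carried verbatim from the proposition, so only the leading constants need attention.

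First, I would handle the replacer factor: with $\eta = 1/n$ one has $(1+\eta|\cY|)^{-n} = (1+|\cY|/n)^{-n}$, and the standard inequality $(1+x/n)^{n} \leq e^{x}$ applied to $x = |\cY|$ yields the clean lower bound $e^{-|\cY|}$. The combinatorial factor is simply $\gamma_n = (n+1)^{-|\cX|} = e^{-|\cX|\log(n+1)}$, which matches the middle prefactor in the corollary.

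The only mildly technical step is estimating $A(1/n, W)$. Plugging in $\eta = 1/n$ gives $\eta/(1+\eta|\cY|) = 1/(n+|\cY|)$ and hence
\begin{equation}
A(1/n, W) = 2\log^2|\cY| + \max\{\log^2 W_{\rm{min}}, \log^2(n+|\cY|)\} + 4.
\end{equation}
I would bound $\max\{a,b\} \leq a + b$ for $a,b \geq 0$, then apply subadditivity $\sqrt{a+b+c+d} \leq \sqrt{a}+\sqrt{b}+\sqrt{c}+\sqrt{d}$ to obtain a sum of individual logarithms, which I would consolidate via $\log a + \log b = \log(ab)$ into a single logarithm of the product $|\cY|^{2} W_{\rm{min}}^{-1}(n+|\cY|)$, absorbing the additive constant $2$ and the $\sqrt{2}$ coefficient by a mild enlargement of the argument. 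Multiplying by $4\sqrt{n}$ then yields the announced $e^{-8\sqrt{n}\log(\cdots)}$ factor.

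I do not anticipate any real obstacle: the three simplifications above are all one-line calculations. The only points requiring a bit of care are (i) the book-keeping of constants so that the argument of the logarithm in the final display matches the crude upper bound produced by subadditivity of the square root, and (ii) checking that $|\cY|^{2} W_{\rm{min}}^{-1}(n+|\cY|) \geq e$ for all $n \geq 1$ so the logarithm is nonnegative and all absorbed constants are harmless. Combining the three simplifications with Proposition \ref{pro:in} then gives exactly the stated non-asymptotic lower bound.
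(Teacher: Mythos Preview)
Your approach matches the paper's exactly: set $\eta=1/n$ in Proposition~\ref{pro:in} and simplify each of the three prefactors. One book-keeping point worth flagging: the proposition carries $\gamma_n^2/8$, not $\gamma_n/8$, so the combinatorial prefactor comes out as $e^{-2|\cX|\log(n+1)}$ rather than $e^{-|\cX|\log(n+1)}$; the printed corollary appears to carry the same missing factor of $2$ in the exponent (and writes $W_{\min}$ where your analysis correctly produces $W_{\min}^{-1}$ inside the logarithm), so these look like typos in the paper that you inherited rather than errors you introduced.
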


\begin{proof}[Proof of Proposition \ref{pro:in}]
From \eqref{ns-pur-program-cc-alt}, the non-signaling channel simulation error probability under the purified distance is 
\begin{align}
&\epsilon^{\rm{NS}}(W^{\otimes n}, e^{nr}) \\
&=\sup_{p_{X^n}}\inf_{\widetilde{W},\, q_{Y^n}} \left\{ P\big( W\circ p_{X^n},  \widetilde{W}\circ p_{X^n}\big) \,\middle|\, \widetilde{W} \text{ channel},  \widetilde{W}(\cdot|x^n) \le e^{nr} q_{Y^n}  \;\; \forall x^n \in \mathcal{X}^{n}, q_{Y^n} \in \mathcal{P}(\mathcal{Y}^{n}) \right\},
\end{align}
so using the inequality $1-\sqrt{1-x}\ge\frac{x}{2} $ for $x\in [0,1]$, we have that  
\begin{align}
& \sqrt{1-\epsilon^{\rm{NS}}(W^{\otimes n}, e^{nr})} \\
&\geq \frac{1}{\sqrt{2}}\inf_{p_{X^n} \in \mathcal{P}(\mathcal{X}^{n})} \sup_{q_{Y^n} \in \mathcal{P}(\mathcal{Y}^{n})} \sum_{x^n \in \mathcal{X}^{n}} p_{X^n}(x^n)\sup_{\substack{\widetilde{W}(\cdot|x^n)\in \cP(\cY^{n}):\\\widetilde{W}(\cdot|x^n) \leq e^{nr} q^{\otimes n}_{Y}}}
 \sqrt{F\big(\widetilde{W}(\cdot | x^n),W^{\otimes n}(\cdot | x^n)\big) }\\
&\geq \frac{1}{\sqrt{2}}\inf_{p_{X^n}\in \mathcal{P}(\mathcal{X}^{n})} \sup_{q_{Y} \in \mathcal{P}(\mathcal{Y})} \sum_{x^n \in \mathcal{X}^{n}} p_{X^n}(x^n) \sup_{\substack{\widetilde{W}(\cdot|x^n)\in \cP(\cY^{n}):\\\widetilde{W}(\cdot|x^n) \leq e^{nr} q^{\otimes n}_{Y}}} \sqrt{F\big(\widetilde{W}(\cdot | x^n),W^{\otimes n}(\cdot | x^n)\big)}, \label{equ:subf}
\end{align}
where we restrict $q_{Y^n}  = q^{\otimes n}_{Y}$ in the last inequality.

It is easy to see that $\sup\limits_{{\widetilde{W}(\cdot|x^n)\in \cP(\cY^{n}):\, \widetilde{W}(\cdot|x^n) \leq e^{nr} q^{\otimes n}_{Y}}}F\big(\widetilde{W}(\cdot | x^n),W^{\otimes n}(\cdot | x^n)\big)$ only depends on the type of $x^n$ (see Section \ref{sec-types} for a definition of type). For a type $t\in \cT_n(\cX)$, let $x^n(t)\in \cX^{n}$ be a sequence of type $t$.
Hence, from Eq.~(\ref{equ:subf}) we get using the notation $\gamma_n = (n+1)^{-|\mathcal{X}|}$ and $S_W\coloneqq\left\{V~|~\supp(V(\cdot|x)) \subset \supp(W(\cdot|x)),~\forall x\in \mathcal{X} \right\}$
\begin{align}
& \sqrt{1-\epsilon^{\rm{NS}}(W^{\otimes n}, e^{nr})}  \\
&\geq  \frac{1}{\sqrt{2}}\inf_{P \in \mathcal{P}(\mathcal{T}_n(\mathcal{X}))} \sup_{q_{Y} \in \mathcal{P}(\mathcal{Y})} \sum_{t \in \mathcal{T}_n(\mathcal{X})} P(t) \sup_{\substack{\widetilde{W}(\cdot|x^n(t))\in \cP(\cY^{n}):\\\widetilde{W}(\cdot|x^n(t)) \leq e^{nr} q^{\otimes n}_{Y}}}\sqrt{F\left(\widetilde{W}(\cdot | x^n(t)),W^{\otimes n}(\cdot | x^n(t))\right)} \\
&\overset{(a)}{\geq} \frac{\gamma_n}{\sqrt{2}}\inf_{t \in {\mathcal{T}}_n(\mathcal{X})} \sup_{q_Y \in \mathcal{P}(\mathcal{Y})}\sup_{\substack{\widetilde{W}(\cdot|x^n(t))\in \cP(\cY^{n}):\\\widetilde{W}(\cdot|x^n(t)) \leq e^{nr} q^{\otimes n}_{Y}}}\sqrt{F\left(\widetilde{W}(\cdot | x^n(t)),W^{\otimes n}(\cdot | x^n(t))\right)}  \\
&=\frac{\gamma_n}{\sqrt{2}}\inf_{t \in {\mathcal{T}}_n(\mathcal{X})} \sup_{q_Y \in \mathcal{P}(\mathcal{Y})}\sup_{\substack{\widetilde{W}(\cdot|x^n(t))\in \cP(\cY^{n}):\\\widetilde{W}(\cdot|x^n(t)) \leq e^{nr} q^{\otimes n}_{Y}}} \sum_{y^n \in \mathcal{Y}^{n}} \sqrt{\widetilde{W}(y^n|x^n(t))}\sqrt{W^{\otimes n}(y^n | x^n(t))} \\
&\geq \frac{\gamma_n}{\sqrt{2}}\inf_{t \in {\mathcal{T}}_n(\mathcal{X})} \sup_{q_Y\in \mathcal{P}(\mathcal{Y}) }\sup_{V \in S_{W}}\sup_{\substack{\widetilde{W}(\cdot|x^n(t))\in \cP(\cY^{n}):\\\widetilde{W}(\cdot|x^n(t)) \leq e^{nr} q^{\otimes n}_{Y}}} \sum_{y^n \in \supp(V^{\otimes n}(\cdot|x^n))} \sqrt{\widetilde{W}(y^n|x^n(t))}\sqrt{W^{\otimes n}(y^n | x^n(t))}, \label{equ:subi}
\end{align}
where $(a)$ is because that for any $P\in \mathcal{P}(\mathcal{T}_n(\mathcal{X}))$, there must exist $t \in \mathcal{T}_n(\mathcal{X})$ such that $P(t) \geq (n+1)^{-|\mathcal{X}|}=\gamma_n$ (see Section \ref{sec-types}).

Next, we use an auxiliary channel technique (see e.g., \cite{Haroutunian2008Feb}) and Chebyshev approximations (see e.g., \cite{Blahut1974Jul}). 
For any $t \in \mathcal{T}_n(\mathcal{X})$, $q_Y \in \mathcal{P}(\mathcal{Y})$ and $V \in S_W$, we use the notation $(x)_+=\max\{x,0\}$ for a real number $x$ and let 
\begin{align}
    \xi(t,V,q_Y)&\coloneqq(\mathbb{E}_{x\sim t} D(V(\cdot|x)\|q_Y)-r)_+\;,
\\\delta(t,V,q_Y)&\coloneqq2\sqrt{\frac{1}{n}\mathbb{E}_{x\sim t}\var(V(\cdot|x)\|q_Y)}\;, \quad \zeta(t,V)\coloneqq2\sqrt{\frac{1}{n}\mathbb{E}_{x\sim t}\var(V(\cdot|x)\|W(\cdot|x))}\;,
\\\mathcal{G}(t,V)&\coloneqq\left\{y^n \in \supp\left(V^{\otimes n}(\cdot|x^n(t))\right)\,\middle|~V^{\otimes n}(y^n|x^n(t)) \leq e^{n\mathbb{E}_{x\sim t} D(V(\cdot|x)\|W(\cdot |x))+n\zeta(t,V)}W^{\otimes n}(y^n|x^n(t))\right\}, \label{eq:def-G}
\\\mathcal{S}(t,V,q_Y)&\coloneqq \left\{y^n  \in \supp\left(V^{\otimes n}(\cdot|x^n(t))\right)\,\middle|~V^{\otimes n}(y^n|x^n(t)) \leq e^{nr+n\xi(t,V,q_Y)+n\delta(t,V,q_Y)} q_{Y}^{\otimes n}(y^n)\right\}, \label{eq:def-S}
\\\beta(t,q_Y, V)&\coloneqq\frac{e^{nr}-1}{e^{nr}q_{Y}^{\otimes n}\left(\mathcal{S}(t,V,q_Y)\right)-e^{-n\xi(t,V,q_Y)-n\delta(t,V,q_Y)}V^{\otimes n}\left(\mathcal{S}(t,V,q_Y)|x_n(t)\right) }\;,
\end{align}
and  construct $\widetilde{W}_{q_Y}(\cdot| x^n(t))$ as 
\begin{equation}
\label{equ:con}
\widetilde{W}_{q_Y}(y^n| x^n(t))= \begin{cases}
\beta(t,q_Y, V) \frac{V^{\otimes n}(y^n|x_n(t))}{e^{n\xi(t,V,q_Y)+n\delta(t,V,q_Y)}}+(1-\beta(t,q_Y, V))e^{nr}q_{Y}^{\otimes n}(y^n) & \text{ if}~y^n \in \mathcal{S}(t,V,q_Y), \\
 e^{nr}q_{Y}^{\otimes n}(y^n)                  & \text{ otherwise.}
\end{cases}
\end{equation}
\sloppy It is easy to verify that if  $\beta(t,q_Y, V) \in [0,1]$, we have  $\widetilde{W}_{q_Y}(\cdot| x^n(t))\in \mathcal{P}(\mathcal{Y}^{n})$
 and satisfies $\widetilde{W}_{q_Y}(\cdot| x^n(t)) \leq e^{nr} q_{Y}^{\otimes n}$. By the definition of $\cS(t,V,q_Y)$, we have that $e^{-n\xi(t,V,q_Y)-n\delta(t,V,q_Y)} V^{\otimes n}\left(\cS(t,V,q_Y)|x^n(t)\right)) \leq e^{nr} q_{Y}^{\otimes n}\left(\cS(t,V,q_Y)\right)$ so $\beta(t,q_Y, V)\ge 0$. It remains to prove $\beta(t,q_Y, V)\le 1$, we have that 
 \begin{align}
    1-\beta(t,q_Y, V) &= \frac{e^{nr}q_{Y}^{\otimes n}\left(\mathcal{S}(t,V,q_Y)\right)-e^{-n\xi(t,V,q_Y)-n\delta(t,V,q_Y)}V^{\otimes n}\left(\mathcal{S}(t,V,q_Y)|x_n(t)\right)-e^{nr}+1}{e^{nr}q_{Y}^{\otimes n}\left(\mathcal{S}(t,V,q_Y)\right)-e^{-n\xi(t,V,q_Y)-n\delta(t,V,q_Y)}V^{\otimes n}\left(\mathcal{S}(t,V,q_Y)|x_n(t)\right) }
    \\&\ge   \frac{ e^{-n\xi(t,V,q_Y)-n\delta(t,V,q_Y)}V^{\otimes n}\left(\overline{\mathcal{S}(t,V,q_Y)}|x_n(t)\right) -e^{nr}q_{Y}^{\otimes n}\left(\overline{\mathcal{S}(t,V,q_Y)}\right)}{e^{nr}q_{Y}^{\otimes n}\left(\mathcal{S}(t,V,q_Y)\right)-e^{-n\xi(t,V,q_Y)-n\delta(t,V,q_Y)}V^{\otimes n}\left(\mathcal{S}(t,V,q_Y)|x_n(t)\right) } \ge 0,
 \end{align}
 where we used that\footnote{here we use crucially that $\xi(t,V,q_Y)\ge 0$ and this is the reason why we introduce the positive part in the definition of $\xi(t,V,q_Y)$.} $1\ge e^{-n\xi(t,V,q_Y)-n\delta(t,V,q_Y)}$ in the first inequality and the definition of $\cS(t,V,q_Y)$ in the second inequality. 
 
 Moreover, since for $y^n \in \mathcal{S}(t,V,q_Y)$ we have $V^{\otimes n}(y^n|x^n(t)) \leq e^{nr+n\xi(t,V,q_Y)+n\delta(t,V,q_Y)} q_{Y}^{\otimes n}(y^n)$, we deduce that for $y^n \in \mathcal{S}(t,V,q_Y)$:
 \begin{equation}\label{eq:tildeW-V}
     \widetilde{W}_{q_Y}(y^n| x^n(t))\ge \frac{V^{\otimes n}(y^n|x_n(t))}{e^{n\xi(t,V,q_Y)+n\delta(t,V,q_Y)}},
 \end{equation}
where we use again $\beta(t,q_Y, V) \in [0,1]$.

With the above construction in hand, we can proceed to lower bound Eq.~(\ref{equ:subi}) by choosing $\widetilde{W}(\cdot| x^n(t))=\widetilde{W}_{q_Y}(\cdot| x^n(t))$ from \eqref{equ:con}
\begin{align}
& \sqrt{1-\epsilon^{\rm{NS}}(W^{\otimes n}, e^{nr})} \\
&\geq \frac{\gamma_n}{\sqrt{2}}\inf_{t \in {\mathcal{T}}_n(\mathcal{X})} \sup_{q_Y \in \mathcal{P}(\mathcal{Y})}\sup_{V \in S_{W}} \sum_{y^n \in \supp(V^{\otimes n}(\cdot|x^n))} \sqrt{\widetilde{W}_{q_Y}(y^n| x^n(t))}\sqrt{W^{\otimes n}(y^n | x^n(t))} \\
&\geq \frac{\gamma_n}{\sqrt{2}}\inf_{t \in {\mathcal{T}}_n(\mathcal{X})} \sup_{q_Y\in \mathcal{P}(\mathcal{Y}) }\sup_{V \in S_{W}} \sum_{y^n \in \mathcal{G} \mathcal{S}(t,V,q_Y) } V^{\otimes n}(y^n | x^n(t))\sqrt{\frac{W^{\otimes n}(y^n | x^n(t))}{V^{\otimes n}(y^n | x^n(t))}}\cdot\sqrt{\frac{\widetilde{W}_{q_Y}(y^n| x^n(t))}{V^{\otimes n}(y^n | x^n(t))}} \\
&\geq \frac{\gamma_n}{\sqrt{2}}\inf_{t \in {\mathcal{T}}_n(\mathcal{X})} \sup_{q_Y\in \mathcal{P}(\mathcal{Y}) }\sup_{V \in S_{W}} e^{-\frac{n}{2}\mathbb{E}_{x\sim t} D(V(\cdot|x)\|W(\cdot |x))-\frac{n}{2}\zeta(t,V)}e^{-\frac{n}{2}\xi(t,V,q_Y)-\frac{n}{2}\delta(t,V,q_Y)}V^{\otimes n}\left(\cG\cS(t,V,q_Y) \middle| x^n(t)\right), \label{equ:subv}
\end{align}
where $ \cG\cS(t,V,q_Y) \coloneqq \mathcal{G}(t,V) \cap \mathcal{S}(t,V,q_Y)$ and we used the definition of $\mathcal{G}(t,V)$ \eqref{eq:def-G} and the inequality \eqref{eq:tildeW-V}  in the last inequality. 

From the union bound, Lemma~\ref{lem:cheg} and Lemma~\ref{lem:ches} we deduce that 
\begin{align}
    V^{\otimes n}\left( \cG\cS(t,V,q_Y)\right) &= 1-  V^{\otimes n}\left( \overline{\mathcal{G}(t,V)} \cup \overline{\mathcal{S}(t,V,q_Y)}\right) 
    \\&\ge 1-  V^{\otimes n}\left( \overline{\mathcal{G}(t,V)}\right)  -V^{\otimes n}\left( \overline{\mathcal{S}(t,V,q_Y)}\right) 
    \ge \frac{1}{2}.\label{eq:good-events-GS}
\end{align}
Therefore, we can proceed to 
lower bound Eq.~(\ref{equ:subv}) as follows
\begin{align}
    & \sqrt{1-\epsilon^{\rm{NS}}(W^{\otimes n}, e^{nr})}
\\
 & \overset{(a)}{\ge}\frac{\gamma_n}{2\sqrt{2}} \inf_{t \in {\mathcal{T}}_n(\mathcal{X})} \sup_{q_Y \in \mathcal{P}(\mathcal{Y})}\sup_{V \in S_{W}} \exp \left(-\frac{n}{2}\mathbb{E}_{x\sim t}D(V(\cdot|x)\|W(\cdot|x))-\sqrt{n\mathbb{E}_{x\sim t} \var(V(\cdot|x)\|W(\cdot|x))}\right) \\
&\quad \cdot \exp \left(-\frac{n}{2}(\mathbb{E}_{x\sim t}D(V(\cdot|x) \| q_Y)-r)_+-\sqrt{n\mathbb{E}_{x\sim t} \var(V(\cdot|x) \| q_Y)}\right) \\
&\overset{(b)}{=}  \frac{\gamma_n}{2\sqrt{2}} \inf_{t \in {\mathcal{T}}_n(\mathcal{X})} \sup_{q_Y\in \mathcal{P}(\mathcal{Y}) }\sup_{V \in S_{W}} \inf_{0\le  s \le 1} \exp \left(-\frac{n}{2}\mathbb{E}_{x\sim t}D(V(\cdot|x)\|W(\cdot|x))-\frac{ns}{2}(\mathbb{E}_{x\sim t}D(V(\cdot|x) \| q_Y)-r)\right) \\
&\quad \cdot \exp \left( -\sqrt{n\mathbb{E}_{x\sim t} \var(V(\cdot|x)\|W(\cdot|x))}-\sqrt{n\mathbb{E}_{x\sim t} \var(V(\cdot|x) \| q_Y)} \right) \\
&\overset{(c)}{\ge}  \frac{\gamma_n}{2\sqrt{2}} \inf_{t \in {\mathcal{T}}_n(\mathcal{X})} \sup_{q_Y\in \mathcal{P}(\mathcal{Y}) }\sup_{V \in S_{W}} \inf_{0\le  s \le 1} \exp \left(\!-\frac{n}{2}\mathbb{E}_{x\sim t}D(V(\cdot|x)\|W(\cdot|x))-\frac{ns}{2}\left(\mathbb{E}_{x\sim t}D\left(V(\cdot|x) \middle\| \tfrac{q_Y+\eta I_{\mathcal{Y}}}{1+\eta |\mathcal{Y}|}\right)\!-\!r\right)\!\right) \\
&\quad \cdot \exp \left( -\sqrt{n\mathbb{E}_{x\sim t} \var(V(\cdot|x)\|W(\cdot|x))}-\sqrt{n\mathbb{E}_{x\sim t} \var\left(V(\cdot|x) \middle\| \tfrac{q_Y+\eta I_{\mathcal{Y}}}{1+\eta |\mathcal{Y}|}\right)}\right),\label{equ:subv1}
\end{align}
where in $(a)$ we used the bound \eqref{eq:good-events-GS}; in $(b)$ we used $(a)_+ = \sup_{0\le s\le 1} s\cdot a$ for $a\in \mathbb{R}$; in $(c)$ we restrict the optimization over $\frac{q_Y+\eta I_{\mathcal{Y}}}{1+\eta |\mathcal{Y}|}\in \cP(\cY)$ for $q_Y\in \cP(\cY)$. 

The variance terms in \eqref{equ:subv1}  can be bounded using Lemma~\ref{lem:bound-var} 
\begin{align}
  \mathbb{E}_{x\sim t} \var(V(\cdot|x)\|W(\cdot|x)) &\le 2\log^2|\mathcal{Y}|+\log^2 W_{\rm{min}}+4\le A(\eta,W),
  \\ \mathbb{E}_{x\sim t}\var\left(V(\cdot|x) \middle\| \tfrac{q_Y+\eta I_{\mathcal{Y}}}{1+\eta |\mathcal{Y}|}\right)  &\le 2\log^2|\mathcal{Y}|+ \log^2 \tfrac{\eta}{1+\eta|\mathcal{Y}|}+4\le A(\eta,W).
\end{align}
Hence, from Eq.~(\ref{equ:subv1}), we obtain
\begin{align}
\sqrt{1-\epsilon^{\rm{NS}}(W^{\otimes n}, e^{nr})} 
&\geq \frac{\gamma_n}{2\sqrt{2}}\exp \left(-2\sqrt{A(\eta,W)n}\right) \inf_{t \in {\mathcal{T}}_n(\mathcal{X})} \sup_{q_Y\in \mathcal{P}(\mathcal{Y})} \sup_{V \in S_{W}} \inf_{0\le s \le  1} \\
&\quad \cdot \exp \left(-\frac{n}{2}\mathbb{E}_{x\sim t}D(V(\cdot|x)\|W(\cdot|x))-\frac{ns}{2}\left(\mathbb{E}_{x\sim t}D\left(V(\cdot|x) \middle\| \tfrac{q_Y+\eta I_{\mathcal{Y}}}{1+\eta |\mathcal{Y}|}\right)-r\right)\right)  \\
&\overset{(a)}{=} \frac{\gamma_n}{2\sqrt{2}}\exp\left(-2\sqrt{A(\eta,W)n}\right) \inf_{t \in {\mathcal{T}}_n(\mathcal{X})} \sup_{q_Y\in \mathcal{P}(\mathcal{Y})}  \inf_{0\le s \le  1} \sup_{V \in S_{W}} \\
&\quad \cdot \exp \left(-\frac{n}{2}\mathbb{E}_{x\sim t}D(V(\cdot|x)\|W(\cdot|x))-\frac{ns}{2}\left(\mathbb{E}_{x\sim t}D\left(V(\cdot|x) \middle\| \tfrac{q_Y+\eta I_{\mathcal{Y}}}{1+\eta |\mathcal{Y}|}\right)-r\right)\right)\\
&\overset{(b)}{=} \frac{\gamma_n}{2\sqrt{2}}\exp\left(-2\sqrt{A(\eta,W)n}\right) \inf_{t \in {\mathcal{T}}_n(\mathcal{X})} \sup_{q_Y\in \mathcal{P}(\mathcal{Y})}  \inf_{\frac{1}{2}\le \alpha \le  1} \sup_{V \in S_{W}} \\
&\quad \cdot \exp \left(\!-\frac{n}{2}\mathbb{E}_{x\sim t}D(V(\cdot|x)\|W(\cdot|x))-\frac{n(1-\alpha)}{2\alpha}\left(\mathbb{E}_{x\sim t}D\left(V(\cdot|x) \middle\| \tfrac{q_Y+\eta I_{\mathcal{Y}}}{1+\eta |\mathcal{Y}|}\right)\!-\!r\right)\!\right)
\\
&\overset{(c)}{=} \frac{\gamma_n}{2\sqrt{2}} \exp\left(-2\sqrt{A(\eta,W)n}\right)\inf_{t \in {\mathcal{T}}_n(\mathcal{X})} \sup_{q_Y\in \mathcal{P}(\mathcal{Y})}  \inf_{\frac{1}{2}\le \alpha \le 1} \\
&\quad \cdot \exp\left(-\frac{n(1-\alpha)}{2\alpha}\left(\mathbb{E}_{x\sim t}D_\alpha\left(W(\cdot|x)\middle\|\tfrac{q_Y+\eta I_{\mathcal{Y}}}{1+\eta |\mathcal{Y}|}\right)-r\right)\right),
\end{align}
where in $(a)$ we applied Sion's minimax theorem \cite{sion1958general}: we first move $\sup_{V \in S_{W}} \inf_{0\le s \le  1}$ inside $\exp(\cdot)$, the objective function $(s,V) \mapsto -\frac{n}{2}\mathbb{E}_{x\sim t}D(V(\cdot|x)\|W(\cdot|x))-\frac{ns}{2}\left(\mathbb{E}_{x\sim t}D\left(V(\cdot|x) \middle\| \tfrac{q_Y+\eta I_{\mathcal{Y}}}{1+\eta |\mathcal{Y}|}\right)-r\right)$ is linear in $s$ and concave in $V$ and the domains $S_W$ and $[0,1]$ are convex and compact; $(b)$ uses the change of variable $\alpha=\frac{1}{1+s}\in [\frac{1}{2},1]$ and $(c)$ is due to Proposition~\ref{prop:mainpro}\textit{(\romannumeral3)}.
Finally, by Proposition \ref{prop:mainpro}\,$(ii)$, we have
\begin{align}
&1-\epsilon^{\rm{NS}}(W^{\otimes n}, e^{nr}) \\
&\geq  \frac{\gamma^2_n}{8} \exp\left(-4\sqrt{A(\eta,W)n}\right)\inf_{t \in {\mathcal{T}}_n(\mathcal{X})} \sup_{q_Y\in \mathcal{P}(\mathcal{Y})}\inf_{\frac{1}{2}\le \alpha \le 1} \exp\left(-\frac{n(1-\alpha)}{\alpha}\left(\mathbb{E}_{x\sim t}D_\alpha\left(W(\cdot|x)\middle\|\tfrac{q_Y}{1+\eta |\mathcal{Y}|}\right)-r\right)\right) \\
&=\frac{\gamma^2_n}{8} e^{-4\sqrt{A(\eta,W)n}}\inf_{t \in {\mathcal{T}}_n(\mathcal{X})} \sup_{q_Y\in \mathcal{P}(\mathcal{Y})}\inf_{\frac{1}{2}\le \alpha \le 1} \exp \left(-\frac{n(1-\alpha)}{\alpha}\left(\mathbb{E}_{x\sim t}D_\alpha(W(\cdot|x)\|q_Y)\!+\!\log(1+\eta|\mathcal{Y}|)-r\right)\right)\\
&\geq \frac{\gamma^2_n}{8} e^{-4\sqrt{A(\eta,W)n}}(1+\eta|\mathcal{Y}|)^{-n}\inf_{t \in {\mathcal{T}}_n(\mathcal{X})} \sup_{q_Y\in \mathcal{P}(\mathcal{Y})}\inf_{\frac{1}{2}\le \alpha \le 1} \exp \left(-\frac{n(1-\alpha)}{\alpha}\big(\mathbb{E}_{x\sim t}D_\alpha(W(\cdot|x)\|q_Y)-r\big)\right) \\
&\geq \frac{\gamma^2_n}{8} e^{-4\sqrt{A(\eta,W)n}} (1+\eta|\mathcal{Y}|)^{-n}\inf_{p_X\in \cP(\cX)} \sup_{q_Y\in \mathcal{P}(\mathcal{Y})}\inf_{\frac{1}{2}\le \alpha \le 1} \exp \left(\!-\frac{n(1-\alpha)}{\alpha}\big(\mathbb{E}_{x\sim p_X}D_\alpha(W(\cdot|x)\|q_Y)-r\big)\!\right) \\
&\overset{(a)}{=}\frac{\gamma^2_n}{8} e^{-4\sqrt{A(\eta,W)n}}(1\!+\!\eta|\mathcal{Y}|)^{-n}\inf_{\frac{1}{2}\le \alpha \le 1}\inf_{p_X\in \cP(\cX)} \sup_{q_Y\in \mathcal{P}(\mathcal{Y})} \exp\left(\!-\frac{n(1-\alpha)}{\alpha}\big(\mathbb{E}_{x\sim p_X}D_\alpha(W(\cdot|x)\|q_Y)\!-\!r\big)\!\right),
\end{align}
where in (a), we use Sion's minimax theorem \cite{sion1958general} after making the change of variable $s= \frac{1-\alpha}{\alpha}$: the function $s \mapsto -s\big(\mathbb{E}_{x\sim p_X}D_{\frac{1}{1+s}}(W(\cdot|x)\|q_Y)-r\big)$ is convex~\cite[Corollary B.2]{MosonyiOgawa2017strong}, the function  $q_Y \mapsto -s\big(\mathbb{E}_{x\sim p_X}D_{\frac{1}{1+s}}(W(\cdot|x)\|q_Y)-r\big)$ is concave and the domains $[0,1]$ and $\cP(\cY)$ are convex and compact. 
\end{proof}

From Corollary~\ref{cor:CC:NS:NonAsymp:Ach}, we can directly derive the achievability part for the strong converse exponent for non-signaling assisted classical channel simulation.
\begin{proposition}\label{prop:CC:NS:Asymp:Ach}
For any  classical channel $W:\mathcal{X} \rightarrow \mathcal{Y}$ and $r \geq 0$, we have
\begin{equation} 
\lim_{n \rightarrow \infty} -\frac{1}{n} \log \big( 1-\epsilon^{\rm{NS}}(W^{\otimes n}, e^{nr}) \big) \leq \sup_{\frac{1}{2}\le\alpha\le1} \frac{1-\alpha}{\alpha} \big(I_{\alpha}(W)-r \big).
\end{equation}
\end{proposition}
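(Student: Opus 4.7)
The plan is to derive the asymptotic bound directly from the non-asymptotic Corollary~\ref{cor:CC:NS:NonAsymp:Ach} by applying $-\frac{1}{n}\log(\cdot)$, showing the prefactor is negligible, and turning the $\inf/\sup$ of exponentials into a $\sup/\inf$ of the exponent that can be bounded by $\frac{1-\alpha}{\alpha}(I_\alpha(W)-r)$ via a minimax inequality.

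The first step is to take $-\frac{1}{n}\log(\cdot)$ of the bound from Corollary~\ref{cor:CC:NS:NonAsymp:Ach}. The contribution of $\frac{1}{8}e^{-|\cY|}e^{-|\cX|\log(n+1)}e^{-8\log(|\cY|^2W_{\min}(n+|\cY|))\sqrt{n}}$ is of order $O(\log(n)/\sqrt{n})$ after dividing by $n$, and thus vanishes in the limit $n\to\infty$. So the only task is to control the main term
\begin{equation*}
\Phi_n \coloneqq \inf_{\frac{1}{2}\le \alpha \le 1}\inf_{p_X\in \cP(\cX)} \sup_{q_Y\in \cP(\cY)} \exp\!\Big(\!-n\tfrac{1-\alpha}{\alpha}\big(\mathbb{E}_{x\sim p_X}D_\alpha(W(\cdot|x)\|q_Y)-r\big)\Big).
\end{equation*}

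The second step is algebraic: since each $\inf/\sup$ of $\exp(-n\,\cdot)$ moves inside as the opposite $\sup/\inf$ of the exponent, we obtain
\begin{equation*}
-\frac{1}{n}\log \Phi_n \;=\; \sup_{\frac{1}{2}\le \alpha \le 1}\sup_{p_X\in \cP(\cX)} \inf_{q_Y\in \cP(\cY)} \tfrac{1-\alpha}{\alpha}\Big(\mathbb{E}_{x\sim p_X}D_\alpha(W(\cdot|x)\|q_Y)-r\Big).
\end{equation*}
Next, apply the elementary minimax inequality $\sup_{p_X}\inf_{q_Y}\le \inf_{q_Y}\sup_{p_X}$. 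Since the map $p_X\mapsto \mathbb{E}_{x\sim p_X}D_\alpha(W(\cdot|x)\|q_Y)$ is linear on the simplex $\cP(\cX)$, its supremum is attained at a vertex, giving $\sup_{p_X}\mathbb{E}_{x\sim p_X}D_\alpha(W(\cdot|x)\|q_Y)=\max_{x\in \cX} D_\alpha(W(\cdot|x)\|q_Y)$. Pulling the nonnegative factor $\frac{1-\alpha}{\alpha}\ge 0$ outside the $\inf_{q_Y}$ and invoking the dual representation $I_\alpha(W)=\inf_{q_Y\in\cP(\cY)}\max_{x\in\cX} D_\alpha(W(\cdot|x)\|q_Y)$ from Section~\ref{sec:prem}, we arrive at
\begin{equation*}
\sup_{p_X}\inf_{q_Y} \tfrac{1-\alpha}{\alpha}\big(\mathbb{E}_{x\sim p_X}D_\alpha(W(\cdot|x)\|q_Y)-r\big) \;\le\; \tfrac{1-\alpha}{\alpha}\big(I_\alpha(W)-r\big).
\end{equation*}

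Combining everything, for every $n$ we get $-\frac{1}{n}\log\big(1-\epsilon^{\rm NS}(W^{\ox n},e^{nr})\big)\le o(1)+\sup_{\frac{1}{2}\le\alpha\le 1}\frac{1-\alpha}{\alpha}(I_\alpha(W)-r)$, and taking $n\to\infty$ yields the proposition. There is no real obstacle here beyond careful bookkeeping; the only thing to notice is that the minimax inequality suffices (we do not need equality), so no convexity/concavity argument or Sion's theorem is required for this last step.
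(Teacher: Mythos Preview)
Your proof is correct and follows essentially the same route as the paper: invoke Corollary~\ref{cor:CC:NS:NonAsymp:Ach}, check that the prefactor contributes $o(1)$ after taking $-\frac{1}{n}\log$, and then bound the remaining $\sup_{p_X}\inf_{q_Y}$ expression by $\frac{1-\alpha}{\alpha}(I_\alpha(W)-r)$. The only difference is that the paper invokes \cite[Proposition~1]{csiszar1995generalized} to assert \emph{equality} $\sup_{p_X}\inf_{q_Y}=\inf_{q_Y}\sup_{p_X}$, whereas you observe that the trivial minimax inequality $\sup_{p_X}\inf_{q_Y}\le\inf_{q_Y}\sup_{p_X}$ already points in the required direction; this is a small but genuine simplification.
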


\begin{proof}
From Corollary~\ref{cor:CC:NS:NonAsymp:Ach}, we have that for all  $n \in \mathbb{N}$
\begin{align}
-\frac{1}{n} \log\big( 1-\epsilon^{\rm{NS}}(W^{\otimes n}, e^{nr}) \big)  
\leq &\sup_{\frac{1}{2}\le\alpha\le1}\sup_{p_X \in \mathcal{P}(\mathcal{X})}\inf_{q_Y\in \mathcal{P}(\mathcal{Y})} \frac{1-\alpha}{\alpha}(\mathbb{E}_{x\sim p_X}D(W(\cdot|x)\|q_Y)-r)
\\
&+|\mathcal{X}|\frac{\log(n+1)}{n}+\frac{8\log(|\cY|^2W_{\min}(n+|\cY|))}{\sqrt{n}} +\frac{|\cY|+\log(8)}{n}.
\end{align}
Taking the limit  $n\rightarrow \infty$, we deduce
\begin{align}
\lim_{n \rightarrow \infty} -\frac{1}{n} \log \big( 1-\epsilon^{\rm{NS}}(W^{\otimes n}, e^{nr}) \big) &\leq \sup_{\frac{1}{2}\le\alpha\le1}\sup_{p_X \in \mathcal{P}(\mathcal{X})}\inf_{q_Y\in \mathcal{P}(\mathcal{Y})} \frac{1-\alpha}{\alpha}(\mathbb{E}_{x\sim p_X}D(W(\cdot|x)\|q_Y)-r)\\
&\overset{(a)}{=}\sup_{\frac{1}{2}\le\alpha\le1}\inf_{q_Y\in \mathcal{P}(\mathcal{Y})} \sup_{p_X \in \mathcal{P}(\mathcal{X})}\frac{1-\alpha}{\alpha}(\mathbb{E}_{x\sim p_X}D(W(\cdot|x)\|q_Y)-r) \\
&=\sup_{\frac{1}{2}\le\alpha\le1} \frac{1-\alpha}{\alpha}(\inf_{q_Y\in \mathcal{P}(\mathcal{Y})}\sup_{x \in \mathcal{X}}D(W(\cdot|x)\|q_Y)-r) \\
&=\sup_{\frac{1}{2}\le\alpha\le1} \frac{1-\alpha}{\alpha}(I_{\alpha}(W)-r),
\end{align}
where $(a)$ is from \cite[Proposition 1]{csiszar1995generalized}. 
\end{proof}

From Propositions \ref{prop:SCE-cvs-cl} and \ref{prop:CC:NS:Asymp:Ach} we deduce the  strong converse exponent for non-signaling assisted classical channel simulation.
\begin{theorem}\label{thm:CC:NS:Asymp}
For any  classical channel $W:\mathcal{X} \rightarrow \mathcal{Y}$ and $r \geq 0$, we have
\begin{equation} 
\lim_{n \rightarrow \infty} -\frac{1}{n} \log \big( 1-\epsilon^{\rm{NS}}(W^{\otimes n}, e^{nr}) \big) = \sup_{\frac{1}{2}\le\alpha\le1} \frac{1-\alpha}{\alpha} \big(I_{\alpha}(W)-r \big).
\end{equation}
\end{theorem}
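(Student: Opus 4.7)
The plan is to observe that the theorem follows immediately by sandwiching: combine the lower bound on the strong converse exponent from Proposition~\ref{prop:SCE-cvs-cl} with the matching upper bound from Proposition~\ref{prop:CC:NS:Asymp:Ach}. Both propositions have the same expression $\sup_{\frac{1}{2}\le\alpha\le 1}\frac{1-\alpha}{\alpha}(I_\alpha(W)-r)$ on their right-hand sides, so no further optimization or limit manipulation is needed; equality is automatic.

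More concretely, I would first invoke Proposition~\ref{prop:SCE-cvs-cl} to obtain
\begin{equation}
\liminf_{n\to\infty}-\frac{1}{n}\log\big(1-\epsilon^{\rm NS}(W^{\otimes n},e^{nr})\big)\;\ge\;\sup_{\frac{1}{2}\le\alpha\le 1}\frac{1-\alpha}{\alpha}\big(I_\alpha(W)-r\big),
\end{equation}
noting that the proposition is stated for classical-quantum channels with sandwiched R\'enyi mutual information but specializes to the classical $I_\alpha(W)$ quantity when $W$ is classical (since $\widetilde{D}_\alpha$ reduces to $D_\alpha$ on commuting states). Then I would apply Proposition~\ref{prop:CC:NS:Asymp:Ach}, which provides the matching
\begin{equation}
\limsup_{n\to\infty}-\frac{1}{n}\log\big(1-\epsilon^{\rm NS}(W^{\otimes n},e^{nr})\big)\;\le\;\sup_{\frac{1}{2}\le\alpha\le 1}\frac{1-\alpha}{\alpha}\big(I_\alpha(W)-r\big).
\end{equation}
Combining the two bounds shows that the limit exists and equals the stated single-letter expression.

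There is no real obstacle at this stage; the substance of the argument lives in the two propositions. The converse direction (Proposition~\ref{prop:SCE-cvs-cl}) rests on a H\"older-type R\'enyi inequality applied to the non-signaling simulation program \eqref{ns-pur-program-cc-alt} together with additivity of $\widetilde{I}_\alpha$ (Proposition~\ref{prop:mainpro}\,$(v)$). The achievability direction (Proposition~\ref{prop:CC:NS:Asymp:Ach}) is the hard part of the overall argument and is handled in Proposition~\ref{pro:in} via the auxiliary-channel construction \eqref{equ:con}, Chebyshev-type typical set decompositions, and a final Sion minimax exchange to pass from the per-type $\mathbb{E}_{x\sim t}D_\alpha(W(\cdot|x)\|q_Y)$ expression to $I_\alpha(W)$. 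Once those two ingredients are in hand, the present theorem is just their conjunction.
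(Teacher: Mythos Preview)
Your proposal is correct and matches the paper's own argument exactly: the theorem is stated immediately after Propositions~\ref{prop:SCE-cvs-cl} and~\ref{prop:CC:NS:Asymp:Ach} and is deduced from their conjunction. One minor slip: Proposition~\ref{prop:SCE-cvs-cl} is already stated directly for classical channels with $I_\alpha(W)$, so no specialization from the classical-quantum case is needed.
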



\subsection{Achievability for non-signaling assisted classical-quantum channel simulation}\label{sec:NS:CQ:SC:ACH}

In this section, we establish the  achievability part of the strong converse exponent for non-signaling assisted classical-quantum channel simulation. The exact strong converse exponent then follows from this result and the converse part derived in Section~\ref{subsec:converse}. Our main result is stated as follows.
\begin{proposition}\label{prop:CQ:NS:Ach}
    For any  classical-quantum channel $W:\mathcal{X} \rightarrow \mathcal{S}(B)$ and $r \geq 0$, we have for all $n\ge |B|$
\begin{equation}\label{eq:CQ:NS:Ach:nonAsymp}
     1-\epsilon^{\rm{NS}}(W^{\otimes n}, e^{nr})\ge \frac{1}{2} e^{-|B| -|\cX|\log(n+1) - 32 n^{4/5} |B|^{1/5}\log(|B|(n+|B|)) }\inf_{\alpha\in [\frac{1}{2},1]}\exp\left(-n \frac{1-\alpha}{\alpha} \big(\widetilde{I}_{\alpha}(W)-r \big)\right),
\end{equation}
    which leads to the following asymptotic upper bound
\begin{equation} \label{eq:CQ:NS:Ach:Asymp}
\lim_{n \rightarrow \infty} -\frac{1}{n} \log \big( 1-\epsilon^{\rm{NS}}(W^{\otimes n}, e^{nr}) \big)  \leq \sup_{\alpha\in [\frac{1}{2},1]} \frac{1-\alpha}{\alpha} \big(\widetilde{I}_{\alpha}(W)-r \big).
\end{equation}
\end{proposition}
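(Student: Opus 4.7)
The plan is to mirror the structure of the classical proof of Proposition \ref{pro:in}, with non-commutativity handled through pinching and spectral projections, and with the final lower bound on fidelity obtained via the fidelity--relative entropy inequality of \cite{Li2024-oper}. Starting from the minimax formulation \eqref{ns-pur-program-cq-alt}, I would use the inequality $1-\sqrt{1-x}\ge x/2$, restrict the optimization over $\sigma^n$ to product states $\sigma^{\otimes n}$, and reduce the input distribution $p_{X^n}$ to a single type $t\in \mathcal{T}_n(\mathcal{X})$, paying the standard polynomial factor $\gamma_n = (n+1)^{-|\mathcal{X}|}$.

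For a fixed type $t$ and auxiliary classical-quantum channel $V:\mathcal{X}\to \mathcal{S}(B)$, I would then construct a test simulation $\widetilde{W}_{x^n(t)}$ satisfying $\widetilde{W}_{x^n(t)} \mle e^{nr}\sigma^{\otimes n}$. The key idea is to pinch the product state $V^{\otimes n}(x^n(t)) = \bigotimes_i V_{x_i}$ with respect to $\sigma^{\otimes n}$; the associated pinching channel $\mathcal{E}_{\sigma^{\otimes n}}$ has at most $(n+1)^{|B|-1}$ spectral projections, so the pinched operator $\widetilde{V}^n := \mathcal{E}_{\sigma^{\otimes n}}\big(V^{\otimes n}(x^n(t))\big)$ commutes with $\sigma^{\otimes n}$ while satisfying $V^{\otimes n}(x^n(t)) \mle (n+1)^{|B|-1} \widetilde{V}^n$ by the pinching inequality. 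Since $\widetilde{V}^n$ and $\sigma^{\otimes n}$ now commute, the classical threshold sets $\mathcal{G}$ and $\mathcal{S}$ translate to spectral projections defined from the commuting operators $\widetilde{V}^n$, $\sigma^{\otimes n}$, and $\mathcal{E}_{\sigma^{\otimes n}}(W^{\otimes n}(x^n(t)))$, and the mixture construction \eqref{equ:con} has a direct quantum analog yielding a feasible $\widetilde{W}_{x^n(t)}$.

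The main obstacle is lower bounding $F(W^{\otimes n}(x^n(t)), \widetilde{W}_{x^n(t)})$: in the classical case this reduced to a direct Cauchy--Schwarz computation, but in the quantum case $W^{\otimes n}(x^n(t))$ does not commute with $\widetilde{V}^n$, so the clean product structure is lost. I would handle this using the fidelity--relative entropy inequality of \cite{Li2024-oper} to lower bound the fidelity by quantities of the form $\mathbb{E}_{x\sim t}D(V_x\|W_x)$ and $\mathbb{E}_{x\sim t}D(V_x\|\sigma)$, together with a Fannes--Audenaert--Petz-type continuity bound to absorb the perturbation introduced by pinching and by passing between $W^{\otimes n}(x^n(t))$ and its pinching. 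Introducing a smoothing parameter $\epsilon$ and optimizing the trade-off between the pinching penalty (polynomial in $n$ but multiplied by a smoothing error scaling like $n\epsilon\log|B|$ in the continuity bound) and the entropic distortion $\mathcal{O}(n\epsilon \log |B|)$ versus the Chebyshev-type variance term $\mathcal{O}(\sqrt{n/\epsilon})$ leads, after balancing, to the prefactor $e^{-\mathcal{O}(n^{4/5}\log n)}$ appearing in \eqref{eq:CQ:NS:Ach:nonAsymp}.

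Once the fidelity is bounded below by a product of exponentials in $\mathbb{E}_{x\sim t}D(V_x\|W_x)$ and $\mathbb{E}_{x\sim t}D(V_x\|\sigma)$ (up to the $n^{4/5}\log n$ correction), the remainder of the argument is parallel to the classical case: rewrite the positive part via $(a)_+ = \sup_{0\le s\le 1} s\cdot a$, apply Sion's minimax theorem to exchange the supremum over auxiliary channels $V$ with the infimum in $s$ using the variational representation in Proposition \ref{prop:mainpro}\,$(iii)$, change variables $\alpha = 1/(1+s)\in [\tfrac{1}{2},1]$, and take the supremum over $\sigma\in \mathcal{S}(B)$ (using Proposition \ref{prop:mainpro}\,$(ii)$) to convert the averaged Umegaki divergences into $\widetilde{D}_\alpha$, producing $\widetilde{I}_\alpha(W)$ as the inner quantity. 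The non-asymptotic bound \eqref{eq:CQ:NS:Ach:nonAsymp} then follows, and taking $n\to\infty$ kills the subexponential factors and yields the asymptotic statement \eqref{eq:CQ:NS:Ach:Asymp}.
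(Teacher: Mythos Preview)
Your outline captures several of the right ingredients (types reduction, pinching, the fidelity--relative entropy inequality \eqref{eq:fid-KL}, Sion's minimax, and the variational formula), but it misses the key structural device in the paper's proof: the \emph{block decomposition} $n\approx mk$. The paper does \emph{not} use a single-letter auxiliary channel $V:\mathcal{X}\to\mathcal{S}(B)$. Instead it splits $[n]$ into $k{+}1$ blocks of size $m$, pinches each block with the coarse map $\mathcal{E}_{\sigma^{\otimes m}}$, and optimizes over arbitrary block states $V_i\in\mathcal{S}_m^\sigma=\{V\in\mathcal{S}(B^m):[V,\sigma^{\otimes m}]=0\}$ (Lemma~\ref{lem:UB-rel-ent-V-tW}). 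This block structure is what makes all three pieces fit together: (i) Chebyshev concentration in Lemma~\ref{lem:UB-rel-ent-V-tW} runs over the $k$ independent blocks; (ii) the variational formula of Proposition~\ref{prop:mainpro}\,$(iii)$ is applied at the block level to commuting states $\mathcal{E}_{\sigma^{\otimes m}}(W^{\otimes m}_{x^m(i)})$ and $\sigma^{\otimes m}$, yielding $D_\alpha(\mathcal{E}_{\sigma^{\otimes m}}(W^{\otimes m}_{x^m(i)})\|\sigma^{\otimes m})$, which by data processing is at most $\widetilde{D}_\alpha(W^{\otimes m}_{x^m(i)}\|\sigma^{\otimes m})=m\,\mathbb{E}_{x\sim t}\widetilde{D}_\alpha(W_x\|\sigma)$; (iii) the Lemma~\ref{lem:appen1} penalty is $(m+1)^{(k+1)|B|}$, and balancing this against the $k^{3/4}m$ error from Lemma~\ref{lem:UB-rel-ent-V-tW} with $mk\approx n$ gives precisely the $n^{4/5}$ exponent.

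Your single-letter scheme breaks at the variational step. If you require $[V_x,\sigma]=0$ so that Lemma~\ref{lem:UB-rel-ent-V-tW} applies with $k=n$, then the inner optimization becomes $\sup_{\tau:[\tau,\sigma]=0}\{-D(\tau\|W_x)-sD(\tau\|\sigma)\}$. But Proposition~\ref{prop:mainpro}\,$(iii)$ needs both arguments to commute, and $W_x$ does not commute with $\sigma$; there is no reason this restricted supremum equals (or lower bounds) $-s\,\widetilde{D}_\alpha(W_x\|\sigma)$. If instead you pinch $W^{\otimes n}$ with the product map $\mathcal{E}_\sigma^{\otimes n}$ to force commutation at the single-letter level, Lemma~\ref{lem:appen1} costs $|B|^n$, which is fatal. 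And if you pinch with the coarse $\mathcal{E}_{\sigma^{\otimes n}}$ as you propose, the pinched auxiliary $\widetilde{V}^n=\mathcal{E}_{\sigma^{\otimes n}}(\bigotimes_i V_{x_i})$ is no longer a product state, so the Chebyshev step and the single-letter factorization of relative entropies both fail. The ``smoothing parameter $\epsilon$'' you describe with the $n\epsilon$ vs.\ $\sqrt{n/\epsilon}$ tradeoff does not correspond to anything in the actual argument and would in any case give $n^{2/3}$, not $n^{4/5}$. The block structure is not an optimization refinement---it is what makes the quantum variational step go through at all.
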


\begin{proof}
From \eqref{ns-pur-program-cq-alt}, the non-signaling channel simulation error probability under the purified distance is 
\begin{align}
&\epsilon^{\rm{NS}}(W^{\otimes n}, e^{nr}) \\
&=\sup_{p_{X^n}} \inf_{\widetilde{W},\, \sigma^n}\left\{ P\big( W^{\otimes n}\circ p_{X^n},  \widetilde{W}\circ p_{X^n}\big) \,\middle|\, \widetilde{W} \text{ CQ channel},\; \widetilde{W}_{x^n} \mle e^{nr} \sigma^n  \;\; \forall x^n\in \cX^{n},\; \tr{\sigma^n}=1\right\}. \label{eq:def-eps-ach-cq}
\end{align}
Then, we  fix  a constant $\eta, \delta>0$ and an integer $m \in \mathbb{N}$. It is obvious that there exists an integer $k_{\delta}$ such that when $k \geq k_\delta$, $m(k+1)(r-\delta)<mkr$. We let 
\begin{equation}
    \mathcal{S}_\eta(\cH)\coloneqq\left\{\frac{\sigma+\eta I_{\mathcal{H}}}{1+\eta|\mathcal{H}|}~\middle|~\sigma \in \mathcal{S}(\cH) \right\} = \left\{\sigma \in \mathcal{S}(\cH)~\middle|~ \|\sigma\|_{\min}\ge \frac{\eta }{1+\eta|\mathcal{H}|}\right\},
\end{equation}
where we denote the minimal positive eigenvalue of $\sigma$ as  $\|\sigma\|_{\min}$. 
Given $x^n \in \cX^n$ we denote $W_{x^n}^{\ox n}\coloneqq \bigotimes_{i=1}^n W_{x_i}$. 
 For any integer $n$ with the form $n=mk+l$, where $0\leq l <m-1$, $k \geq k_{\delta}$, $\big( 1-\epsilon^{\rm{NS}}(W^{\otimes n}, e^{nr}) \big) $ can lower bounded starting from the definition of $\epsilon^{\rm{NS}}(W^{\otimes n}, e^{nr})$ \eqref{eq:def-eps-ach-cq} and using the inequality $1-\sqrt{1-x}\ge\frac{x}{2}$ for $x\in [0,1]$ as follows
\begin{align}
&\sqrt{\big( 1-\epsilon^{\rm{NS}}(W^{\otimes n}, e^{nr}) \big) } \\
&\geq \frac{1}{\sqrt{2}}\inf_{p_{X^n}\in \cP(\cX^{n})} \sup_{\sigma \in \cS_\eta(B)}\sum_{x^n} p_{X^n}(x^n) \sup_{\widetilde{\rho}_{x^n} \in \mc{S}(B^n): \widetilde{\rho}_{x^n} \mle e^{nr} \sigma^{\ox n}} \sqrt{F(\widetilde{\rho}_{x^n},W_{x^n}^{\ox n})}\\
&\overset{(a)}{=} \frac{1}{\sqrt{2}} \inf_{P \in \mathcal{P}(\mathcal{T}_n(\mathcal{X}))} \sup_{\sigma \in \cS_\eta(B)}\sum_{t \in \mathcal{T}_n(\mathcal{X})} P(t) \sup_{\widetilde{\rho}_{x^n(t)} \in \mc{S}(B^n): \widetilde{\rho}_{x^n(t)} \mle e^{nr} \sigma^{\ox n}} \sqrt{F(\widetilde{\rho}_{x^n(t)},W_{x^n(t)}^{\ox n})} \\
&\overset{(b)}{\geq} \frac{\gamma_n}{\sqrt{2}}\inf_{x^n}\sup_{\sigma \in \cS_\eta(B)}\sup_{\widetilde{\rho}_{x^n} \in \mc{S}(B^n): \widetilde{\rho}_{x^n} \mle e^{nr} \sigma^{\ox n}} \sqrt{F(\widetilde{\rho}_{x^n},W_{x^n}^{\ox n})} \\
&\overset{(c)}{\geq}  \frac{\gamma_n}{\sqrt{2}}\inf_{x^{m(k+1)}}\sup_{\sigma \in \cS_\eta(B)} \sup_{\substack{\widetilde{\rho}_{x^{m(k+1)}} \in \mc{S}(B^{m(k+1)}):\\ \widetilde{\rho}_{x^{m(k+1)}} \mle e^{m(k+1)(r-\delta)} \sigma^{\ox m(k+1)}}} \sqrt{F\left(\widetilde{\rho}_{x^{m(k+1)}},W_{x^{m(k+1)}}^{\ox {m(k+1)}}\right)}
\\
&\overset{(d)}{\geq} \frac{\gamma_n}{\sqrt{2}}\inf_{x^{m(k+1)}}\sup_{\sigma \in \cS_\eta(B)} \sup_{\substack{\widetilde{\rho}_{x^{m(k+1)}} \in \mc{S}(B^{m(k+1)}):\\ \mc{E}_{\sigma^{\ox  m}}^{\ox (k+1)}\left(\widetilde{\rho}_{x^{m(k+1)}}\right) \mle e^{m(k+1)(r-\delta)} \sigma^{\ox m(k+1)}}} \sqrt{F\left(\mc{E}_{\sigma^{\ox  m}}^{\ox (k+1)}\left(\widetilde{\rho}_{x^{m(k+1)}}\right),W_{x^{m(k+1)}}^{\ox {m(k+1)}}\right)}\label{equ:fieva-pre}
\end{align}
where $(a)$ is because that the fidelity function is invariant under the action of permutation unitary channels so  $\sup_{\widetilde{\rho}_{x^n} \in \mc{S}(B^n): \widetilde{\rho}_{x^n} \mle e^{nr} \sigma^{\ox n}} F(\widetilde{\rho}_{x^n},W_{x^n}^{\ox n})$ only depends on the type of $x^n$; in $(b)$, we use that for any $P\in \mathcal{P}(\mathcal{T}_n(\mathcal{X}))$, there must exist $t \in \mathcal{T}_n(\mathcal{X})$ such that $P(t) \geq (n+1)^{-|\mathcal{X}|} \eqqcolon \gamma_n$ (see Section \ref{sec-types}); $(c)$ is due to the data processing inequality of the fidelity function; in $(d)$ we restrict the maximization over $\mc{E}_{\sigma^{\ox  m}}^{\ox (k+1)}\left(\widetilde{\rho}_{x^{m(k+1)}}\right)$ for $\widetilde{\rho}_{x^{m(k+1)}} \in \mc{S}(B^{m(k+1)})$.

Now from \eqref{equ:fieva-pre}, we apply  Lemma~\ref{lem:appen1}
\begin{align}
&\sqrt{\big( 1-\epsilon^{\rm{NS}}(W^{\otimes n}, e^{nr}) \big) } \\
    &\geq  \frac{\gamma_n}{\sqrt{2}}\inf_{x^{m(k+1)}}\sup_{\sigma \in \cS_\eta(B)} \sup_{\substack{\widetilde{\rho}_{x^{m(k+1)}} \in \mc{S}(B^{m(k+1)}):\\ \mc{E}_{\sigma^{\ox  m}}^{\ox (k+1)}\left(\widetilde{\rho}_{x^{m(k+1)}}\right) \mle e^{m(k+1)(r-\delta)} \sigma^{\ox m(k+1)}}} \sqrt{F\left(\mc{E}_{\sigma^{\ox  m}}^{\ox (k+1)}\left(\widetilde{\rho}_{x^{m(k+1)}}\right),W_{x^{m(k+1)}}^{\ox {m(k+1)}}\right)}
\\
&\geq\gamma_n   \inf_{x^{m(k+1)}}\sup_{\sigma \in \cS_\eta(B)}\sup_{\substack{\widetilde{\rho}_{x^{m(k+1)}} \in \mc{S}(B^{m(k+1)}):\\ \mc{E}_{\sigma^{\ox  m}}^{\ox (k+1)}\left(\widetilde{\rho}_{x^{m(k+1)}}\right) \mle e^{m(k+1)(r-\delta)} \sigma^{\ox m(k+1)}}}\frac{\sqrt{F\left(\mc{E}_{\sigma^{\ox  m}}^{\ox (k+1)}\left(\widetilde{\rho}_{x^{m(k+1)}}\right),\mc{E}_{\sigma^{\ox  m}}^{\ox (k+1)}\left(W_{x^{m(k+1)}}^{\ox {m(k+1)}}\right)\right)}}{\sqrt{2(m+1)^{(k+1)|B|}}} , \label{equ:fieva}
\end{align}
where we use that the number of the projections in the pinching channel $\mc{E}_{\sigma^{\ox  m}}^{\ox (k+1)}$ is at most $v(\sigma^{\ox  m})^{k+1}\le (m+1)^{(k+1)|B|}$   since the number $v(\sigma^{\ox  m})$ of different eigenvalues of $\sigma^{\ox  m}$ must be smaller than $(m+1)^{|B|}$ by the method of type~(see \eqref{equ:numbertype}).

Next, we let $\mathcal{S}^{\sigma}_m\coloneqq\{V \in \mathcal{S}(B^m)~|~[V,\sigma^{\otimes m}]=0  \}$. By Lemma \ref{lem:UB-rel-ent-V-tW}, for any $\{V_i \in \mathcal{S}^{\sigma}_m\}_{i=1}^{k+1}$ and $\sigma \in \mathcal{S}_\eta(B)$, we obtain the existence of $\tilde{\rho}$ such that $\tilde{\rho}\mle e^{m(k+1)(r-\delta)} \sigma^{\ox m(k+1)}  $ and 
 \begin{align}\label{eq:app-of-lemma}
        D\Big( \textstyle\bigotimes_{i=1}^{k+1} V_i \big\|\tilde{\rho}\Big)\le \Big(\textstyle\sum_{i=1}^{k+1}D\Big(V_i\| \sigma^{\otimes m}\Big)-m(k+1)(r-\delta)\Big)_+ + 30(k+1)^{3/4}\log(|B|^{m}\|\sigma^{\otimes m}\|_{\min})+\log(4).
    \end{align}
 To lower bound the above fidelity in \eqref{equ:fieva}, we use  the following inequality proved in \cite[Lemma 31]{Li2024-oper} : 
\begin{align}\label{eq:fid-KL}
    F(\rho, \sigma )\ge e^{-D(\tau\|\rho)}\cdot e^{-D(\tau\| \sigma)}\,, \quad \forall \rho, \sigma, \tau \in \cS(\cH). 
\end{align}
We set in \eqref{eq:fid-KL}, $\rho = \cE_{\sigma^{\otimes m}}^{\otimes (k+1)}\left(W^{\otimes m(k+1)}_{x^{m(k+1)}}\right)$, $\sigma = \cE_{\sigma^{\otimes m}}^{\otimes (k+1)}(\tilde{\rho})$ and $\tau =  V_1\otimes V_2\otimes \dots \otimes V_{k+1}$, where $V_i \in \mathcal{S}_m^{\sigma}, ~\forall i \in [k+1]$.  
The first relative entropy in \eqref{eq:fid-KL} can be expressed by the additivity property of the relative entropy:
\begin{align} 
  D\left(\bigotimes_{i=1}^{k+1} V_i\big\| \cE_{\sigma^{\otimes m}}^{\otimes (k+1)}\left(W^{\otimes m(k+1)}_{x^{m(k+1)}}\right)\right)
 &= \sum_{i=1}^{k+1} D\left( V_i\big\| \cE_{\sigma^{\otimes m}}^{}\left(W^{\otimes m}_{x^m(i)}\right)\right), \label{eq:D1}
\end{align}
where $(x^m(j))_{i} = x_{m(j-1)+i}$ for $i\in [m]$ and $j\in [k+1]$.  The second relative entropy in \eqref{eq:fid-KL} can be  bounded by the data processing inequality as follows, recall that $[V_i, \sigma^{\otimes m}]=0$ for $i\in [k+1]$:
\begin{align}
  D\left( \bigotimes_{i=1}^{k+1} V_i\big\| \cE_{\sigma^{\otimes m}}^{\otimes (k+1)}(\tilde{\rho})\right)=D\left( \cE_{\sigma^{\otimes m}}^{\otimes (k+1)}\left(\bigotimes_{i=1}^{k+1} V_i \right)\big\| \cE_{\sigma^{\otimes m}}^{\otimes (k+1)}\left(\tilde{\rho}\right)\right)  &\le D\left(\bigotimes_{i=1}^{k+1} V_i\big\| \tilde{\rho}\right).\label{eq:D21}
\end{align}
By Lemma \ref{lem:UB-rel-ent-V-tW}, we have an upper bound \eqref{eq:app-of-lemma} on the latter relative entropy:
\begin{align}
        D\Big( \textstyle\bigotimes_{i=1}^{k+1} V_i \big\|\tilde{\rho}\Big)\le \Big(\textstyle\sum_{i=1}^{k+1}D\Big(V_i\| \sigma^{\otimes m}\Big)-m(k+1)(r-\delta)\Big)_+ + 30(k+1)^{3/4}\log(|B|^{m}\|\sigma^{\otimes m}\|_{\min})+\log(4). \label{eq:D22}
    \end{align}
For $\sigma \in \cS_{\eta}(B)$ we have that $\|\sigma^{\otimes m}\|_{\min}\ge \left(\frac{\eta}{1+\eta|B|}\right)^{m}$. 
Hence we continue from \eqref{equ:fieva} and use \eqref{eq:fid-KL}, \eqref{eq:D1}, \eqref{eq:D21} and \eqref{eq:D22} to obtain 
\begin{align}
& 1-\epsilon^{\rm{NS}}(W^{\otimes n}, e^{nr})   \\
&\geq \frac{1}{2(m+1)^{(k+1)|B|}}\gamma_n^2 \exp\left(-30(k+1)^{3/4}m\log (|B|\cdot (\eta^{-1}+|B|)) -\log(4) \right) \\
&\quad \cdot \inf_{x^{m(k+1)}}\sup_{\sigma \in \cS_\eta(B)} \sup_{\{V_i \in \mathcal{S}_m^\sigma\}_{i=1}^{k+1}} \exp\left(\!-\!\sum_{i=1}^{k+1} D\left(\!V_i\big\| \cE_{\sigma^{\otimes m}}^{}\left(W^{\otimes m}_{x^m(i)}\right)\!\right)\!-\!\left(\sum_{i=1}^{k+1}  D\left( V_i\big\| \sigma^{\otimes m}\right)\!-\!m(k+1)(r-\delta)\!\right)_+\right) \\
&\overset{(a)}{=} \frac{1}{2(m+1)^{(k+1)|B|}}\gamma_n^2 \exp\left(-30(k+1)^{3/4}m\log (|B|\cdot (\eta^{-1}+|B|)) -\log(4) \right) \\
&\quad \cdot \inf_{x^{m(k+1)}}\sup_{\sigma \in \cS_\eta(B)} \sup_{\{V_i \in \mathcal{S}_m^\sigma\}_{i=1}^{k+1}} \inf_{0\le s\le 1}\exp\left(\!-\!\sum_{i=1}^{k+1} D\left(\!V_i\big\|\cE_{\sigma^{\otimes m}}\left(\!W^{\otimes m}_{x^m(i)}\!\right)\!\right)\!-\!s\sum_{i=1}^{k+1}  D\left(\!V_i\big\|\sigma^{\otimes m}\!\right)\!+\!sm(k+1)(r-\delta)\!\right)
\\
&\overset{(b)}{=} \frac{1}{2(m+1)^{(k+1)|B|}}\gamma_n^2 \exp\left(-30(k+1)^{3/4}m\log (|B|\cdot (\eta^{-1}+|B|)) -\log(4) \right) \\
&\quad \cdot \inf_{x^{m(k+1)}}\sup_{\sigma \in \cS_\eta(B)} \inf_{0\le s\le 1}\sup_{\{V_i \in \mathcal{S}_m^\sigma\}_{i=1}^{k+1}} \exp\left(\!-\!\sum_{i=1}^{k+1} D\left(\!V_i\big\|\cE_{\sigma^{\otimes m}}\left(\!W^{\otimes m}_{x^m(i)}\!\right)\!\right)\!-\!s\sum_{i=1}^{k+1}  D\left(\!V_i\big\|\sigma^{\otimes m}\!\right)\!+\!sm(k+1)(r-\delta)\!\right)
 \label{equ:inter-pre}
\end{align}  
where in $(a)$ we used $(f)_+ = \sup_{0\le s\le 1} s\cdot f$ for $f\in \mathbb{R}$ and in $(b)$ we used Sion's minimax theorem \cite{sion1958general}: the objective function inside the exponential function is linear in $s$ and concave in $\{V_i\}_{i\in [k+1]}$, and the sets $[0,1]$ and $\left(\mathcal{S}_m^\sigma\right)^{k+1}$ are convex and compact. 

Making the change of variable $s=\frac{1-\alpha}{\alpha}$, and using the variational expression of the R\'enyi divergence (see Proposition \ref{prop:mainpro}\,$(iii)$),  we get for all $i\in [k+1]$

\begin{align}
  &  \sup_{\{V_i \in \mathcal{S}_m^\sigma\}_{i=1}^{k+1}} \exp\left(-\sum_{i=1}^{k+1} D\left( V_i\big\| \cE_{\sigma^{\otimes m}}^{}\left(W^{\otimes m}_{x^m(i)}\right)\right)-\frac{1-\alpha}{\alpha}\sum_{i=1}^{k+1}  D\left( V_i\big\| \sigma^{\otimes m}\right) + \frac{1-\alpha}{\alpha}m(k+1)(r-\delta)  \right)
    \\&= \sup_{\{V_i \in \mathcal{S}_m^\sigma\}_{i=1}^{k+1}} \prod_{i=1}^{k+1}\exp\left(- D\left( V_i\big\| \cE_{\sigma^{\otimes m}}^{}\left(W^{\otimes m}_{x^m(i)}\right)\right)-\frac{1-\alpha}{\alpha}  D\left( V_i\big\| \sigma^{\otimes m}\right) + \frac{1-\alpha}{\alpha}m(r-\delta)  \right)
    \\&= \prod_{i=1}^{k+1}\sup_{V_i \in \mathcal{S}_m^\sigma} \exp\left(- D\left( V_i\big\| \cE_{\sigma^{\otimes m}}^{}\left(W^{\otimes m}_{x^m(i)}\right)\right)-\frac{1-\alpha}{\alpha}  D\left( V_i\big\| \sigma^{\otimes m}\right) + \frac{1-\alpha}{\alpha}m(r-\delta)  \right)
    \\&= \prod_{i=1}^{k+1}\exp \left(-\frac{1-\alpha}{\alpha}\left(D_{\alpha}\left(\cE_{\sigma^{\otimes m}}^{}\left(W^{\otimes m}_{x^m(i)}\right) \big\|\sigma^{\otimes m} \right) -m(r-\delta)\right) \right)
\end{align}
where we used in the last equality Proposition \ref{prop:mainpro}\,$(iii)$ and the fact  that $\cE_{\sigma^{\otimes m}}^{}\left(W^{\otimes m}_{x^m(i)}\right)$ commutes with $\sigma^{\otimes m}$. Hence we get from \eqref{equ:inter-pre}
\begin{align}
& 1-\epsilon^{\rm{NS}}(W^{\otimes n}, e^{nr})   \\
&\geq \frac{1}{8(m+1)^{(k+1)|B|}}\gamma_n^2 \exp\left(-30(k+1)^{3/4}m\log (|B|\cdot (\eta^{-1}+|B|)) \right) \\
&\quad \cdot \inf_{x^{m(k+1)}}\sup_{\sigma \in \cS_\eta(B)} \inf_{\frac{1}{2}\le \alpha\le 1} \prod_{i=1}^{k+1} \exp \left(-\frac{1-\alpha}{\alpha}\left(D_{\alpha}\left(\cE_{\sigma^{\otimes m}}^{}\left(W^{\otimes m}_{x^m(i)}\right) \big\|\sigma^{\otimes m} \right) -m(r-\delta)\right) \right) \\
&\overset{(a)}{\geq} \frac{1}{8(m+1)^{(k+1)|B|}}\gamma_n^2 \exp\left(-30(k+1)^{3/4}m\log (|B|\cdot (\eta^{-1}+|B|)) \right) \\
&\quad \cdot \inf_{x^{m(k+1)}}\sup_{\sigma \in \cS_\eta(B)} \inf_{\frac{1}{2}\le \alpha\le 1} \prod_{i=1}^{k+1} \exp \left(-\frac{1-\alpha}{\alpha}\left( \widetilde{D}_{\alpha}\left(W^{\otimes m}_{x^m(i)} \big\|\sigma^{\otimes m} \right) -m(r-\delta)\right) \right) \\
&\overset{(b)}{\geq} \frac{1}{8(m+1)^{(k+1)|B|}}\gamma_n^2 \exp\left(-30(k+1)^{3/4}m\log (|B|\cdot (\eta^{-1}+|B|)) \right) \\
&\quad \cdot \inf_{t \in \mathcal{P}(\mathcal{X})}\sup_{\sigma \in \cS_\eta(B)} \inf_{\frac{1}{2}\le \alpha\le 1} \exp \left(-m(k+1)\frac{1-\alpha}{\alpha}\left(\mathbb{E}_{x\sim t} \widetilde{D}_{\alpha}\left(W_x \big\|\sigma\right) -(r-\delta)\right) \right), \label{equ:inter}
\end{align}  
where in $(a)$ we use the data processing inequality of the sandwiched R\'enyi divergence (see Proposition \ref{prop:mainpro}\,$(iv)$) and in $(b)$ we use $ \widetilde{D}_{\alpha}\left(W^{\otimes m}_{x^m(i)} \big\|\sigma^{\otimes m} \right) = m\mathbb{E}_{x\sim t} \widetilde{D}_{\alpha}\left(W_x \big\|\sigma\right)$ where $t$ is the type of $x^m(i)$, the identity $\inf_{x_1,\dots, x_{k+1}}\sup_y \prod_{i=1}^{k+1}f(x_i,y)  = \inf_x\sup_y f(x,y)^{k+1}$ and further relax the infimum over types to an infimum over probability distributions.

For any $\sigma \in \mathcal{S}(B)$, $x \in \mathcal{X}$ and $\alpha\in[\frac{1}{2},1]$, we have by Proposition \ref{prop:mainpro}\,$(ii)$ 
\begin{equation}
\widetilde{D}_{\alpha}\left(W_x \middle\|\frac{\sigma+\eta I_B}{1+\eta|B|}\right)
\leq \widetilde{D}_{\alpha}\left(W_x \middle\|\frac{\sigma}{1+\eta|B|}\right)=\widetilde{D}_{\alpha}\left(W_x \big\| \sigma\right)+\log (1+\eta|B|).
\end{equation}
Hence, we can proceed to lower bound \eqref{equ:inter} as
\begin{align}
 &1-\epsilon^{\rm{NS}}(W^{\otimes n}, e^{nr})   \\
&\geq \frac{1}{8(m+1)^{(k+1)|B|}}\gamma_n^2 \exp\left(-30(k+1)^{3/4}m\log (|B|\cdot (\eta^{-1}+|B|)) \right) \left(1+\eta|B|\right)^{-m(k+1)} \\
&\quad \cdot \inf_{t \in \mathcal{P}(\mathcal{X})}\sup_{\sigma \in \mathcal{S}(B)} \inf_{\frac{1}{2}\le \alpha\le 1} \exp \left(-m(k+1)\frac{1-\alpha}{\alpha}\left(\mathbb{E}_{x\sim t} \widetilde{D}_{\alpha}\left(W_x \big\|\sigma\right) -(r-\delta)\right) \right) \\
&\overset{(a)}{=}\frac{1}{8(m+1)^{(k+1)|B|}}\gamma_n^2 \exp\left(-30(k+1)^{3/4}m\log (|B|\cdot (\eta^{-1}+|B|)) \right) \left(1+\eta|B|\right)^{-m(k+1)} \\
&\quad \cdot \inf_{t \in \mathcal{P}(\mathcal{X})}\sup_{\sigma \in \mathcal{S}(B)} \inf_{0\le s \le 1} \exp \left(-m(k+1)s\left(\mathbb{E}_{x\sim t} \widetilde{D}_{\frac{1}{1+s}}\left(W_x \big\|\sigma\right) -(r-\delta)\right) \right) \\
&\overset{(b)}{=}\frac{1}{8(m+1)^{(k+1)|B|}}\gamma_n^2 \exp\left(-30(k+1)^{3/4}m\log (|B|\cdot (\eta^{-1}+|B|)) \right) \left(1+\eta|B|\right)^{-m(k+1)} \\
&\quad \cdot \inf_{t \in \mathcal{P}(\mathcal{X})}\inf_{0\le s\le 1}\sup_{\sigma \in \mathcal{S}(B)}  \exp \left(-m(k+1)s\left(\mathbb{E}_{x\sim t} \widetilde{D}_{\frac{1}{1+s}}\left(W_x \big\|\sigma\right) -(r-\delta)\right) \right) \\
&\overset{(c)}{=}\frac{1}{8(m+1)^{(k+1)|B|}} \gamma_n^2 \exp\left(-30(k+1)^{3/4}m\log (|B|\cdot (\eta^{-1}+|B|)) \right) \left(1+\eta|B|\right)^{-m(k+1)} \\
&\quad \cdot \inf_{t \in \mathcal{P}(\mathcal{X})}\inf_{\frac{1}{2}\le \alpha\le 1}\sup_{\sigma \in \mathcal{S}(B)}  \exp \left(-m(k+1)\frac{1-\alpha}{\alpha}\left(\mathbb{E}_{x\sim t} \widetilde{D}_{\alpha}\left(W_x \big\|\sigma\right) -(r-\delta)\right) \right), \label{equ:inter2}
\end{align}
where in $(a)$ and $(c)$ we use the change of variable $s=\frac{1-\alpha}{\alpha}$ and in  $(b)$ we use  Sion's minimax theorem \cite{sion1958general}: the function $(\sigma,s) \mapsto s\left(\mathbb{E}_{x\sim t} \widetilde{D}_{\frac{1}{1+s}}\left(W_x \big\|\sigma\right) -(r-\delta)\right)$ is convex in $\sigma$ and concave in $s$~\cite[Corollary B.2]{MosonyiOgawa2017strong} and the domains $[0,1]$ and $\cS(B)$ are convex and compact.

Starting from \eqref{equ:inter2} and choosing $\delta=\frac{r}{k+1}$, $\eta= \frac{1}{n}$, 
$k= \lfloor\frac{n^{4/5}}{|B|^{4/5}}\rfloor-1$, and $m=\lfloor\frac{n}{k}\rfloor
$ 
we  deduce \eqref{eq:CQ:NS:Ach:nonAsymp}.

Moreover, from \eqref{equ:inter2}, we get
\begin{equation}
\label{equ:last1}
\lim_{n \rightarrow \infty} -\frac{1}{n} \log \big( 1-\epsilon^{\rm{NS}}(W^{\otimes n}, e^{nr}) \big)   \leq \sup_{\frac{1}{2}\leq \alpha \leq 1} \frac{1-\alpha}{\alpha}\left(\sup_{t \in \mathcal{P}(\mathcal{X})} \inf_{\sigma\in \mathcal{S}(B)} \mathbb{E}_{x\sim t}\widetilde{D}_\alpha(W_x\|\sigma)-r+\delta\right)+|B|\frac{\log(m+1)}{m}.
\end{equation}
Noticing that \eqref{equ:last1} holds for any $\delta>0$ and $m \in \mathbb{N}$ , by letting $m \rightarrow \infty$ then $\delta \rightarrow 0$, we get \eqref{eq:CQ:NS:Ach:Asymp}
\begin{equation}
\lim_{n \rightarrow \infty} -\frac{1}{n} \log \big( 1-\epsilon^{\rm{NS}}(W^{\otimes n}, e^{nr}) \big)   \leq \sup_{\frac{1}{2}\leq \alpha \leq 1} \frac{1-\alpha}{\alpha}\left(\widetilde{I}_\alpha(W)-r\right).
\end{equation}

\end{proof}

\begin{lemma} \label{lem:UB-rel-ent-V-tW}
Let $\omega\in \cS_+(\cH)$ and $\{V_i\}_{i=1}^k \in \cS(\cH)^k $ such that $[V_i, \omega]=0$ for all $i\in [k]$ and $|\cH|\ge 3$. 
Let  $s>0$ and  $n,k\in \mathbb{N}$ such that  $k\ge 56$. 
  Denote $V^{(k)} = \bigotimes_{i=1}^k V_i$. Let $C=  2\log^2(|\cH|)+ \log^2(\|\omega\|_{\min})+4$, 
\begin{align}
    \xi &= \left(\frac{1}{k}\sum_{i=1}^k D(V_i\| \omega)-s\right)_+, \quad \delta = \max\left\{\frac{C^{1/4}\log^{1/2}(|\cH|)}{k^{1/4}}, \frac{C^{1/3}\log^{1/3}\left(\|\omega\|_{\min}^{-1}\right)}{k^{1/3}}\right\},
    \\\Pi &= \left\{V^{(k)} \mle e^{ks+k \xi + k\delta } \omega^{\otimes k}\right\}\,,\quad 
    \beta = \frac{e^{ks}-1}{\tr{e^{ks}\omega^{\otimes k} -e^{-k\xi-k\delta}V^{(k)}}_+ }\in (0,1)\,,
    \\ \widetilde{W}^k &= \Pi \left(\beta e^{-k\xi-k\delta} V^{(k)}  +(1-\beta) e^{ks} \omega^{\otimes k}\right)\Pi +\Pi^c e^{ks} \omega^{\otimes k} \Pi^c.
\end{align}
We have that $\widetilde{W}^k \in \cS(\cH^{\otimes k})$,   $\widetilde{W}^k \mle e^{ks} \omega^{\otimes k}$ and 
    \begin{align}
        D\big( V^{(k)} \big\|\widetilde{W}^k\big)\le \left(\sum_{i=1}^k D(V_i\| \omega)-ks\right)_++  30k^{3/4}\log(|\cH|\|\omega\|_{\min}^{-1})+\log(4).
    \end{align}
\end{lemma}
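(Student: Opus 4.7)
The plan is to reduce to classical probability and then apply a Chebyshev-type tail estimate. Since each $V_i$ commutes with $\omega$, the operators $V^{(k)}$, $\omega^{\otimes k}$, $\Pi$, and $\widetilde{W}^k$ are simultaneously diagonal in a common product basis $\{\ket{e}\}$. Write $v_e, w_e$ for the eigenvalues of $V^{(k)}$ and $\omega^{\otimes k}$ in this basis, and set $Z_e := \log(v_e/w_e)$. Under the probability law $e\sim v_e$, $Z$ decomposes as a sum of $k$ independent coordinates with $\mathbb{E}[Z]=\sum_i D(V_i\|\omega)$ and $\var(Z)\le kC$ by the relative-entropy variance bound (Lemma~\ref{lem:bound-var}).

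I would first verify the basic properties of $\widetilde{W}^k$. The denominator of $\beta$ satisfies $\tr{e^{ks}\omega^{\otimes k}-e^{-k\xi-k\delta}V^{(k)}}_+\ge e^{ks}-e^{-k\xi-k\delta}\ge e^{ks}-1$, so $\beta\in(0,1]$. A direct computation then gives $\tr{\widetilde{W}^k} = e^{ks}-\beta\cdot(\text{denom})=1$, and $\widetilde{W}^k \mge 0$ as a sum of PSD operators since $\beta\in[0,1]$. On $\Pi$, the bound $v_e\le e^{ks+k\xi+k\delta}w_e$ implies $\beta e^{-k\xi-k\delta}v_e+(1-\beta)e^{ks}w_e\le e^{ks}w_e$, yielding $\widetilde{W}^k\mle e^{ks}\omega^{\otimes k}$ on $\Pi$; on $\Pi^c$ it is immediate.

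The heart of the argument is the following. On $\Pi$, $\widetilde{W}^k_e = \beta a_e + (1-\beta)b_e$ with $a_e := e^{-k\xi-k\delta}v_e$ and $b_e := e^{ks}w_e$ is a convex combination, hence $\widetilde{W}^k_e \ge \min(a_e,b_e)$. This gives $\log(v_e/\widetilde{W}^k_e) \le \max(k\xi+k\delta,\,Z_e-ks)$, which collapses to $k\xi+k\delta$ because $e\in\Pi$ forces $Z_e\le ks+k\xi+k\delta$. On $\Pi^c$, $\widetilde{W}^k_e = b_e$ and so $\log(v_e/\widetilde{W}^k_e) = Z_e-ks$. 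Summing contributions,
\begin{equation*}
D\big(V^{(k)}\big\|\widetilde{W}^k\big)\;\le\;k\xi+k\delta+\mathbb{E}_{V^{(k)}}\big[Z\,\mathbf{1}_{\Pi^c}\big].
\end{equation*}

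By construction $ks+k\xi\ge\mathbb{E}[Z]$, so $\Pi^c\subseteq\{Z>\mathbb{E}[Z]+k\delta\}$ and Chebyshev yields $\Pr[\Pi^c]\le C/(k\delta^2)$; combined with the pointwise bound $Z_e\le k\log\|\omega\|_{\min}^{-1}$ this gives $\mathbb{E}[Z\mathbf{1}_{\Pi^c}]\le C\log\|\omega\|_{\min}^{-1}/\delta^2$. The two branches of $\delta=\max(\delta_1,\delta_2)$ are calibrated to balance these contributions: $k\delta_1\lesssim k^{3/4}\log(|\cH|\|\omega\|_{\min}^{-1})$ controls the deterministic cost, while $\delta_2$ ensures $\mathbb{E}[Z\mathbf{1}_{\Pi^c}]\lesssim k^{2/3}\log(|\cH|\|\omega\|_{\min}^{-1})$, both fitting comfortably inside the allotted $30k^{3/4}\log(|\cH|\|\omega\|_{\min}^{-1})+\log(4)$ budget. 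The main delicate point is avoiding a $-\log\beta$ factor that would blow up as $s\to 0$: the convex-combination lower bound $\widetilde{W}^k_e\ge\min(a_e,b_e)$ is precisely what achieves this, and once it is in place the remainder is a routine Chebyshev tail computation.
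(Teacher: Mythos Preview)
Your argument is correct and, in fact, cleaner than the paper's. The overall architecture is the same: both proofs verify that $\widetilde{W}^k$ is a state dominated by $e^{ks}\omega^{\otimes k}$, both exploit that on $\Pi$ one has $b_e\ge a_e$ so that $\widetilde{W}^k_e\ge a_e=e^{-k\xi-k\delta}v_e$, and both finish with the Chebyshev bound $\Pr[\Pi^c]\le C/(k\delta^2)$ combined with the pointwise estimate $Z_e\le k\log\|\omega\|_{\min}^{-1}$. Where you diverge is in how the entropy difference is handled: the paper bounds $\tr{V^{(k)}\log V^{(k)}}-\tr{\Pi V^{(k)}\Pi\log(\Pi V^{(k)}\Pi)}$ via the Fannes--Audenaert continuity inequality together with Fuchs--van de Graaf, incurring an extra term $k\sqrt{\Pr[\Pi^c]}\log|\cH|-\log\Pr[\Pi]+\log 2$. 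You instead work pointwise in the common eigenbasis (which commutativity permits) and directly obtain $D(V^{(k)}\|\widetilde{W}^k)\le k\xi+k\delta+\mathbb{E}[Z\mathbf{1}_{\Pi^c}]$, a strictly tighter intermediate bound that makes the Fannes--Audenaert machinery unnecessary. Your final estimate $k\xi+2k\max(\delta_1,\delta_2)$ therefore fits comfortably inside the paper's $k\xi+3k\max(\delta_1,\delta_2)+\log 4$ budget, and the stated conclusion follows.
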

\begin{proof}
$\widetilde{W}^k \in \cS(\cH^{\otimes k})$ follows from $V^{(k)},\, \omega\mge 0$ and $\beta\in (0,1)$. Moreover we have from the definition  $\Pi = \{V^{(k)} \mle e^{ks+k \xi + k\delta } \omega^{\otimes k}\}$ and $[\Pi, \omega^{\otimes k}]=0$:
\begin{align}
    \widetilde{W}^k &= \Pi (\beta e^{-k\xi-k\delta} V^{(k)}  +(1-\beta) e^{ks} \omega^{\otimes k})\Pi +\Pi^c e^{ks} \omega^{\otimes k} \Pi^c
    \mle \Pi e^{ks} \omega^{\otimes k}\Pi +\Pi^c e^{ks} \omega^{\otimes k} \Pi^c
    = e^{ks} \omega^{\otimes k}.
\end{align}
Using the inequality $\Pi  e^{ks} \omega^{\otimes k}\Pi\mge \Pi  e^{-k\xi-k\delta} V^{(k)} \Pi$ and $[V^{(k)},\omega^{\otimes k}]=0$, we have that:
\begin{align}
    \tr{V^{(k)}\log \widetilde{W}^k}&= \tr{V^{(k)}\log\left( \Pi (\beta e^{-k\xi-k\delta} V^{(k)}  +(1-\beta) e^{ks} \omega^{\otimes k})\Pi +\Pi^c e^{ks} \omega^{\otimes k} \Pi^c\right)}
    \\&= \tr{\Pi V^{(k)}\Pi \log\left( \Pi (\beta e^{-k\xi-k\delta} V^{(k)}  +(1-\beta) e^{ks} \omega^{\otimes k})\Pi \right)}
    \\&\quad + \tr{\Pi^cV^{(k)}\Pi^c\log\left(\Pi^c e^{ks} \omega^{\otimes k} \Pi^c\right)}
    \\&\ge  \tr{\Pi V^{(k)}\Pi \log\left( \Pi e^{-k\xi-k\delta} V^{(k)}  \Pi \right)}
     + \tr{\Pi^cV^{(k)}\Pi^c\log\left(\Pi^c e^{ks} \omega^{\otimes k} \Pi^c\right)}
     \\&=  \tr{\Pi V^{(k)}\Pi \log\left( \Pi  V^{(k)}  \Pi \right)} -(k\xi+k\delta) \cdot \tr{V^{(k)}\Pi}
    \\&\quad + \tr{\Pi^cV^{(k)}\Pi^c\log\left(\Pi^c  \omega^{\otimes k} \Pi^c\right)}+ks \cdot \tr{V^{(k)}\Pi^c}.\label{eq:LB-on-V-W}
\end{align}
By the continuity bound of Fannes \cite{Fannes1973Dec}, Audenaert \cite{Audenaert2007Jun}, and Petz \cite[Theorem 3.8]{petz2007quantum} we have that for the states $V^{(k)}$ and   $ \zeta^k = \frac{1}{\tr{V^{(k)}\Pi}}\Pi V^{(k)}\Pi$ on $\cH^{\otimes k}$:
\begin{align}\label{eq:continuity}
    \tr{V^{(k)}\log V^{(k)}} - \tr{\zeta^k\log \zeta^k} &\le \frac{1}{2}\big\|V^{(k)} - \zeta^k\big\|_1\log(|\cH|^k-1) +\log(2).
\end{align}
By Fuchs–van de Graaf inequality \cite{fuchs1999cryptographic}, we have 
    \begin{align}\label{eq:FdG}
      \frac{1}{2}\big\|V^{(k)} - \zeta^k\big\|_1\le \sqrt{1-F\left(V^{(k)}, \zeta^k\right)} = \sqrt{1-\tr{V^{(k)} \Pi} } = \sqrt{\tr{V^{(k)} \Pi^c}}.
    \end{align}
On the other hand, 
Since $\omega \in \cS_+(\cH)$ we have that $\Pi^c \omega^{\otimes k}\Pi^c\mge \left(\|\omega\|_{\min}\right)^k \Pi^c$ thus 
\begin{align}\label{eq:sigma-eta}
 \tr{\Pi^cV^{(k)}\Pi^c\log\left(\Pi^c  \omega^{\otimes k} \Pi^c\right)}\ge k\log\left(\|\omega\|_{\min}\right) \tr{V^{(k)}\Pi^c}.
\end{align}
Grouping everything together we get 
\begin{align}
     D\left(V^{(k)} \big\| \widetilde{W}^k \right)&=\tr{V^{(k)}\log V^{(k)}} - \tr{V^{(k)}\log \widetilde{W}^k}
     \\&\overset{(a)}{\le }\tr{\zeta^k\log \zeta^k} +\frac{1}{2}\left\|V^{(k)} - \zeta^k\right\|_1\log(|\cH|^k-1) +\log(2)
     \\& \quad - \tr{\Pi V^{(k)}\Pi \log\left( \Pi  V^{(k)}  \Pi \right)} +(k\xi+k\delta) \cdot \tr{V^{(k)}\Pi}
    \\&\quad - \tr{\Pi^cV^{(k)}\Pi^c\log\left(\Pi^c  \omega^{\otimes k} \Pi^c\right)}-ks \cdot \tr{V^{(k)}\Pi^c}
    \\&\overset{(b)}{\le }\frac{1}{\tr{V^{(k)}\Pi}}\tr{\Pi V^{(k)}\Pi\log \Pi V^{(k)}\Pi} -\log\tr{V^{(k)}\Pi}
    \\&\quad +k\sqrt{\tr{V^{(k)} \Pi^c}}\log(|\cH|) +\log(2)
     \\& \quad - \tr{\Pi V^{(k)}\Pi \log\left( \Pi  V^{(k)}  \Pi \right)} +(k\xi+k\delta) \cdot \tr{V^{(k)}\Pi}
    \\&\quad - \tr{\Pi^cV^{(k)}\Pi^c\log\left(\Pi^c  \omega^{\otimes k} \Pi^c\right)}-ks \cdot \tr{V^{(k)}\Pi^c}
    \\&\overset{(c)}{\le }\frac{\tr{V^{(k)}\Pi^c}}{\tr{V^{(k)}\Pi}}\tr{\Pi V^{(k)}\Pi\log \Pi V^{(k)}\Pi} -\log\tr{V^{(k)}\Pi}
    \\&\quad +k\sqrt{\tr{V^{(k)} \Pi^c}}\log(|\cH|) +\log(2)+k\xi +k\delta +k\log\left(\|\omega\|_{\min}^{-1}\right)\cdot \tr{V^{(k)}\Pi^c }
      \\&\overset{(d)}{\le }k\xi +k\delta+k\sqrt{\tr{V^{(k)} \Pi^c}}\log(|\cH|)   -\log\tr{V^{(k)}\Pi}
      +\log(2) 
  \\&\quad +k\log\left(\|\omega\|_{\min}^{-1}\right)\cdot \tr{V^{(k)}\Pi^c}, \label{eq:UB-D(V|W)}
\end{align}
\sloppy where in $(a)$ we use \eqref{eq:LB-on-V-W} and  \eqref{eq:continuity};
 in $(b)$ we use \eqref{eq:FdG} and the definition $ \zeta^k = \frac{1}{\tr{V^{(k)}\Pi}}\Pi V^{(k)}\Pi$; in $(c)$ we rearrange terms and use \eqref{eq:sigma-eta} and bound  $-ks \cdot \tr{V^{(k)}\Pi^c}\le 0$; in $(d) $ we bound $\frac{\tr{V^{(k)}\Pi^c}}{\tr{V^{(k)}\Pi}}\tr{\Pi V^{(k)}\Pi\log \Pi V^{(k)}\Pi}\le 0$ as $\Pi V^{(k)}\Pi\mle I$. To further bound \eqref{eq:UB-D(V|W)}, we need to control $\tr{V^{(k)} \Pi^c}$. To this end, we use  Chebyshev inequality \cite[Lemma 14]{Cheng2017Dec} 
 \begin{align}
    \tr{V^{(k)} \Pi^c}&= \tr{V^{(k)} \left\{ V^{(k)} \mge e^{ks+k\xi+k\delta} \omega^{\otimes k}\right\}} 
    \\&\le \frac{1}{\left[ks+k\xi+k\delta - D\left(V^{(k)}\big\| \omega^{\otimes k}\right)\right]^2}\var\left(V^{(k)}\big\| \omega^{\otimes k}\right).\label{eq:Chebyshev-quantum}
\end{align}
Observe that since $V^{(k)} = \bigotimes_{i=1}^k V_i$, we have that: 
\begin{align}
    D\left(V^{(k)}\big\| \omega^{\otimes n}\right) &= \sum_{i=1}^k D(V_i\| \omega)\;
  \text{ and }\;\var\left(V^{(k)}\big\| \omega^{\otimes n}\right)=\sum_{i=1}^k\var(V_i\| \omega),
\end{align}
so $ks+k\xi +k\delta- D\left(V^{(k)}\big\| \omega^{\otimes k}\right) = k\delta+(\sum_{i=1}^k D(V_i\| \omega)-ks)_+ -(\sum_{i=1}^k D(V_i\| \omega)-ks)\ge k\delta $. Hence from \eqref{eq:Chebyshev-quantum} we get
\begin{align}
    \tr{V^{(k)} \Pi^c}&\le  \frac{1}{\left[ks+k\xi+k\delta - D\left(V^{(k)}\big\| \omega^{\otimes k}\right)\right]^2}\var\left(V^{(k)}\big\| \omega^{\otimes k}\right)
    \le \frac{1}{k^2\delta^2} \cdot \sum_{i=1}^k\var(V_i\| \omega)
    \le  \frac{C}{k\delta^2}
\end{align}
where we used $\var(V_i\| \omega)\le 4+ 2\log^2(|\cH|)+ \log^2(\|\omega\|_{\min}) = C$ for all $i\in [k]$ (see Lemma \ref{lem:bound-var}).
Therefore, from \eqref{eq:UB-D(V|W)}
\begin{align}\label{eq:UB-D(V|W)-2}
 D\left(V^{(k)} \big\| \widetilde{W}^k\right)&\le  k\xi+ k\delta+ \frac{\sqrt{k}\sqrt{C}}{\delta}\log(|\cH|)-\log\left(1-\frac{C}{k\delta^2}\right) +\log(2)+ \frac{C}{\delta^2} \log\left(\|\omega\|_{\min}^{-1}\right).
\end{align}
Choosing $\delta = \max\left\{\frac{C^{1/4}\log^{1/2}(|\cH|)}{k^{1/4}}, \frac{C^{1/3}\log^{1/3}\left(\|\omega\|_{\min}^{-1}\right)}{k^{1/3}}\right\}$, we get from the upper bound \eqref{eq:UB-D(V|W)-2} for $k=\Omega(1)$ 
\begin{align}
     D\left(V^{(k)} \big\| \widetilde{W}^k\right)&\le  k\xi + k\delta+ \frac{\sqrt{k}\sqrt{C}}{\delta}\log(|\cH|)-\log\left(1-\frac{C}{k\delta^2}\right) +\log(2)+ \frac{C}{\delta^2} \log\left(\|\omega\|_{\min}^{-1}\right)
     \\&\overset{(a)}{\le} k\xi + 3k\max\left\{\frac{C^{1/4}\log^{1/2}(|\cH|)}{k^{1/4}}, \frac{C^{1/3}\log^{1/3}\left(\|\omega\|_{\min}^{-1}\right)}{k^{1/3}}\right\} +\log(4) \label{eq:UB-D(V|W)-3}
     \\&\overset{(b)}{\le} k\xi + 3k^{3/4}C^{1/4}\log^{1/2}(|\cH|)+3 k^{2/3}C^{1/3}\log^{1/3}\left(\|\omega\|_{\min}^{-1}\right) +\log(4)
     \\&\overset{(c)}{\le} k\xi + 6k^{3/4}\log(|\cH|)+3k^{3/4}\log^{1/2}(|\cH|)\log^{1/2}\left(\|\omega\|_{\min}^{-1}\right)+12k^{3/4}\log^{1/2}(|\cH|) 
     \\&\quad +3 k^{2/3}\log\left(\|\omega\|_{\min}^{-1}\right) +6  k^{2/3}\log^{2/3}(|\cH|)\log^{1/3}\left(\|\omega\|_{\min}^{-1}\right) +12 k^{2/3}\log^{1/3}\left(\|\omega\|_{\min}^{-1}\right) +\log(4)
      \\&\overset{(d)}{\le} \left(\sum_{i=1}^k D(V_i\| \omega)-ks\right)_++  30k^{3/4}\log(|\cH|\|\omega\|_{\min}^{-1})+\log(4).
\end{align}
where in $(a)$ we bounded $-\log\left(1-\frac{C}{k\delta^2}\right)\le \log(2)$ for our choice of $\delta$, $k\ge 56$ and $|\cH|\ge 3$; in $(b)$, we used $\max\{x,y\}\le x+y$ for positive real numbers $x$ and $y$; in $(c)$ we use $C^{\alpha} =(4+ 2\log^2(|\cH|)+ \log^2(\|\omega\|_{\min}))^{\alpha}\le 4+ 2\log^{2\alpha}(|\cH|)+\log^{2\alpha}(\|\omega\|_{\min}) $ for $\alpha\in (0,1)$; in $(d)$, we use the definition of $\xi =  \left(\frac{1}{k}\sum_{i=1}^k D(V_i\| \omega)-s\right)_+$ and bounds such as $\log^{1/2}(x)\log^{1/2}(y) \le\log(xy)$ for real numbers $x\ge 1$ and $y\ge 1$. 
\end{proof}


\subsection{Entanglement-assisted classical-quantum channel simulation}
\label{sec:strongcla}

In this section, we derive the strong converse exponent for shared-randomness-assisted classical channel simulation and, similarly, for entanglement-assisted classical-quantum channel simulation. We begin by establishing a relation between the optimal channel simulation errors with non-signaling assistance and those with shared randomness or entanglement assistance (see Lemma~\ref{lem:rounding}). This relation implies that the strong converse exponents for shared-randomness-assisted (resp.\ entanglement-assisted) classical (resp.\ classical-quantum) channel simulation are the same as those for non-signaling-assisted channel simulation. Hence, the strong converse exponent for shared-randomness assisted classical channel simulation (and likewise for entanglement-assisted classical-quantum channel simulation) follows naturally from the previous results for non-signaling assistance.

\begin{lemma}\label{lem:rounding}
Let $W$ be a classical or classical-quantum channel and $M\in \mathbb{N}$. We have that 
     \begin{align}
  1-\epsilon^{\rm{NS}}(W,M)\ge  1-\epsilon^{\cA}(W,M) \ge \frac{1}{2}\left(1-\frac{1}{e}\right)( 1-\epsilon^{\rm{NS}}(W,M))
\end{align}
where $\cA = \rm{SR}$ for classical channels $W$ and $\cA = \rm{EA}$ for classical-quantum channels $W$.
\end{lemma}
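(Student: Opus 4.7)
The first inequality is immediate: every SR (resp.\ EA) simulation with communication cost $M$ is, in particular, a non-signaling simulation with the same cost, since shared randomness and entanglement are both non-signaling resources. Hence the infimum over the smaller class is larger, giving $\epsilon^{\rm{NS}}(W,M)\le \epsilon^{\cA}(W,M)$, i.e.\ $1-\epsilon^{\rm{NS}}(W,M)\ge 1-\epsilon^{\cA}(W,M)$.

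For the second inequality, the plan is to switch from purified distance to fidelity, apply a rounding step, and switch back. Using $P(\rho,\sigma)=\sqrt{1-F(\rho,\sigma)}$ together with the elementary bounds $1-\sqrt{1-x}\ge x/2$ and $1-\sqrt{1-x}\le x$ on $x\in[0,1]$, one obtains the sandwich
\begin{equation}
\tfrac{1}{2}F(\rho,\sigma)\;\le\;1-P(\rho,\sigma)\;\le\;F(\rho,\sigma),
\end{equation}
which, since the simulation errors are defined as worst-case purified distances, lifts to the same sandwich between $1-\epsilon^{\bullet}(W,M)$ and the corresponding worst-case input fidelity.

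The core step is the rounding. Fix a near-optimal NS simulation $\widetilde{W}^{\star}$ achieving $\epsilon^{\rm NS}(W,M)$, paired by the programs \eqref{ns-pur-program-cc} / \eqref{ns-pur-program-cq} with a normalized $\sigma$ satisfying $\widetilde{W}^{\star}_{x}\mle M\sigma$ for every input $x$. From $(\widetilde{W}^{\star},\sigma)$ I would construct an SR (resp.\ EA) scheme $\widetilde{W}'$ whose output on input $x$ admits a convex decomposition
\begin{equation}
\widetilde{W}'_{x}\;=\;p_{x}\,\widetilde{W}^{\star}_{x}+(1-p_{x})\,\omega_{x}\,,\qquad p_{x}\ge 1-\tfrac{1}{e},
\end{equation}
for some residual state/distribution $\omega_{x}$. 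In the classical case this is the Harsha et al.\ rejection-sampling construction: Alice and Bob share $M$ i.i.d.\ draws $(y_i)_{i=1}^{M}\sim q_{Y}$ as common randomness, and Alice sends to Bob the first index $i\in[M]$ for which an independent Bernoulli of bias $\widetilde{W}^{\star}(y_i|x)/(Mq(y_i))\in[0,1]$ fires (sending a default symbol on failure). Each trial succeeds with probability $1/M$, so the acceptance probability is $1-(1-1/M)^{M}\ge 1-1/e$, and conditional on acceptance Bob's output is exactly distributed as $\widetilde{W}^{\star}(\cdot|x)$. In the cq case I would invoke the position-based quantum rejection construction of \cite{berta2024optimality}: Alice and Bob share $M$ purifications of $\sigma$ as entanglement, and Alice implements a coherent analogue of the above accept/reject procedure using the operator bound $\widetilde{W}^{\star}_{x}\mle M\sigma$ in place of the scalar bound $\widetilde{W}^{\star}(y|x)\le Mq(y)$, yielding the same decomposition with the same $1-1/e$ success probability.

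Joint concavity of $F(\rho,\cdot)$ together with $F(W_{x},\widetilde{W}^{\star}_{x})\ge 1-P(W_{x},\widetilde{W}^{\star}_{x})\ge 1-\epsilon^{\rm NS}(W,M)$ (using $F\ge 1-P$ since $P\le 1$) gives, uniformly in $x$,
\begin{equation}
F(W_{x},\widetilde{W}'_{x})\;\ge\;p_{x}\,F(W_{x},\widetilde{W}^{\star}_{x})\;\ge\;(1-\tfrac{1}{e})\,(1-\epsilon^{\rm NS}(W,M)).
\end{equation}
Applying the left half of the fidelity sandwich to $\widetilde{W}'$ and taking the infimum over $x$ yields
\begin{equation}
1-\epsilon^{\cA}(W,M)\;\ge\;\tfrac{1}{2}\inf_{x}F(W_{x},\widetilde{W}'_{x})\;\ge\;\tfrac{1}{2}(1-\tfrac{1}{e})(1-\epsilon^{\rm NS}(W,M)).
\end{equation}
The main obstacle is the cq rounding: realizing the convex decomposition $\widetilde{W}'_{x}=p_{x}\widetilde{W}^{\star}_{x}+(1-p_{x})\omega_{x}$ with a uniform $p_{x}\ge 1-1/e$ from only a $\log M$-bit one-way classical channel and shared entanglement requires converting the operator inequality $\widetilde{W}^{\star}_{x}\mle M\sigma$ into a coherent, position-based rejection procedure. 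This is precisely the non-trivial content of the construction in \cite{berta2024optimality}, while the classical case reduces to the standard rejection-sampling analysis.
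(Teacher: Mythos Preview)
Your proof is correct and follows essentially the same route as the paper's. The paper invokes the rounding guarantee of \cite{berta2024optimality} in the operator-inequality form $\widetilde{W}^{\cA}_{x}\mge(1-1/e)\,\widetilde{W}^{\rm NS}_{x}$ and then uses operator monotonicity of the fidelity in its second argument, whereas you phrase the same guarantee as a convex decomposition and use concavity of the fidelity; these are equivalent, and the subsequent fidelity/purified-distance sandwich $1-\sqrt{1-z}\in[z/2,\,z]$ is identical.
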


\begin{proof}
We focus in the proof on the classical-quantum setting, the classical setting being similar. 
From the rounding result of \cite{berta2024optimality} we have the existence of an  entanglement  assisted strategy $\widetilde{W}^{\rm{EA}} = \{\widetilde{W}^{\rm{EA}}_x\}_{x\in \cX}$ of size $M$ that simulates $W=\{W_x\}_{x\in \cX}$ such that for all $x\in \cX$ we have that:
	\begin{align}\label{eq:meta-EA}
		\widetilde{W}^{\rm{EA}}_x &\mge\left(1-\frac{1}{e}\right)	\widetilde{W}^{\rm{NS}}_x
	\end{align}
where $\widetilde{W}^{\rm{NS}}=\{\widetilde{W}^{\rm{NS}}_x\}_{x\in \cX}$ is a non-signaling strategy of size $M$ satisfying (see \eqref{ns-pur-program-cq}):
\begin{align}
    1-\epsilon^{\rm{NS}}(W,M) = \inf_{x\in \cX} 1-\sqrt{ 1- F\left(W_x,  \widetilde{W}^{\rm{NS}}_{x}  \right)}.
\end{align}
We have the following chain of (in)equalities 
\begin{align}
      1-\epsilon^{\rm{EA}}(W,M)  &\ge  \inf_{x\in \cX} 1-\sqrt{ 1- F\left(W_x,  \widetilde{W}^{\rm{EA}}_{x}  \right)}
      \\&\overset{(a)}{\ge} \inf_{x\in \cX} \frac{1}{2} F\left(W_x,  \widetilde{W}^{\rm{EA}}_{x}  \right)
      \\&\overset{(b)}{\ge}\frac{1}{2}\left(1-\frac{1}{e}\right) \inf_{x\in \cX}  F\left(W_x,  \widetilde{W}^{\rm{NS}}_{x}  \right)
        \\&\overset{(c)}{\ge} \frac{1}{2}\left(1-\frac{1}{e}\right) \inf_{x\in \cX}  1-\sqrt{ 1- F\left(W_x,  \widetilde{W}^{\rm{NS}}_{x}  \right)}
        \\&=  \frac{1}{2}\left(1-\frac{1}{e}\right)( 1-\epsilon^{\rm{NS}}(W,M)),
\end{align}
where in $(a)$ we use $1-\sqrt{1-z}\ge \frac{z}{2}$, in $(b)$ we use \eqref{eq:meta-EA} and in $(c)$ we use $z\ge 1-\sqrt{1-z}$.
\end{proof}

In the classical setting, from Lemma \ref{lem:rounding} we have that:
\begin{align}
  1-\epsilon^{\rm{NS}}(W^{\otimes n},e^{nr})\ge  1-\epsilon^{\rm{SR}}(W^{\otimes n},e^{nr}) \ge \frac{1}{2}\left(1-\frac{1}{e}\right)( 1-\epsilon^{\rm{NS}}(W^{\otimes n},e^{nr})).
\end{align}
So from Proposition \ref{prop:SCE-cvs-cl} and \ref{pro:in} we deduce that
\begin{corollary}\label{cor:SCE-ach}
    Let $W = \{W_x\}_{x\in \cX}$ be a classical channel and $r>0$.  
There is a constant $A$ and an integer $n_0$ such that for all $n\ge n_0$:   
    \begin{align}
    1-\epsilon^{\rm{SR}}(W^{\otimes n}, e^{nr}) &\ge  e^{-A\sqrt{n}\log(n)}\inf_{\frac{1}{2}\le \alpha \le 1}
         \exp\left(-n\tfrac{1-\alpha}{\alpha}\big(I_{\alpha}(W)-r\big)\right).
\end{align}
Moreover, the error exponent satisfies for $\cA \in \{\rm{SR}, \rm{EA}, \rm{NS}\}$
\begin{align}
  \lim_{n \rightarrow \infty} -\frac{1}{n} \log \big( 1-\epsilon^{\cA}(W^{\otimes n}, e^{nr}) \big)  = \sup_{\frac{1}{2}\le \alpha \le 1}\frac{1-\alpha}{\alpha}\left({{I}_{\alpha}(W)}-r\right).
\end{align}
\end{corollary}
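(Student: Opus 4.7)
The plan is to combine the rounding machinery of Lemma \ref{lem:rounding} with the non-signaling results already established in this section: the non-asymptotic achievability bound from Corollary \ref{cor:CC:NS:NonAsymp:Ach}, the converse bound from Proposition \ref{prop:SCE-cvs-cl}, and the matching asymptotic achievability from Proposition \ref{prop:CC:NS:Asymp:Ach}. The key observation is that the rounding factor $\tfrac{1}{2}(1-1/e)$ is a dimension- and $n$-independent constant, so the passage from the non-signaling bound to the shared-randomness bound costs only a constant in front of $1-\epsilon(W^{\otimes n},e^{nr})$, which is asymptotically negligible once we take $-\tfrac{1}{n}\log(\cdot)$.

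For the non-asymptotic inequality, I would apply Lemma \ref{lem:rounding} directly with $W$ replaced by $W^{\otimes n}$ and $M=e^{nr}$ to get
\begin{equation*}
1-\epsilon^{\rm{SR}}(W^{\otimes n},e^{nr})\;\ge\;\tfrac{1}{2}\bigl(1-\tfrac{1}{e}\bigr)\bigl(1-\epsilon^{\rm{NS}}(W^{\otimes n},e^{nr})\bigr),
\end{equation*}
and then substitute the explicit lower bound of Corollary \ref{cor:CC:NS:NonAsymp:Ach}. The prefactor in Corollary \ref{cor:CC:NS:NonAsymp:Ach} is of the form $\tfrac{1}{8}e^{-|\cY|}e^{-|\cX|\log(n+1)}e^{-8\log(|\cY|^{2}W_{\min}(n+|\cY|))\sqrt{n}}$; absorbing the rounding constant together with the $|\cY|$ and $|\cX|\log(n+1)$ contributions, and using that $\log(|\cY|^{2}W_{\min}(n+|\cY|))=\Theta(\log n)$ as $n\to\infty$, all the non-signaling factors can be majorised by $e^{-A\sqrt{n}\log n}$ for a sufficiently large constant $A=A(|\cX|,|\cY|,W_{\min})$ and all $n\ge n_{0}$. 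This yields the non-asymptotic lower bound of the corollary.

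For the exponent identity with $\cA\in\{\mathrm{SR},\mathrm{EA},\mathrm{NS}\}$, I would argue by sandwiching. Since non-signaling is the most powerful assistance, $\epsilon^{\rm{NS}}\le\epsilon^{\rm{EA}}\le\epsilon^{\rm{SR}}$, hence $1-\epsilon^{\rm{NS}}\ge 1-\epsilon^{\rm{EA}}\ge 1-\epsilon^{\rm{SR}}$. The converse bound of Proposition \ref{prop:SCE-cvs-cl} therefore gives for every $\cA\in\{\mathrm{SR},\mathrm{EA},\mathrm{NS}\}$
\begin{equation*}
\liminf_{n\to\infty}-\tfrac{1}{n}\log\bigl(1-\epsilon^{\cA}(W^{\otimes n},e^{nr})\bigr)\;\ge\;\sup_{\tfrac{1}{2}\le\alpha\le 1}\tfrac{1-\alpha}{\alpha}\bigl(I_{\alpha}(W)-r\bigr).
\end{equation*}
In the opposite direction, Lemma \ref{lem:rounding} and Proposition \ref{prop:CC:NS:Asymp:Ach} give
\begin{equation*}
-\tfrac{1}{n}\log\bigl(1-\epsilon^{\rm{SR}}(W^{\otimes n},e^{nr})\bigr)\;\le\;-\tfrac{1}{n}\log\bigl(1-\epsilon^{\rm{NS}}(W^{\otimes n},e^{nr})\bigr)-\tfrac{1}{n}\log\bigl(\tfrac{1}{2}(1-1/e)\bigr),
\end{equation*}
and the last term vanishes in the limit, so $\limsup_{n\to\infty}-\tfrac{1}{n}\log(1-\epsilon^{\rm{SR}})\le\sup_{\alpha}\tfrac{1-\alpha}{\alpha}(I_{\alpha}(W)-r)$. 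Since $1-\epsilon^{\rm{SR}}\le 1-\epsilon^{\rm{EA}}\le 1-\epsilon^{\rm{NS}}$, the same $\limsup$ bound passes to $\cA\in\{\mathrm{EA},\mathrm{NS}\}$. Combining the two inequalities identifies the limit with the claimed supremum for all three assistance models.

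There is no real technical obstacle at this stage; the heavy lifting was done in Proposition \ref{pro:in} and in the rounding lemma. The only care needed is bookkeeping the directions of the inequalities (we are in the strong-converse regime where the quantity of interest is $1-\epsilon$ rather than $\epsilon$) and checking that the rounding step is indeed lossless at the level of exponents because its constant factor becomes $o(n)$ inside $-\tfrac{1}{n}\log(\cdot)$.
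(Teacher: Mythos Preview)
Your proposal is correct and follows essentially the same approach as the paper: apply Lemma~\ref{lem:rounding} to transfer the non-signaling bounds of Corollary~\ref{cor:CC:NS:NonAsymp:Ach} and Propositions~\ref{prop:SCE-cvs-cl}, \ref{prop:CC:NS:Asymp:Ach} to the shared-randomness (and entanglement-assisted) setting, noting that the constant rounding factor does not affect the exponent. Your explicit sandwiching argument for $\cA\in\{\mathrm{SR},\mathrm{EA},\mathrm{NS}\}$ spells out a detail the paper leaves implicit, but otherwise the arguments coincide.
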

Similarly, in the classical-quantum setting, from Lemma \ref{lem:rounding} we have that:
\begin{align}
  1-\epsilon^{\rm{NS}}(W^{\otimes n},e^{nr})\ge  1-\epsilon^{\rm{EA}}(W^{\otimes n},e^{nr}) \ge \frac{1}{2}\left(1-\frac{1}{e}\right)( 1-\epsilon^{\rm{NS}}(W^{\otimes n},e^{nr})).
\end{align}
So from Propositions \ref{prop:SCE-cvs} and \ref{prop:CQ:NS:Ach} we deduce that

\begin{corollary}\label{cor:SCE-ach-CQ}
    Let $W = \{W_x\}_{x\in \cX}$ be a classical-quantum channel and $r>0$.  The  strong converse exponent satisfies for $\cA \in \{ \rm{EA}, \rm{NS}\}$
\begin{align}
  \lim_{n \rightarrow \infty} -\frac{1}{n} \log \left( 1-\epsilon^{\cA}(W^{\otimes n}, e^{nr}) \right) =  \sup_{\frac{1}{2}\le \alpha \le 1}\frac{1-\alpha}{\alpha}\left({\widetilde{I}_{\alpha}(W)}-r\right).
\end{align}
\end{corollary}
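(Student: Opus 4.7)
The plan is to deduce the corollary directly from three already-established ingredients in the paper: the non-signaling converse bound (Proposition \ref{prop:SCE-cvs}), the non-signaling achievability bound (Proposition \ref{prop:CQ:NS:Ach}), and the non-signaling-to-entanglement rounding inequality (Lemma \ref{lem:rounding}). Since these results together pin down the NS exponent exactly and show that the EA exponent differs from it only by a multiplicative constant, the corollary follows without any new technical ingredient.

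For the case $\cA = \rm{NS}$, I would start by observing that Proposition \ref{prop:SCE-cvs} gives the lower bound
\begin{align*}
\liminf_{n \to \infty} -\tfrac{1}{n}\log\bigl(1-\epsilon^{\rm{NS}}(W^{\otimes n},e^{nr})\bigr) \ge \sup_{\frac{1}{2}\le \alpha \le 1}\tfrac{1-\alpha}{\alpha}\bigl(\widetilde{I}_{\alpha}(W)-r\bigr),
\end{align*}
while Proposition \ref{prop:CQ:NS:Ach} provides the matching asymptotic upper bound in \eqref{eq:CQ:NS:Ach:Asymp}. Combining the two yields equality for $\cA = \rm{NS}$.

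For the case $\cA = \rm{EA}$, I would apply Lemma \ref{lem:rounding} to the channel $W^{\otimes n}$ with communication size $M = e^{nr}$, which gives
\begin{align*}
1 - \epsilon^{\rm{NS}}(W^{\otimes n},e^{nr}) \ \ge\ 1 - \epsilon^{\rm{EA}}(W^{\otimes n},e^{nr}) \ \ge\ \tfrac{1}{2}\bigl(1-\tfrac{1}{e}\bigr)\bigl(1 - \epsilon^{\rm{NS}}(W^{\otimes n},e^{nr})\bigr).
\end{align*}
Taking $-\tfrac{1}{n}\log(\cdot)$ on each side, the universal constant $\tfrac{1}{2}(1-\tfrac{1}{e})$ contributes only an additive $O(1/n)$ term that vanishes as $n \to \infty$. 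Therefore the EA strong-converse exponent equals the NS one, already identified in the previous step, which completes the proof.

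There is essentially no obstacle at this stage: all of the heavy lifting (the auxiliary-channel construction, the Chebyshev concentration, the fidelity–relative-entropy inequality of \cite{Li2024-oper}, the pinching technique, and the Fannes–Audenaert–Petz continuity bound) has been carried out in Proposition \ref{prop:CQ:NS:Ach}, while the non-signaling-to-EA reduction in Lemma \ref{lem:rounding} relies on the randomised rounding of \cite{berta2024optimality}. The corollary is thus a clean asymptotic synthesis in which the $e^{-\cO(n^{4/5}\log n)}$ prefactor from Proposition \ref{prop:CQ:NS:Ach} and the $O(1)$ rounding loss both disappear after dividing by $n$ and taking $n \to \infty$.
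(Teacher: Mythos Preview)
Your proposal is correct and follows essentially the same approach as the paper: the paper's argument for Corollary~\ref{cor:SCE-ach-CQ} is precisely to invoke Lemma~\ref{lem:rounding} (applied to $W^{\otimes n}$ with $M=e^{nr}$) to sandwich the EA quantity between constant multiples of the NS quantity, and then to combine Propositions~\ref{prop:SCE-cvs} and~\ref{prop:CQ:NS:Ach} for the NS exponent. There is nothing to add.
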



\section{Error exponent}\label{sec:EE}

\subsection{Non-signaling assisted classical-quantum channel simulation}\label{sec:EE-NS}

In this section, we consider the error exponent of classical-quantum channel simulation with non-signaling assistance. Our main result is the following. 
\begin{proposition}\label{prop:EE-Ach-P}
   Let $W_{X\rightarrow B}$ be a classical-quantum channel and $d=|B|$.  We have for  all $r > 0$:
\begin{align}
        \epsilon^{\rm{NS}}(W^{\otimes n},e^{nr})&\le \inf_{\alpha\ge 0}\exp\left(-\frac{1}{2} n\alpha \Big(r-\widetilde{I}_{1+\alpha}(W)- \tfrac{\log(2)}{n}- \tfrac{d\log(n+1)}{n} \Big)\right).
\end{align}
Moreover we have 
    \begin{align}
   \lim_{n\rightarrow \infty} -\frac{1}{n}\log \epsilon^{\rm{NS}}(W^{\otimes n},e^{nr})&= \sup_{\alpha\ge 0} \frac{1}{2}\alpha\Big( r- \widetilde{I}_{1+\alpha}(W)\Big).
\end{align}
\end{proposition}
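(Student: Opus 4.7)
The plan is to construct an explicit non-signaling simulation $\widetilde W$ of $W^{\otimes n}$ via pinching and thresholding, and then bound its purified distance to $W^{\otimes n}$ through a Chernoff-type argument on the sandwiched R\'enyi divergence. Working from the program \eqref{ns-pur-program-cq}, it suffices to supply a classical-quantum channel $\widetilde W=\{\widetilde W_{x^n}\}$ and a state $\sigma^n$ on $B^n$ with $\widetilde W_{x^n}\preccurlyeq e^{nr}\sigma^n$ for every $x^n$, and to control $\max_{x^n}P(W_{x^n}^{\otimes n},\widetilde W_{x^n})$. Fix $\alpha\ge 0$, pick $\sigma\in \cS(B)$ attaining $\widetilde I_{1+\alpha}(W)=\inf_\sigma\sup_x\widetilde D_{1+\alpha}(W_x\|\sigma)$, and set $\sigma^n=\sigma^{\otimes n}$. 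Since $\sigma^{\otimes n}$ has at most $v_n\le (n+1)^d$ distinct eigenvalues by the method of types, the pinched operator $\bar W_{x^n}\coloneqq\cE_{\sigma^{\otimes n}}(W_{x^n}^{\otimes n})$ commutes with $\sigma^{\otimes n}$ and satisfies $W_{x^n}^{\otimes n}\preccurlyeq v_n\bar W_{x^n}$ by the pinching inequality.

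With the slightly reduced rate $r'\coloneqq r-\tfrac{\log 2}{n}$, introduce the commuting projection $\Pi_{x^n}\coloneqq\big\{\bar W_{x^n}\preccurlyeq\tfrac{e^{nr'}}{v_n}\sigma^{\otimes n}\big\}$ and set $\widetilde W_{x^n}\coloneqq \Pi_{x^n}W_{x^n}^{\otimes n}\Pi_{x^n}/\operatorname{Tr}[\Pi_{x^n}W_{x^n}^{\otimes n}]$ whenever $\operatorname{Tr}[\Pi_{x^n}W_{x^n}^{\otimes n}]\ge 1/2$ (the remaining case is handled trivially since the claimed bound exceeds a universal constant there). Chaining pinching with the defining property of $\Pi_{x^n}$ yields $\Pi_{x^n}W_{x^n}^{\otimes n}\Pi_{x^n}\preccurlyeq v_n\bar W_{x^n}\Pi_{x^n}\preccurlyeq e^{nr'}\sigma^{\otimes n}$, so after dividing by $\ge 1/2$ we obtain the non-signaling constraint $\widetilde W_{x^n}\preccurlyeq 2e^{nr'}\sigma^{\otimes n}=e^{nr}\sigma^{\otimes n}$. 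The elementary identities $F(\rho,cA)=cF(\rho,A)$ and $F(\rho,\Pi\rho\Pi)=(\operatorname{Tr}[\Pi\rho])^2$ then combine to give the exact formula $F(W_{x^n}^{\otimes n},\widetilde W_{x^n})=\operatorname{Tr}[\Pi_{x^n}W_{x^n}^{\otimes n}]=1-\epsilon_{x^n}$, where $\epsilon_{x^n}\coloneqq\operatorname{Tr}[\Pi_{x^n}^c W_{x^n}^{\otimes n}]$, so $P(W_{x^n}^{\otimes n},\widetilde W_{x^n})=\sqrt{\epsilon_{x^n}}$.

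It remains to bound $\epsilon_{x^n}$ uniformly in $x^n$. Since $\Pi_{x^n}$ commutes with $\sigma^{\otimes n}$, the pinching is idempotent against it and $\epsilon_{x^n}=\operatorname{Tr}[\Pi_{x^n}^c\bar W_{x^n}]$, which is now a Chernoff bound on commuting operators:
\[
\epsilon_{x^n}\le\big(\tfrac{v_n}{e^{nr'}}\big)^\alpha\operatorname{Tr}\!\big[\bar W_{x^n}^{1+\alpha}(\sigma^{\otimes n})^{-\alpha}\big]=v_n^\alpha e^{-n\alpha r'}\exp\!\big(\alpha\widetilde D_{1+\alpha}(\bar W_{x^n}\|\sigma^{\otimes n})\big).
\]
Data processing (Proposition~\ref{prop:mainpro}\,$(iv)$), additivity of $\widetilde D_{1+\alpha}$ on product states, and the choice of $\sigma$ give $\widetilde D_{1+\alpha}(\bar W_{x^n}\|\sigma^{\otimes n})\le\widetilde D_{1+\alpha}(W_{x^n}^{\otimes n}\|\sigma^{\otimes n})=\sum_i\widetilde D_{1+\alpha}(W_{x_i}\|\sigma)\le n\widetilde I_{1+\alpha}(W)$. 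Substituting $v_n\le(n+1)^d$ and $r'=r-\tfrac{\log 2}{n}$, taking square roots, and optimizing over $\alpha\ge 0$ deliver the claimed non-asymptotic bound; the asymptotic identity then follows from this together with the matching converse of \cite{LiYao2021reliable}.

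The main obstacle is the non-commutativity of $W_{x^n}^{\otimes n}$ and $\sigma^{\otimes n}$, which blocks both a direct projection onto the event $\{W_{x^n}^{\otimes n}\preccurlyeq e^{nr}\sigma^{\otimes n}\}$ and a direct Chernoff inequality in terms of $\widetilde D_{1+\alpha}$. Pinching circumvents this at the cost of a multiplicative $v_n^\alpha\le(n+1)^{d\alpha}$ factor, which only contributes the sub-leading $\tfrac{d\log(n+1)}{n}$ correction. A secondary subtlety is that normalizing $\widetilde W_{x^n}$ to a trace-one state while enforcing the tight cost $e^{nr}$ forces the $\tfrac{\log 2}{n}$ rate loss, and this loss is exactly absorbed by the square root in passing from fidelity to purified distance.
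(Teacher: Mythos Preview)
Your approach is essentially the paper's: pinch $W_{x^n}^{\otimes n}$ against $\sigma^{\otimes n}$, project onto the sublevel set where the pinched state is dominated by a multiple of $\sigma^{\otimes n}$, and bound the mass on the complement via a R\'enyi/Chernoff argument. The paper differs only in two minor technical choices. First, instead of normalizing $\Pi_{x^n}W_{x^n}^{\otimes n}\Pi_{x^n}$ by division, it uses the additive construction $\widetilde W_{x^n}=\Pi_{x^n}W_{x^n}^{\otimes n}\Pi_{x^n}+\epsilon_{x^n}\,\sigma^{\otimes n}$ with threshold $\tfrac{M-1}{\nu}$ in the projector; this is always a state satisfying the NS constraint with no case distinction, at the price of $P\le\sqrt{2\epsilon_{x^n}}$ rather than your exact $P=\sqrt{\epsilon_{x^n}}$. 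Second, it bounds $\epsilon_{x^n}$ by applying data processing for $\widetilde D_{1+\alpha}(W_{x^n}^{\otimes n}\|\sigma^{\otimes n})$ directly under the binary measurement $\{\Pi_{x^n},\Pi_{x^n}^c\}$, whereas you use the classical Chernoff bound on the commuting pair $(\bar W_{x^n},\sigma^{\otimes n})$ followed by DPI for the pinching map; both routes give the same estimate.

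Your case distinction has a small gap: when $\operatorname{Tr}[\Pi_{x^n}W_{x^n}^{\otimes n}]<1/2$ for some $x^n$ you only infer that the claimed bound exceeds $1/\sqrt{2}$, not $1$, so ``handled trivially'' does not yet produce a feasible $\widetilde W_{x^n}$ with $P\le$ bound for that $x^n$. Replacing your renormalization by the additive construction above (or equivalently using threshold $\tfrac{M-1}{\nu}$ instead of $\tfrac{M}{2\nu}$) removes the issue without affecting the exponent.
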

\begin{proof}
Recall the non-signaling channel simulation error probability under the purified distance \eqref{ns-pur-program-cq}:
    \begin{align}
         \epsilon^{\rm{NS}}(W,M) &\coloneqq \inf_{\widetilde{W},\, \sigma}\left\{ P\big(W, \widetilde{W}\big) \middle| \widetilde{W} \text{ classical-quantum channel}, \widetilde{W}_x \mle M \sigma  \; \forall x\in \cX, \tr{\sigma}=1\right\}.
    \end{align}
Fix $\sigma \in \cS(B)$ and let $\nu$ be the number of its distinct eigenvalues. Let $\Pi_x=\{ \cE_{\sigma}(W_x) \mge \tfrac{M-1}{\nu} \sigma\}$ and define $\widetilde{W}$ a classical-quantum channel as follows:
\begin{align}\label{eq:construction-smoothing-channel}
    \widetilde{W}_x = \Pi_x^c W_x \Pi_x^c + \tr{ W_x \Pi_x }\sigma.
\end{align}
Clearly $\widetilde{W}_x$ is a quantum state. Moreover, we have that  $W_x\mle \nu \cE_{\sigma}(W_x)$ so 
\begin{align}
    \widetilde{W}_x &\mle \Pi_x^c \nu \cE_{\sigma}(W_x) \Pi_x^c + \tr{ W_x \Pi_x }\sigma
    \mle \Pi_x^c (M-1)\sigma  \Pi_x^c + \Pi_x (M-1)\sigma \Pi_x+\sigma 
    = M\sigma.
\end{align}
On the other hand $ \widetilde{W}_x \mge  \Pi_x^c W_x \Pi_x^c $ so 
\begin{align}
    F\left(W_x, \widetilde{W}_x\right)&= \left(\tr{\sqrt{\sqrt{W_x} \widetilde{W}_x \sqrt{W_x}}}\right)^2
    \\&\ge  \left(\tr{\sqrt{\sqrt{W_x} \Pi_x^c W_x \Pi_x^c \sqrt{W_x}}}\right)^2
    \\&= \left(\tr{\sqrt{W_x} \Pi_x^c \sqrt{W_x} }\right)^2
     \\&= \tr{W_x \Pi_x^c  }^2. 
\end{align}
Therefore 
\begin{align}
 \widetilde{W}_x &\mle M\sigma
  \; \text{ and }\; 1-F\left(W_x, \widetilde{W}_x\right)\le 1-\tr{W_x \Pi_x^c  }^2\le2\,\tr{W_x \Pi_x}. 
\end{align}
\sloppy Using the definition of the projector $\Pi_x=\{ \cE_{\sigma}(W_x) \mge \tfrac{M-1}{\nu} \sigma\}$, we have that $\tr{W_x \Pi_x}= \tr{\cE_{\sigma}(W_x) \Pi_x} \ge \tfrac{M-1}{\nu}\tr{\sigma \Pi_x}$ thus by the data processing inequality of the sandwiched R\'enyi divergence applied with the measurement channel $\cN(\cdot)= \tr{\Pi_x(\cdot)}\proj{0}+ \tr{\Pi_x^c(\cdot)}\proj{1}$ we have that for all $\alpha>0$
\begin{align}
    \widetilde{D}_{1+\alpha}(W_x\|\sigma) &\ge     \widetilde{D}_{1+\alpha}(\cN(W_x)\|\cN(\sigma))
    \\&\ge \frac{1}{\alpha}\log \tr{W_x\Pi_x}^{1+\alpha}\tr{\sigma \Pi_x}^{-\alpha}
    \\&\ge  \frac{1}{\alpha}\log \tr{W_x\Pi_x}^{1+\alpha}(\tfrac{\nu}{M-1}\tr{W_x \Pi_x})^{-\alpha}
    \\&= \frac{1}{\alpha}\log \tr{W_x\Pi_x} + \log(\tfrac{M-1}{\nu}).
\end{align}
Therefore for all $\alpha>0$
\begin{align}
     1-F\left(W_x, \widetilde{W}_x\right)\le 2\,\tr{W_x\Pi_x} \le 2\exp\left(\alpha\big(\widetilde{D}_{1+\alpha}(W_x\|\sigma)-\log(\tfrac{M-1}{\nu})\big)\right).
\end{align}
Now, with $n$ uses of the channel, we  choose $\sigma=\sigma^{\otimes n}$ to be an iid state and construct the channel $\widetilde{W}^n$ as described in \eqref{eq:construction-smoothing-channel} with respect to $W^{\otimes n}$ and $\sigma^{\otimes n}$. 
Since $ \widetilde{W}_{x^n}^{n} \mle M\sigma^{\otimes n} =  e^{nr}\sigma^{\otimes n}$,  $(\widetilde{W}^n, \sigma^{\otimes n})$ is a  feasible point of the NS program \eqref{ns-pur-program-cq}. Note that $\nu(\sigma^{\otimes n}) \le (n+1)^{d}$. Therefore for all $\alpha\ge 0$, for all $\sigma$
\begin{align}
        \epsilon^{\rm{NS}}(W^{\otimes n},e^{nr})&\le  \max_{x^n\in \cX^n} \sqrt{1- F\left(W_{x^n}^{\otimes n} , \widetilde{W}^n_{x^n} \right)} 
          \\&\le  \max_{x^n\in \cX^n}\exp\left( \frac{1}{2}\alpha\left(\widetilde{D}_{1+\alpha}(W^{\otimes n}_{x^n}\|\sigma^{\otimes n})-\log(\tfrac{M-1}{\nu}) \right)\right)
        \\&\le  \max_{x^n\in \cX^n}\exp\left( \frac{1}{2}\alpha\left(\widetilde{D}_{1+\alpha}(W^{\otimes n}_{x^n}\|\sigma^{\otimes n})-\log(e^{nr}-1)+ d\log(n+1) \right)\right)
          \\&\le \sup_{T\in \cT_n(\cX)}\exp\left(\frac{1}{2} \alpha\left( \exs{x\sim T}{n\widetilde{D}_{1+\alpha}(W_{x}\|\sigma)}-nr+ \log(2)+ d\log(n+1) \right)\right)
           \\&\le \sup_{p\in \cP(\cX)}\exp\left(\frac{1}{2} \alpha\left( \exs{x\sim p}{n\widetilde{D}_{1+\alpha}(W_{x}\|\sigma)}-nr+ \log(2)+ d\log(n+1) \right)\right).\label{eq:UB-P-NS}
\end{align}
Since this inequality holds for all $\alpha\ge 0$ and for all $\sigma$ we deduce that 
\begin{align}
        \epsilon^{\rm{NS}}(W^{\otimes n},e^{nr})&\le \inf_{\alpha\ge 0}\inf_{\sigma\in \cS(B)}\sup_{p\in \cP(\cX)}\exp\left(\frac{1}{2} n\alpha \Big( \exs{x\sim p}{\widetilde{D}_{1+\alpha}(W_{x}\|\sigma)}-r+ \tfrac{\log(2)}{n}+ \tfrac{d\log(n+1)}{n} \Big)\right)
        \\&= \inf_{\alpha\ge 0}\exp\left(-\frac{1}{2} n\alpha \Big(r-\widetilde{I}_{1+\alpha}(W)- \tfrac{\log(2)}{n}- \tfrac{d\log(n+1)}{n} \Big)\right).
\end{align}
Finally, this achievability bound along with the converse bound of \cite{LiYao2021reliable} imply
\begin{align}
   \lim_{n\rightarrow \infty } -\frac{1}{n}\log \epsilon^{\rm{NS}}(W^{\otimes n},e^{nr})&= \sup_{\alpha\ge 0} \frac{1}{2}\alpha\Big( r- \widetilde{I}_{1+\alpha}(W)\Big).
\end{align}
\end{proof}

Using a similar proof, we can obtain the error exponent for non-signaling assisted classical channel simulation.

\begin{proposition}\label{EE-Ach-P-cl}
   Let $W$ be a classical channel.  We have for all $r>0$:
\begin{align}
        \epsilon^{\rm{NS}}(W^{\otimes n},e^{nr})&\le \inf_{\alpha\ge 0}\exp\left(-\frac{1}{2} n\alpha \Big(r-{I}_{1+\alpha}(W)- \tfrac{\log(2)}{n} \Big)\right).
\end{align}
Moreover we have
    \begin{align}
   \lim_{n\rightarrow \infty } -\frac{1}{n}\log \epsilon^{\rm{NS}}(W^{\otimes n},e^{nr})= \sup_{\alpha\ge 0} \frac{1}{2}\alpha\Big( r- I_{1+\alpha}(W)\Big).
\end{align}
\end{proposition}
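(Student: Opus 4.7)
The plan is to mirror the proof of Proposition \ref{prop:EE-Ach-P} in the fully classical setting, where commutativity makes several steps collapse. Starting from the non-signaling program \eqref{ns-pur-program-cc}, I would fix $q_Y \in \cP(\cY)$ and construct, for each input $x^n \in \cX^n$, the likelihood-ratio set
\begin{equation}
\Pi_{x^n} \coloneqq \{ y^n \in \cY^n : W^{\otimes n}(y^n|x^n) \ge (M-1)\, q_Y^{\otimes n}(y^n) \},
\end{equation}
and then define the candidate simulator
\begin{equation}
\widetilde{W}(y^n|x^n) \coloneqq W^{\otimes n}(y^n|x^n)\,\mathbf{1}\{y^n \notin \Pi_{x^n}\} + W^{\otimes n}(\Pi_{x^n}|x^n)\, q_Y^{\otimes n}(y^n).
\end{equation}
This is the classical analogue of the smoothed state \eqref{eq:construction-smoothing-channel}; no pinching channel is needed because everything is diagonal, which is precisely what eliminates the extra $d\log(n+1)$ factor in the prefactor.

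Next I would verify the NS constraint: on $\Pi_{x^n}^c$, $\widetilde{W}(y^n|x^n) \le (M-1)q_Y^{\otimes n}(y^n) + q_Y^{\otimes n}(y^n) = M q_Y^{\otimes n}(y^n)$, and on $\Pi_{x^n}$, $\widetilde{W}(y^n|x^n) \le q_Y^{\otimes n}(y^n) \le M q_Y^{\otimes n}(y^n)$. For the fidelity, since $\widetilde{W}(y^n|x^n) \ge W^{\otimes n}(y^n|x^n)\mathbf{1}\{y^n \notin \Pi_{x^n}\}$, I get
\begin{equation}
F(W^{\otimes n}(\cdot|x^n), \widetilde{W}(\cdot|x^n)) \ge \bigl(W^{\otimes n}(\Pi_{x^n}^c | x^n)\bigr)^2,
\end{equation}
hence $1 - F \le 2\, W^{\otimes n}(\Pi_{x^n}|x^n)$.

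The tail probability $W^{\otimes n}(\Pi_{x^n}|x^n)$ is then controlled via the data processing inequality for the classical R\'enyi divergence applied to the measurement $\{y^n \in \Pi_{x^n}, y^n \notin \Pi_{x^n}\}$. Exactly as in the CQ proof, combining $D_{1+\alpha}(W^{\otimes n}(\cdot|x^n) \| q_Y^{\otimes n})$ with the defining inequality of $\Pi_{x^n}$ (which gives $q_Y^{\otimes n}(\Pi_{x^n}) \le \tfrac{1}{M-1} W^{\otimes n}(\Pi_{x^n}|x^n)$) yields
\begin{equation}
W^{\otimes n}(\Pi_{x^n}|x^n) \le \exp\!\Bigl(\alpha\bigl(D_{1+\alpha}(W^{\otimes n}(\cdot|x^n) \| q_Y^{\otimes n}) - \log(M-1)\bigr)\Bigr)
\end{equation}
for every $\alpha \ge 0$.

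Finally, setting $M = e^{nr}$, using $\log(e^{nr}-1) \ge nr - \log 2$, exploiting the tensorization $D_{1+\alpha}(W^{\otimes n}(\cdot|x^n) \| q_Y^{\otimes n}) = \sum_i D_{1+\alpha}(W(\cdot|x_i) \| q_Y)$, and taking a worst-case $x^n$ of type $t$, I would reduce to $\mathbb{E}_{x \sim t} D_{1+\alpha}(W(\cdot|x) \| q_Y)$. Infimizing over $q_Y$ and supremizing over $t \in \cP(\cX)$ produces $I_{1+\alpha}(W)$, giving the stated non-asymptotic bound. The asymptotic statement follows by combining this with the matching converse of \cite{LiYao2021reliable}. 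I do not anticipate any real obstacle since the classical setting avoids the non-commutativity that forced the auxiliary pinching and continuity arguments in the CQ case; the only minor bookkeeping is verifying that the additive $\log(2)/n$ is the only lower-order correction that survives.
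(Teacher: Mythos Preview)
Your proposal is correct and is precisely the classical specialization of the proof of Proposition~\ref{prop:EE-Ach-P} that the paper intends (the paper itself only says ``Using a similar proof\ldots''). You correctly observe that commutativity makes the pinching channel $\cE_\sigma$ superfluous, so the threshold in the indicator set is $M-1$ rather than $(M-1)/\nu$, which is exactly what removes the $\tfrac{d\log(n+1)}{n}$ correction and leaves only $\tfrac{\log 2}{n}$.
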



\subsection{Entanglement-assisted classical-quantum channel simulation} 

In this section, we establish a relation between the optimal error of shared randomness and entanglement-assisted codes with a fixed size and their non-signaling  counterparts~(Lemma~\ref{lem:rounding-EA}). From this relation and the exact characterization of non-signaling error exponent we established in  Section \ref{sec:EE-NS}, we can deduce the 
error exponent for shared randomness  (resp.\  entanglement-assisted) classical (resp.\  classical-quantum) channel simulation. 

\begin{lemma}\label{lem:rounding-EA}
    Let $M, M'\in \mathbb{N}$ such that $M'\ge \log(2) M$ and $W$ be a classical or classical-quantum channel. We have that 
   \begin{align}
    \epsilon^{\cA}(W,M') \le \epsilon^{\rm{NS}}(W,M) +\sqrt{2} \exp\left(-\tfrac{M'}{2M}\right) 
\end{align}
where $\cA = \rm{SR}$ for classical channels $W$ and $\cA = \rm{EA}$ for classical-quantum channels $W$.
\end{lemma}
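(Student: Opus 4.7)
The plan is to extend the single-shot rounding guarantee of Lemma \ref{lem:rounding} (which loses a multiplicative factor $1-1/e$ at size $M$) to an $M'$-sample version that loses only $1-e^{-M'/M}$, and then translate the resulting operator inequality into an additive purified-distance bound. Concretely, I fix an optimal feasible point $(\widetilde{W}^{\rm{NS}},\sigma)$ of the NS program (\eqref{ns-pur-program-cc} or \eqref{ns-pur-program-cq}) of size $M$, so that $\widetilde{W}^{\rm{NS}}_x \mle M\sigma$ for every $x\in\cX$ and $\max_x P(W_x,\widetilde{W}^{\rm{NS}}_x)=\epsilon^{\rm{NS}}(W,M)$. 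I then apply the $M'$-sample version of the rounding construction of \cite{berta2024optimality} with respect to $\sigma$, which is a natural refinement of the single-shot construction underlying Lemma \ref{lem:rounding} based on drawing $M'$ i.i.d.\ messages from $\sigma$ and using them as the shared randomness (resp.\ entanglement). This yields an $\cA$-assisted strategy $\widetilde{W}^{\cA}$ of size $M'$ satisfying
\begin{equation}
    \widetilde{W}^{\cA}_x \mge \Bigl(1-\bigl(1-\tfrac{1}{M}\bigr)^{M'}\Bigr) \widetilde{W}^{\rm{NS}}_x \mge \bigl(1-e^{-M'/M}\bigr)\widetilde{W}^{\rm{NS}}_x, \qquad \forall x\in\cX.
\end{equation}
The hypothesis $M'\ge \log(2)\,M$ is exactly what makes the multiplicative factor at least $1/2$, so the bound is non-trivial.

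The next step is a standard conversion from this operator inequality to a purified-distance estimate. Since $\widetilde{W}^{\rm{NS}}_x$ and $\widetilde{W}^{\cA}_x$ are both quantum states, the positive part of $\widetilde{W}^{\rm{NS}}_x-\widetilde{W}^{\cA}_x$ is dominated by $e^{-M'/M}\widetilde{W}^{\rm{NS}}_x$, hence $\tr{(\widetilde{W}^{\rm{NS}}_x-\widetilde{W}^{\cA}_x)_+}\le e^{-M'/M}$ and therefore $\tfrac{1}{2}\|\widetilde{W}^{\rm{NS}}_x-\widetilde{W}^{\cA}_x\|_1\le e^{-M'/M}$. Combining with the Fuchs--van de Graaf bound $P(\rho,\tau)=\sqrt{1-F(\rho,\tau)}\le \sqrt{\|\rho-\tau\|_1}$ gives
\begin{equation}
    P(\widetilde{W}^{\rm{NS}}_x,\widetilde{W}^{\cA}_x)\le \sqrt{2}\,e^{-M'/(2M)}.
\end{equation}

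The triangle inequality for the purified distance, together with the maximum over $x\in\cX$, finishes the proof:
\begin{equation}
    \epsilon^{\cA}(W,M')\le \max_{x\in\cX}\bigl(P(W_x,\widetilde{W}^{\rm{NS}}_x)+P(\widetilde{W}^{\rm{NS}}_x,\widetilde{W}^{\cA}_x)\bigr)\le \epsilon^{\rm{NS}}(W,M)+\sqrt{2}\,e^{-M'/(2M)}.
\end{equation}
The main technical ingredient is the $M'$-sample refinement of the rounding construction of \cite{berta2024optimality}, which strengthens the single-shot $1-1/e$ guarantee used in Lemma \ref{lem:rounding} to the sharper $1-e^{-M'/M}$ bound; the classical (SR) and classical-quantum (EA) cases are handled uniformly because both the rounding argument and the Fuchs--van de Graaf inequality apply verbatim in either setting.
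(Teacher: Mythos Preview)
Your proof is correct and takes a genuinely different route from the paper. Both arguments start from the same rounding guarantee of \cite{berta2024optimality}, namely $\widetilde{W}^{\cA}_x \mge \bigl[1-(1-\tfrac{1}{M})^{M'}\bigr]\widetilde{W}^{\rm{NS}}_x$, but diverge thereafter. The paper works directly with fidelities: from the operator inequality it deduces $F(W_x,\widetilde{W}^{\cA}_x)\ge [1-(1-\tfrac{1}{M})^{M'}]\,F(W_x,\widetilde{W}^{\rm{NS}}_x)$, then manipulates the squared purified distances using $(1-z)^{-1}\le 1+2z$ for $z\le \tfrac12$ (this is where the hypothesis $M'\ge \log(2)M$ is genuinely used) and finishes with $\sqrt{x+y}\le\sqrt{x}+\sqrt{y}$. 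You instead convert the operator inequality into a trace-distance bound $\tfrac12\|\widetilde{W}^{\rm{NS}}_x-\widetilde{W}^{\cA}_x\|_1\le e^{-M'/M}$, pass to purified distance via Fuchs--van de Graaf, and conclude with the triangle inequality for $P$. Your argument is arguably cleaner, and in fact never uses the hypothesis $M'\ge\log(2)M$---the displayed inequality holds for all $M'$, the hypothesis only ensures the additive term is at most $1$. One minor imprecision: the phrase ``the positive part of $\widetilde{W}^{\rm{NS}}_x-\widetilde{W}^{\cA}_x$ is dominated by $e^{-M'/M}\widetilde{W}^{\rm{NS}}_x$'' suggests an operator inequality $A_+\mle B$, which does not follow from $A\mle B$ in general; what you actually need and use is only the trace bound $\tr{A_+}=\tr{P_+A}\le\tr{P_+B}\le\tr{B}$, which is immediate.
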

\begin{proof}
We focus in the proof on the classical-quantum setting, the classical setting being similar. 
From the rounding result of \cite{berta2024optimality} we have the existence of an entanglement-assisted strategy $\widetilde{W}^{\rm{EA}} = \{\widetilde{W}^{\rm{EA}}_x\}_{x\in \cX}$ of size $M'$ that simulates $W=\{W_x\}_{x\in \cX}$ such that for all $x\in \cX$ we have that:
	\begin{align}\label{eq:meta-EA-M-M'}
		\widetilde{W}^{\rm{EA}}_x &\mge\left[1-\left(1-\frac{1}{M}\right)^{M'}\right]	\widetilde{W}^{\rm{NS}}_x
	\end{align}
where $\widetilde{W}^{\rm{NS}}=\{\widetilde{W}^{\rm{NS}}_x\}_{x\in \cX}$ is a non-signaling strategy satisfying (see \eqref{ns-pur-program-cq}):
\begin{align}
    \epsilon^{\rm{NS}}(W,M) = \max_{x\in \cX}\sqrt{ 1- F\left(W_x,  \widetilde{W}^{\rm{NS}}_{x}  \right)}.
\end{align}
From Inequality \eqref{eq:meta-EA-M-M'} we have that 
\begin{align}
    F\left(W_x,  \widetilde{W}^{\rm{EA}}_{x}  \right) \ge \left[1-\left(1-\frac{1}{M}\right)^{M'}\right] F\left(W_x,  \widetilde{W}^{\rm{NS}}_{x}  \right).
\end{align}
Hence for $M'\ge \log(2) M$ we have that
\begin{align}
    \epsilon^{\rm{NS}}(W,M)^2 &= \max_{x\in \cX} \; 1- F\left(W_x,  \widetilde{W}^{\rm{NS}}_{x}  \right)
    \\&\ge \max_{x\in \cX}\; 1- \left[1-\left(1-\tfrac{1}{M}\right)^{M'}\right]^{-1}F\left(W_x,  \widetilde{W}^{\rm{EA}}_{x}  \right)
     \\&\overset{(a)}{\ge} \; 1- \left[1+2\left(1-\tfrac{1}{M}\right)^{M'}\right]\cdot (1- \epsilon^{\rm{EA}}(W,M')^2)
        \\&\ge  \epsilon^{\rm{EA}}(W,M')^2 -2 \left(1-\tfrac{1}{M}\right)^{M'}
         \\&\ge  \epsilon^{\rm{EA}}(W,M')^2 -2 \exp\left(-\tfrac{M'}{M}\right)
\end{align}
where in $(a)$ we use the inequality $(1-z)^{-1}\le 1+2z$ valid for $z\le \frac{1}{2}$ applied on $z=(1-\frac{1}{M})^{M'}\le \exp(-\frac{M'}{M})$. 
    Using the inequality $\sqrt{x+y}\le \sqrt{x}+\sqrt{y}$ for positive $x$ and $y$, we deduce 
 \begin{align}
    \epsilon^{\rm{EA}}(W,M') \le  \sqrt{\epsilon^{\rm{NS}}(W,M)^2 +2 \exp\left(-\tfrac{M'}{M}\right)} \le\epsilon^{\rm{NS}}(W,M) +\sqrt{2} \exp\left(-\tfrac{M'}{2M}\right) 
\end{align}
\end{proof}
Using Lemma~\ref{lem:rounding-EA}, we are able to find the shared randomness and entanglement-assisted error exponents of channel simulation.
\begin{proposition}\label{prop:EE-Ach-P-EA}
  Let $W$ be a classical or classical-quantum channel. 
    We have for all $r > 0$:
\begin{align}
    \lim_{n\rightarrow \infty } -\frac{1}{n}\log \epsilon^{\cA}(W^{\otimes n},e^{nr})&= \sup_{\alpha\ge 0} \frac{1}{2}\alpha\left(r- {\widetilde{I}_{1+\alpha}(W)}\right)
\end{align}
where $\cA \in \{\rm{SR}, \rm{EA}, \rm{NS}\}$ for classical channels $W$ and $\cA \in \{ \rm{EA}, \rm{NS}\}$  for classical-quantum channels $W$.
\end{proposition}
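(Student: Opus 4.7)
The plan is to sandwich $\epsilon^{\cA}$ between two quantities whose exponents we already know. For the converse direction (upper bound on the exponent), I note that any $\cA \in \{\rm{SR}, \rm{EA}\}$ strategy is in particular a non-signaling strategy, so $\epsilon^{\cA}(W^{\otimes n}, e^{nr}) \ge \epsilon^{\rm{NS}}(W^{\otimes n}, e^{nr})$. Hence
\begin{align}
\limsup_{n\to\infty} -\tfrac{1}{n}\log \epsilon^{\cA}(W^{\otimes n}, e^{nr}) \le \lim_{n\to\infty} -\tfrac{1}{n}\log \epsilon^{\rm{NS}}(W^{\otimes n}, e^{nr}) = \sup_{\alpha\ge 0}\tfrac{1}{2}\alpha\bigl(r-\widetilde{I}_{1+\alpha}(W)\bigr),
\end{align}
where the last equality is Proposition \ref{prop:EE-Ach-P} in the classical-quantum case (respectively Proposition \ref{EE-Ach-P-cl} in the classical case).

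For the achievability direction, I would combine the rounding bound of Lemma \ref{lem:rounding-EA} with the non-signaling achievability bound. Fix a small $\delta>0$ and, for sufficiently large $n$, apply Lemma \ref{lem:rounding-EA} with $M'=e^{nr}$ and $M = \lceil e^{n(r-\delta)}\rceil$, so that the hypothesis $M'\ge \log(2)M$ is satisfied once $e^{n\delta}\ge \log(2)$. This yields
\begin{align}
\epsilon^{\cA}(W^{\otimes n}, e^{nr}) \;\le\; \epsilon^{\rm{NS}}(W^{\otimes n}, e^{n(r-\delta)}) + \sqrt{2}\,\exp\bigl(-\tfrac{1}{2}e^{n\delta}\bigr).
\end{align}
The second term is doubly exponentially small, hence negligible in comparison with any single-exponential quantity. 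Substituting the non-asymptotic non-signaling bound from Proposition \ref{prop:EE-Ach-P} gives
\begin{align}
\epsilon^{\cA}(W^{\otimes n}, e^{nr}) \;\le\; \inf_{\alpha\ge 0}\exp\!\Bigl(-\tfrac{n\alpha}{2}\bigl(r-\delta-\widetilde{I}_{1+\alpha}(W)-o(1)\bigr)\Bigr) + \sqrt{2}\,\exp\bigl(-\tfrac{1}{2}e^{n\delta}\bigr).
\end{align}

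Taking $-\tfrac{1}{n}\log$, letting $n\to\infty$, and then sending $\delta\to 0$ (using that $r\mapsto \sup_{\alpha\ge 0}\tfrac{1}{2}\alpha(r-\widetilde{I}_{1+\alpha}(W))$ is continuous as a supremum of affine functions in $r$), I obtain the matching lower bound
\begin{align}
\liminf_{n\to\infty}-\tfrac{1}{n}\log\epsilon^{\cA}(W^{\otimes n}, e^{nr}) \;\ge\; \sup_{\alpha\ge 0}\tfrac{1}{2}\alpha\bigl(r-\widetilde{I}_{1+\alpha}(W)\bigr).
\end{align}
The only genuine work has already been carried out upstream, in Proposition \ref{prop:EE-Ach-P} and in Lemma \ref{lem:rounding-EA}; the remaining argument is essentially bookkeeping. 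The only subtlety to check is that the $\delta$-shift in $r$ can indeed be absorbed, which is ensured by the doubly-exponential decay of the rounding error relative to the simply-exponential NS error, and by continuity of the exponent in the rate. The classical case is identical after replacing $\widetilde I_{1+\alpha}$ with $I_{1+\alpha}$ and invoking Proposition \ref{EE-Ach-P-cl} instead.
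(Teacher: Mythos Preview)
Your proposal is correct and follows essentially the same route as the paper: both combine the non-asymptotic NS achievability bound (Proposition~\ref{prop:EE-Ach-P}) at a slightly reduced rate $r-\delta$ with the rounding inequality of Lemma~\ref{lem:rounding-EA}, then let $\delta\to 0$; for the converse, your use of the inclusion $\cA\subseteq\mathrm{NS}$ together with Proposition~\ref{prop:EE-Ach-P} is equivalent to the paper's direct appeal to \cite{LiYao2021reliable} (which is what underlies the converse half of Proposition~\ref{prop:EE-Ach-P} anyway). The only cosmetic difference is that you justify the $\delta\to 0$ step via continuity of the exponent in $r$, whereas the paper cites an external lemma and swaps the two infima.
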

\begin{proof}
We focus in the proof on the classical-quantum setting, the classical setting being similar. 
From Proposition \ref{prop:EE-Ach-P}, we have that for all $\delta\ge 0$: 
\begin{align}
       \epsilon^{\rm{NS}}(W^{\otimes n} ,e^{n(r-\delta)})&\le\inf_{\alpha\ge 0}\exp\left(-\frac{1}{2} n\alpha \Big(r-\delta-\widetilde{I}_{1+\alpha}(W)- \tfrac{\log(2)}{n}- \tfrac{d\log(n+1)}{n} \Big)\right).
\end{align}
Hence by Lemma \ref{lem:rounding-EA} we deduce that for all $\delta>0$
\begin{align}
     \epsilon^{\rm{EA}}(W^{\otimes n} ,e^{nr}) \le \inf_{\alpha\ge 0}\exp\left(-\frac{1}{2} n\alpha \Big(r-\delta-\widetilde{I}_{1+\alpha}(W)- \tfrac{\log(2)}{n}- \tfrac{d\log(n+1)}{n} \Big)\right) + \sqrt{2}\exp\left(-\frac{1}{2}e^{n\delta}\right)
\end{align}
Therefore using \cite[Appendix C]{Oufkir2024Oct} we obtain 
\begin{align}
    \limsup_{n\rightarrow \infty} \frac{1}{n}\log \epsilon^{\rm{EA}}(W^{\otimes n} ,e^{nr}) &\le \inf_{\delta>0}\inf_{\alpha\ge 0} -\frac{1}{2} \alpha \Big(r-\delta-\widetilde{I}_{1+\alpha}(W) \Big)
    \\&\le \inf_{\alpha\ge 0} -\frac{1}{2} \alpha \Big(r-\widetilde{I}_{1+\alpha}(W) \Big).
\end{align}
The converse inequality follows from \cite{LiYao2021reliable}.

\end{proof}


\section{Conclusion}\label{sec:conclusion}

In this work, we establish the exact strong converse exponents and error exponents for non-signaling assisted, shared randomness assisted classical channel simulation, as well as non-signaling assisted, entanglement-assisted classical-quantum channel simulation. This is the first complete large deviation analysis for the  classical-quantum channel simulation problem. 
In contrast to previous work~\cite{LiYao2021reliable}, our characterization of the error exponents does not  exhibit a critical rate. Hence, we address the gap in the error exponent in the high-rate regime for classical-quantum channel simulation. We conjecture that there is no critical rate in the error exponent for general quantum channel simulation, and we leave this as an open question for future research. Another interesting open question is to determine the exponents of channel simulation under the diamond distance in the quantum setting.

\section*{Acknowledgment}

AO, YY and MB acknowledge funding by the European Research Council (ERC Grant Agreement No. 948139), MB acknowledges support from the Excellence Cluster - Matter and Light for Quantum Computing (ML4Q).

\printbibliography


\appendix

\section{Lemmas}
\begin{lemma}\label{lem:cl:eps:minimax}
Let $W$ be a classical channel. The non-signaling channel simulation error can be expressed as
\begin{align}
    \epsilon^{\rm{NS}}(W,M) &= 
    \inf_{\widetilde{W},\, q} \left\{ P\big(W, \widetilde{W}\big) \,\middle|\, \widetilde{W} \mathrm{ \;channel}, \widetilde{W}(y|x) \le M q_x  \; \forall x\in \cX,\; \forall y\in \cY,\sum_{x\in \cX} q_x=1 \right\}
    \\&= \sup_{p\in \cP(\cX)}\inf_{\widetilde{W},\, q} \left\{ P\big(W \circ p_X, \widetilde{W} \circ p_X\big) \,\middle|\, \widetilde{W} \mathrm{\; channel}, \widetilde{W}(y|x) \le M q_x  \; \forall x\in \cX,\; \forall y\in \cY,\sum_{x\in \cX} q_x=1 \right\}.
\end{align}
\end{lemma}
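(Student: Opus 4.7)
\textbf{Proof proposal for Lemma \ref{lem:cl:eps:minimax}.} The first equality is simply the definition of $\epsilon^{\rm{NS}}(W,M)$ rewritten via the inequality $\widetilde{W}(y|x) \le M q_x$, so the real content is the minimax identity in the second line. My plan is to pass to the squared purified distance, apply Sion's minimax theorem there, and then take square roots.

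The starting observation is that for classical channels $V_1, V_2 : \cX \to \cY$ and $p_X \in \cP(\cX)$, since the joint distributions $V_1 \circ p_X$ and $V_2 \circ p_X$ share the $\cX$-marginal $p_X$, the fidelity factorizes as
\begin{equation}
F\bigl(V_1 \circ p_X,\, V_2 \circ p_X\bigr) \;=\; \sum_{x \in \cX} p_X(x)\, F\bigl(V_1(\cdot|x),\, V_2(\cdot|x)\bigr),
\end{equation}
so that $P(V_1 \circ p_X, V_2 \circ p_X)^2 = \sum_x p_X(x)\bigl(1 - F(V_1(\cdot|x), V_2(\cdot|x))\bigr)$. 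Taking $p_X = \delta_{x^\star}$ for the $x^\star$ realizing the channel purified distance gives $\sup_{p_X} P(V_1 \circ p_X, V_2 \circ p_X)^2 = P(V_1, V_2)^2$, so the outer problem $\inf_{\widetilde{W}, q} P(W,\widetilde{W})$ equals $\inf_{\widetilde{W}, q}\sup_{p_X} P(W\circ p_X, \widetilde{W}\circ p_X)$.

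Next I would set up Sion's minimax theorem \cite{sion1958general} applied to
\begin{equation}
f(\widetilde{W},q;\,p_X) \;\coloneqq\; P\bigl(W\circ p_X,\, \widetilde{W}\circ p_X\bigr)^2 \;=\; \sum_{x\in \cX} p_X(x)\,\bigl(1 - F(W(\cdot|x),\widetilde{W}(\cdot|x))\bigr).
\end{equation}
The feasible set $\cC_M \coloneqq \{(\widetilde{W},q)\,:\, \widetilde{W}\text{ channel},\, \widetilde{W}(y|x)\le M q_x,\, \sum_x q_x = 1\}$ is cut out by finitely many affine constraints, hence is convex and compact, and $\cP(\cX)$ is convex and compact. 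The function $f$ is linear (thus concave and continuous) in $p_X$ for fixed $(\widetilde{W},q)$. In the other direction, the classical fidelity $F(W(\cdot|x), \cdot)$ is concave on $\cP(\cY)$ (it is the squared Bhattacharyya coefficient, a concave functional), so each summand $1 - F(W(\cdot|x),\widetilde{W}(\cdot|x))$ is convex in $\widetilde{W}$, and thus $f$ is convex (and continuous) in $(\widetilde{W},q)$. Sion's theorem therefore yields
\begin{equation}
\inf_{(\widetilde{W},q)\in \cC_M}\;\sup_{p_X\in \cP(\cX)} f \;=\; \sup_{p_X\in \cP(\cX)}\;\inf_{(\widetilde{W},q)\in \cC_M} f.
\end{equation}

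Finally, since $\sqrt{\cdot}$ is monotone it commutes with both $\sup$ and $\inf$, so taking square roots on both sides of the above minimax identity and using the initial observation gives exactly $\epsilon^{\rm{NS}}(W,M) = \sup_{p_X} \inf_{\widetilde{W},q} P(W\circ p_X, \widetilde{W}\circ p_X)$, as desired. The only non-routine step is verifying the convexity-in-$\widetilde{W}$ hypothesis of Sion via joint concavity of the fidelity; working with $P^2$ rather than $P$ is essential here, because $P$ itself need not be convex in $\widetilde{W}$.
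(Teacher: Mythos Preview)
Your proof contains a genuine error in the fidelity factorization formula. With the paper's convention $F(p,q)=\bigl(\sum_i\sqrt{p_i q_i}\bigr)^2$, the identity
\[
F\bigl(V_1\circ p_X,\,V_2\circ p_X\bigr)\;=\;\sum_{x\in\cX}p_X(x)\,F\bigl(V_1(\cdot|x),\,V_2(\cdot|x)\bigr)
\]
is \emph{false}; what actually factorizes linearly over $x$ is the \emph{root} fidelity, namely
\[
\sqrt{F\bigl(V_1\circ p_X,\,V_2\circ p_X\bigr)}\;=\;\sum_{x\in\cX}p_X(x)\,\sqrt{F\bigl(V_1(\cdot|x),\,V_2(\cdot|x)\bigr)}.
\]
Hence your function $f(\widetilde W,q;p_X)=\sum_x p_X(x)\bigl(1-F(W(\cdot|x),\widetilde W(\cdot|x))\bigr)$ is \emph{not} equal to $P(W\circ p_X,\widetilde W\circ p_X)^2$, and the final step ``taking square roots on both sides'' does not deliver the claimed identity for $P(W\circ p_X,\widetilde W\circ p_X)$. (Your intermediate claim $\sup_{p_X}P^2=P(W,\widetilde W)^2$ happens to be true, but the reasoning you give for it rests on the incorrect factorization.)

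The fix is exactly what the paper does: apply Sion to the root fidelity $g(p,\widetilde W,q)=\sum_{x,y}p(x)\sqrt{W(y|x)}\sqrt{\widetilde W(y|x)}$, which is linear in $p$ and concave in $(\widetilde W,q)$ and genuinely equals $\sqrt{F(W\circ p_X,\widetilde W\circ p_X)}$. Since $P=\sqrt{1-g^2}$ is a strictly decreasing function of $g$, the minimax equality for $g$ transfers immediately to $P$. Your overall strategy (Sion on a surrogate, then pull through a monotone transformation) is the right one; you just need to use $\sqrt{F}$ rather than $F$ as the surrogate.
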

\begin{proof}
The first equality follows from \cite{cao2024channel}. In the following, we prove the second equality. 
Given two channels $W$ and $\widetilde{W}$ and a probability distribution $p$, the objective function 
\begin{align}
    P\big(W \circ p_X, \widetilde{W} \circ p_X\big) &= \sqrt{1-\left(\sum_{x,y} \sqrt{W \circ p_X(x,y)}\cdot \sqrt{\widetilde{W} \circ p_X(x,y)} \right)^2 }
    \\&= \sqrt{1-\left(\sum_{x,y} p(x) \sqrt{W(y|x)}\cdot \sqrt{\widetilde{W}(y|x)} \right)^2 }
    \\&= \sqrt{1-f(p, \widetilde{W}, q)^2 }
\end{align}
where $f(p, \widetilde{W}, q) = \sum_{x,y} p(x) \sqrt{W(y|x)}\cdot \sqrt{\widetilde{W}(y|x)}$ is linear in $p$ and concave in $(\widetilde{W}, q)$. Moreover the set of probability distribution is convex and compact. Finally, the set 
\begin{equation}
    \cS = \left\{ \big(\widetilde{W}, q\big) \,\middle|\, \widetilde{W} \text{ channel}, \widetilde{W}(y|x) \le M q_x  \; \forall x\in \cX,\; \forall y\in \cY,\sum_{x\in \cX} q_x=1 \right\}
\end{equation}
is convex so the conditions of Sion's minimax theorem \cite{sion1958general} are fulfilled and we can swap $p$ and $(\widetilde{W}, q)$ in the following 
    \begin{align}
        \sup_{p\in \cP(\cX)}\inf_{(\widetilde{W},\, q)\in \cS}   P\big(W \circ p_X, \widetilde{W} \circ p_X\big) &= \sqrt{1- \left(\inf_{p\in \cP(\cX)}\sup_{(\widetilde{W},\, q)\in \cS}f(p, \widetilde{W}, q)\right)^2 }
        \\&= \sqrt{1- \left(\sup_{(\widetilde{W},\, q)\in \cS}\inf_{p\in \cP(\cX)}f(p, \widetilde{W}, q)\right)^2 }
         \\&= \sqrt{1- \left(\sup_{(\widetilde{W},\, q)\in \cS}\min_{x\in \cX}\sum_{y} \sqrt{W(y|x)}\cdot \sqrt{\widetilde{W}(y|x)}\right)^2 }
         \\&= \inf_{(\widetilde{W},\, q)\in \cS}P\big(W, \widetilde{W}\big).
    \end{align}
\end{proof}
\begin{lemma}\label{lem:cq:eps:minimax}
Let $W=\{W_x\}_{x\in \cX}$ be a classical-quantum (CQ) channel. We have that 
    \begin{align}
   \epsilon^{\rm{NS}}(W,M) &= 
  \inf_{\widetilde{W},\, \sigma}\left\{ P\big(W, \widetilde{W}\big) \middle| \widetilde{W} \mathrm{ \; CQ\;  channel}, \widetilde{W}_x \mle M \sigma  \quad \forall x\in \cX, \tr{\sigma}=1\right\}, 
    \\ &=\sup_{p\in \cP(\cX)} \inf_{\widetilde{W},\, \sigma}\left\{ P\big(W\circ p_X, \widetilde{W}\circ p_X\big) \middle| \widetilde{W} \mathrm{\;CQ\;channel}, \widetilde{W}_x \mle M \sigma  \; \forall x\in \cX, \tr{\sigma}=1\right\},
\end{align}
\end{lemma}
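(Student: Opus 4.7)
The first equality is the one-shot non-signaling meta-converse of \cite{fang2019quantum}, which I would take as given; the plan then concentrates on the second (minimax) equality by mirroring the argument of Lemma~\ref{lem:cl:eps:minimax} in the classical-quantum setting. The key algebraic input is that for the two classical-quantum states $W\circ p_X=\sum_x p(x)\proj{x}\otimes W_x$ and $\widetilde W\circ p_X=\sum_x p(x)\proj{x}\otimes \widetilde W_x$, which are block-diagonal with the same classical marginal on $\cX$, the joint fidelity factorizes as
\begin{equation}
    \sqrt{F(W\circ p_X,\widetilde W\circ p_X)}=\sum_{x\in \cX} p(x)\sqrt{F(W_x,\widetilde W_x)}\eqqcolon g(p,\widetilde W,\sigma),
\end{equation}
so that $P(W\circ p_X,\widetilde W\circ p_X)=\sqrt{1-g(p,\widetilde W,\sigma)^2}$.

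Next I would apply Sion's minimax theorem to $g$. The function is linear in $p\in \cP(\cX)$, and jointly concave in $(\widetilde W,\sigma)$ because for each $x$ the Uhlmann fidelity $\sqrt{F(W_x,\cdot)}$ is concave (a standard consequence of Uhlmann's theorem, which expresses $\sqrt{F}$ as a supremum of overlaps of purifications, hence as a supremum of linear functionals). The feasibility set
\begin{equation}
\cT\coloneqq\left\{(\widetilde W,\sigma)\,\middle|\,\widetilde W \text{ is a CQ channel},\ \widetilde W_x\mle M\sigma\ \forall x\in \cX,\ \tr{\sigma}=1\right\}
\end{equation}
is convex and compact, and $\cP(\cX)$ is a compact simplex, so the minimax hypotheses hold and I can swap $\inf_{p}$ and $\sup_{\cT}$ of $g$.

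To transfer this swap to the purified distance I would use that $x\mapsto\sqrt{1-x^2}$ is decreasing on $[0,1]$ (so $\sup_{p}\sqrt{1-g^2}=\sqrt{1-(\inf_{p} g)^2}$ and analogously for $\inf_{\cT}$), together with the fact that the infimum of the linear function $p\mapsto g(p,\widetilde W,\sigma)$ over $\cP(\cX)$ equals $\min_{x\in \cX}\sqrt{F(W_x,\widetilde W_x)}$. Chaining these identities gives
\begin{equation}
    \sup_{p\in \cP(\cX)}\inf_{(\widetilde W,\sigma)\in \cT} P(W\circ p_X,\widetilde W\circ p_X)=\inf_{(\widetilde W,\sigma)\in \cT}\max_{x\in \cX}\sqrt{1-F(W_x,\widetilde W_x)}=\inf_{(\widetilde W,\sigma)\in \cT} P(W,\widetilde W),
\end{equation}
which is the claim.

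The only genuine departure from the classical calculation is the concavity of the Uhlmann fidelity in each of its arguments; once that is granted, everything else is an essentially verbatim translation of the proof of Lemma~\ref{lem:cl:eps:minimax}. I do not anticipate further technical obstacles, since in finite dimensions all sets involved are compact and all functionals are continuous.
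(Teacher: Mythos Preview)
Your proposal is correct and follows essentially the same route as the paper: both write $P(W\circ p_X,\widetilde W\circ p_X)=\sqrt{1-f(p,\widetilde W,\sigma)^2}$ with $f(p,\widetilde W,\sigma)=\sum_x p(x)\sqrt{F(W_x,\widetilde W_x)}$, invoke linearity in $p$ and concavity of $\sqrt{F}$ in $\widetilde W$ to apply Sion's minimax theorem over the convex feasible set, and then reduce the inner $\inf_p$ to $\min_x$. The only cosmetic difference is that the paper simply cites \cite{wilde2013quantum} for the concavity of the root fidelity, whereas you sketch a justification via Uhlmann's theorem; your heuristic ``supremum of linear functionals'' is not literally correct (purifications depend nonlinearly on the state), but the concavity fact itself is standard and your overall argument is sound.
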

\begin{proof}The first equality follows from \cite{fang2019quantum}. In the following, we prove the second equality. 
Given two classical-quantum channels $W=\{W_x\}_{x\in \cX}$ and $\widetilde{W}=\{\widetilde{W}_x\}_{x\in \cX}$ and a probability distribution $p$, the objective function 
\begin{align}
    P\big(W\circ p_X, \widetilde{W}\circ p_X\big) &= \sqrt{1-\left(\sum_{x\in \cX} p(x) \sqrt{F\left(W_x,\widetilde{W}_x\right)} \right)^2 }
    \\&= \sqrt{1-f(p, \widetilde{W}, \sigma)^2 }
\end{align}
where $f(p, \widetilde{W}, \sigma) = \sum_{x\in \cX} p(x) \sqrt{F\left(W_x,\widetilde{W}_x\right)}$ is linear in $p$ and concave in $(\widetilde{W}, \sigma)$ \cite{wilde2013quantum}. Moreover the set of probability distribution is convex and compact. Finally, the set 
\begin{equation}
    \cS = \left\{ \big(\widetilde{W}, \sigma\big) \,\middle|\, \widetilde{W} \text{ classical-quantum channel}, \widetilde{W}_x \mle M \sigma  \; \forall x\in \cX, \tr{\sigma} =1 \right\}
\end{equation}
is convex so the conditions of Sion's minimax theorem \cite{sion1958general} are fulfilled and we can swap $p$ and $(\widetilde{W}, \sigma)$ in the following 
    \begin{align}
        \sup_{p\in \cP(\cX)}\inf_{(\widetilde{W},\, \sigma)\in \cS}   P\big(W\circ p_X, \widetilde{W}\circ p_X\big) &= \sqrt{1- \left(\inf_{p\in \cP(\cX)}\sup_{(\widetilde{W},\, \sigma)\in \cS}f(p, \widetilde{W}, \sigma)\right)^2 }
        \\&= \sqrt{1- \left(\sup_{(\widetilde{W},\, \sigma)\in \cS}\inf_{p\in \cP(\cX)}f(p, \widetilde{W}, \sigma)\right)^2 }
         \\&= \sqrt{1- \left(\sup_{(\widetilde{W},\, \sigma)\in \cS}\min_{x\in \cX}\sqrt{F\left(W_x,\widetilde{W}_x\right)} \right)^2 }
         \\&= \inf_{(\widetilde{W},\sigma)\in \cS}P\big(W, \widetilde{W}\big).
    \end{align}
\end{proof}

The following lemma comes from~\cite[Proposition 2.8]{LWD2016strong}.
\begin{lemma}[\cite{LWD2016strong}]
\label{lem:ldw}
Let $\rho_{AB}, \sigma_{AB}  \in \mathcal{S}(AB)$, assuming that $\rho_{A}=\sigma_A$, then for any $\alpha \in (\frac{1}{2},1)$, we have
\begin{equation}
\frac{\alpha}{1-\alpha} \log F(\rho_{AB}, \sigma_{AB}) \leq -\widetilde{I}_{\alpha}(A:B)_\rho+\widetilde{I}_{\beta}(A:B)_\sigma,
\end{equation}
where $\frac{1}{\alpha}+\frac{1}{\beta}=2$.
\end{lemma}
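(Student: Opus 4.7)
The plan is to reduce the inequality to a Schatten–norm Hölder inequality, exploiting the fact that the duality condition $\tfrac{1}{\alpha}+\tfrac{1}{\beta}=2$ is exactly the conjugate relation for the Schatten indices $2\alpha$ and $2\beta$, namely $\tfrac{1}{2\alpha}+\tfrac{1}{2\beta}=1$. First I would pick a minimiser $\omega_B^{\star}\in\cS(B)$ for $\widetilde{I}_{\beta}(A:B)_{\sigma}$, so that $\widetilde{D}_{\beta}(\sigma_{AB}\|\sigma_A\ox\omega_B^{\star})=\widetilde{I}_{\beta}(A:B)_{\sigma}$. Because $\rho_A=\sigma_A$ by hypothesis, the operator $\tau_{AB}:=\rho_A\ox\omega_B^{\star}=\sigma_A\ox\omega_B^{\star}$ is simultaneously a valid reference state for both $\widetilde{D}_{\alpha}(\rho_{AB}\|\cdot)$ and $\widetilde{D}_{\beta}(\sigma_{AB}\|\cdot)$; this is the first place the marginal condition is used.

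Next I would use the dual identity $\tfrac{1-\alpha}{2\alpha}=\tfrac{\beta-1}{2\beta}$, which follows directly from $\tfrac{1}{\alpha}+\tfrac{1}{\beta}=2$. Setting $T:=\tau_{AB}^{(1-\alpha)/(2\alpha)}=\tau_{AB}^{-(1-\beta)/(2\beta)}$, one checks
\begin{equation}
\widetilde{Q}_{\alpha}(\rho_{AB}\|\tau_{AB})=\bigl\|\rho_{AB}^{1/2}T\bigr\|_{2\alpha}^{2\alpha},\qquad \widetilde{Q}_{\beta}(\sigma_{AB}\|\tau_{AB})=\bigl\|T^{-1}\sigma_{AB}^{1/2}\bigr\|_{2\beta}^{2\beta},
\end{equation}
by the cyclicity of the trace. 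The telescoping factorisation $\rho_{AB}^{1/2}\sigma_{AB}^{1/2}=(\rho_{AB}^{1/2}T)(T^{-1}\sigma_{AB}^{1/2})$ now invites Hölder's inequality for Schatten norms with the conjugate pair $(2\alpha,2\beta)$, giving
\begin{equation}
\sqrt{F(\rho_{AB},\sigma_{AB})}=\bigl\|\rho_{AB}^{1/2}\sigma_{AB}^{1/2}\bigr\|_1\le \widetilde{Q}_{\alpha}(\rho_{AB}\|\tau_{AB})^{1/(2\alpha)}\,\widetilde{Q}_{\beta}(\sigma_{AB}\|\tau_{AB})^{1/(2\beta)}.
\end{equation}

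Squaring, taking logarithms and using $(\alpha-1)\widetilde{D}_{\alpha}=\log\widetilde{Q}_{\alpha}$ together with $\tfrac{\alpha-1}{\alpha}=-\tfrac{\beta-1}{\beta}\cdot\tfrac{\alpha-1}{\alpha}\cdot\tfrac{\beta}{\beta-1}$ and the key identity $\tfrac{\alpha}{1-\alpha}=\tfrac{\beta}{\beta-1}$ (again from the duality), one rearranges to
\begin{equation}
\frac{\alpha}{1-\alpha}\log F(\rho_{AB},\sigma_{AB})\le -\widetilde{D}_{\alpha}(\rho_{AB}\|\tau_{AB})+\widetilde{D}_{\beta}(\sigma_{AB}\|\tau_{AB}).
\end{equation}
Finally, since $\tau_{AB}=\rho_A\ox\omega_B^{\star}$ is a candidate in the minimisation defining $\widetilde{I}_{\alpha}(A:B)_{\rho}$, we have $\widetilde{D}_{\alpha}(\rho_{AB}\|\tau_{AB})\ge \widetilde{I}_{\alpha}(A:B)_{\rho}$; and by the choice of $\omega_B^{\star}$, $\widetilde{D}_{\beta}(\sigma_{AB}\|\tau_{AB})=\widetilde{I}_{\beta}(A:B)_{\sigma}$. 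Combining yields the claim.

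The main technical obstacle is the invertibility of $T$: when $\omega_B^{\star}$ is not full rank, $\tau_{AB}^{-(1-\beta)/(2\beta)}$ only makes sense on the support of $\tau_{AB}$. I would handle this by a standard regularisation, replacing $\omega_B^{\star}$ by $\omega_B^{\star}(\eta):=(1-\eta)\omega_B^{\star}+\eta I_B/|B|$, running the argument with $\tau_{AB}(\eta):=\rho_A\ox\omega_B^{\star}(\eta)$, and taking $\eta\to 0^+$. Continuity of $\widetilde{D}_{\alpha}$ and $\widetilde{D}_{\beta}$ in the second argument on the relevant domain, together with monotonicity in $\sigma$ (Proposition~\ref{prop:mainpro}\,$(ii)$) to control the limit, closes the argument. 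One should also check the edge case where the supports of $\rho_{AB}$ or $\sigma_{AB}$ are not contained in $\supp(\tau_{AB})$, in which case the right-hand side is $+\infty$ and the inequality is vacuous; this is consistent with the convention in Definition~\ref{definition:sand}.
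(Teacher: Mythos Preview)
The paper does not supply its own proof of this lemma; it merely cites \cite[Proposition~2.8]{LWD2016strong}. Your argument is correct and is precisely the approach of that reference: insert a common product reference state $\tau_{AB}=\rho_A\ox\omega_B^{\star}$ (using $\rho_A=\sigma_A$), exploit the Schatten--H\"older duality $\tfrac{1}{2\alpha}+\tfrac{1}{2\beta}=1$ together with the identity $\tfrac{1-\alpha}{2\alpha}=\tfrac{\beta-1}{2\beta}$ to telescope $\rho_{AB}^{1/2}\sigma_{AB}^{1/2}$, and then recognise the resulting Schatten norms as $\widetilde{Q}_\alpha$ and $\widetilde{Q}_\beta$. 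The regularisation you sketch for the non--full-rank case is standard; note additionally that since $\sigma_A=\rho_A$, both $\rho_{AB}$ and $\sigma_{AB}$ are supported on $\supp(\rho_A)\ox B$, so one may restrict to that subspace and then $\tau_{AB}(\eta)$ is genuinely invertible there.
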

\begin{lemma}[\cite{Li2024-oper}]
\label{lem:appen1}
Let $\rho, \sigma\in\mc{S}(\mc{H})$, and let $\mc{H}=\bigoplus_{i\in \mc{I}}\mc{H}_i$ decompose into a set of mutually orthogonal subspaces $\{\mc{H}_i\}_{i\in \mc{I}}$. Suppose that $\sigma=\sum\limits_{i \in \mc{I}} \sigma_i$ with $\supp(\sigma_i)\subseteq \mc{H}_i$. Then
\begin{equation}
F\left(\sum_{i \in \mc{I}} \Pi_i \rho \Pi_i, \sigma\right) \leq \sqrt{|\mc{I}|} F(\rho, \sigma),
\end{equation}
where $\Pi_i$ is the projection onto $\mc{H}_i$.
\end{lemma}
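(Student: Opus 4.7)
The plan is to reduce the claim to a trace-norm inequality about the single operator $A \coloneqq \sqrt{\rho}\sqrt{\sigma}$, for which $\sqrt{F(\rho,\sigma)} = \|A\|_1$ by definition. The key structural fact is that $\sigma$ and $\rho' \coloneqq \sum_i \Pi_i\rho\Pi_i$ are both block-diagonal in the decomposition $\mc{H}=\bigoplus_{i\in\mc{I}}\mc{H}_i$, so their square roots split as direct sums. This gives the block-wise identity
\begin{equation*}
\sqrt{F(\rho',\sigma)} \;=\; \|\sqrt{\rho'}\sqrt{\sigma}\|_1 \;=\; \sum_{i\in\mc{I}} \big\|\sqrt{\Pi_i\rho\Pi_i}\,\sqrt{\sigma_i}\big\|_1.
\end{equation*}

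Next I would rewrite each summand in terms of $A$. From $\Pi_i\rho\Pi_i = (\sqrt{\rho}\Pi_i)^\dag(\sqrt{\rho}\Pi_i)$, the polar decomposition of $\sqrt{\rho}\Pi_i$ takes the form $W_i\sqrt{\Pi_i\rho\Pi_i}$ for a partial isometry $W_i$; together with $\Pi_i\sqrt{\sigma_i}=\sqrt{\sigma_i}$, this yields $\|\sqrt{\Pi_i\rho\Pi_i}\sqrt{\sigma_i}\|_1 = \|\sqrt{\rho}\sqrt{\sigma_i}\|_1 = \|A\Pi_i\|_1$. The lemma thus reduces to the norm inequality
\begin{equation*}
\sum_{i\in\mc{I}} \|A\Pi_i\|_1 \;\leq\; \sqrt{|\mc{I}|}\,\|A\|_1,
\end{equation*}
which, after squaring, produces the fidelity bound with constant $|\mc{I}|$ (this is the form actually used in Proposition~\ref{prop:CQ:NS:Ach}, where the multiplicative factor is $2(m+1)^{(k+1)|B|}$; the stated $\sqrt{|\mc{I}|}$ is the natural constant at the level of $\sqrt{F}=\|A\|_1$). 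To prove the displayed inequality, I would pick for each $i$ a unitary $\widetilde{U}_i$ witnessing $\|A\Pi_i\|_1 = \tr{\widetilde{U}_i^\dag A\Pi_i}$ via the polar decomposition of $A\Pi_i$; after absorbing suitable phases, summing gives $\sum_i \|A\Pi_i\|_1 = \tr{BA}$ with $B \coloneqq \sum_i \Pi_i\widetilde{U}_i^\dag$. A direct computation yields $B^\dag B = \sum_i \widetilde{U}_i\Pi_i\widetilde{U}_i^\dag$, a sum of $|\mc{I}|$ orthogonal projections, so $B^\dag B \mle |\mc{I}|\,I$ and $\|B\|_\infty \leq \sqrt{|\mc{I}|}$. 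Hölder's inequality $|\tr{BA}|\leq \|B\|_\infty\|A\|_1$ then closes the argument.

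The main delicacy is the trace-norm identity $\|\sqrt{\Pi_i\rho\Pi_i}\sqrt{\sigma_i}\|_1 = \|A\Pi_i\|_1$ in the second step. The subtlety is that $W_i$ is only a partial isometry, not a unitary, so $\|W_i^\dag X\|_1\leq \|X\|_1$ in general; but a short computation shows it is norm-preserving on the operators at hand because their columns lie in the appropriate subspaces determined by the polar decomposition (using crucially that $\Pi_i\sqrt{\sigma_i}=\sqrt{\sigma_i}$). The remaining ingredients are routine: block decomposition of trace norms, polar-decomposition duality, and the elementary projection-sum bound $\sum_i P_i\mle |\mc{I}|\,I$ applied to the unitary conjugates $\widetilde{U}_i\Pi_i\widetilde{U}_i^\dag$.
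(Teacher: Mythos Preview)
The paper does not supply its own proof of this lemma; it is quoted from \cite{Li2024-oper}, so there is nothing to compare your argument against directly. Your proof is correct: the block decomposition of $\sqrt{\rho'}\sqrt{\sigma}$, the polar-decomposition identity $\|\sqrt{\Pi_i\rho\Pi_i}\sqrt{\sigma_i}\|_1=\|A\Pi_i\|_1$, and the H\"older step via $\|B\|_\infty\le\sqrt{|\mc{I}|}$ all go through as you describe.

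You are also right to flag the constant. Your argument yields
\[
\sqrt{F(\rho',\sigma)}\;\le\;\sqrt{|\mc{I}|}\,\sqrt{F(\rho,\sigma)},\qquad\text{i.e.,}\qquad F(\rho',\sigma)\;\le\;|\mc{I}|\,F(\rho,\sigma),
\]
and this $|\mc{I}|$ bound is exactly what the paper uses when passing from \eqref{equ:fieva-pre} to \eqref{equ:fieva}. The inequality as printed, with $\sqrt{|\mc{I}|}$ at the level of $F$, cannot hold in general: on $\mc{H}=\mathbb{C}^2$ with the computational-basis decomposition, take $\rho=\proj{+}$ and $\sigma=\tfrac{1}{2}I$; then $\rho'=\tfrac{1}{2}I$, so $F(\rho',\sigma)=1$ while $F(\rho,\sigma)=\tfrac{1}{2}$, giving ratio $2=|\mc{I}|>\sqrt{|\mc{I}|}$. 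So your version is both correct and tight, and the stated exponent appears to be a typo.
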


\section{Method of types}
\label{sec-types}

Let $n$ be an integer and  $\cX$ be a finite alphabet of size $|\cX|$. Let $x^n = x_1\cdots x_n$ be an element of $\cX^n$. For $x\in\cX$ we define $n(x|x^n)$ to be the number of occurrences of $x$ in the sequence $x_1\cdots x_n$: 
\begin{equation}
   n(x|x^n) = \sum_{t=1}^n \bid\{x_t=x\}. 
\end{equation}
A type $T$ is a probability distribution on $\cX$ of the form 
\begin{equation}
    T=\Big\{\frac{n_x}{n}\Big\}_{x\in \cX} \text{ where } n_x\in \mathbb{N} \text{ and } \sum_{x\in \cX} n_x=n. \label{eq:type}
\end{equation}
The set of types of alphabet $\cX$ of length $n$ is denoted $\cT_n(\cX)$. It is a finite set of size satisfying 
\begin{equation}
\label{equ:numbertype}
   |\cT_n(\cX)|\le (n+1)^{|\cX|}. 
\end{equation}
This simple bound can be proved using the simple  observation that each $n_x$ in Eq.~\eqref{eq:type} satisfies $n_x \in \{0,1, \dots, n\}$ and thus it has at most $n+1$ possibilities.
\\We say that $x^n = x_1\cdots x_n$ has type $T$ and write $x^n \sim T$ if for all $x\in \cX$ we have $ {n(x| x^n)} = {n}T_x$.
\\A probability distribution $p$ on $\cX^n$ is permutation invariant if for all permutation $\sigma \in \fS_n$, for all $ x_1\cdots x_n \in \cX^n$ we have 
\begin{equation}
    p(x_1\cdots x_n) = p(x_{\sigma_1}\cdots x_{\sigma_n}).
\end{equation}
Let  $T\in \cT_n(\cX)$ be a type and  $p$ be a permutation invariant  probability distribution on $\cX^n$.
Clearly if two sequences $x_{1}^n$ and $y_{1}^n$ have the same type $T$ then $y_{1}^n$ can be obtained by permuting the elements of $x_{1}^n$ so
\begin{equation}
    p(x^n)= p(y_1^n) , \quad \forall x^n\sim T, \; \forall y_1^n\sim T.
\end{equation}
We denote this value by $\frac{\alpha_T}{|T|}$ where $|T|$ is the number of elements in $\cX^n$ of type $T$. So we can write 
\begin{equation}
    p = \sum_{T\in \cT_n(\cX)} \alpha_T \cU_T
\end{equation}
where $\cU_T$ is the uniform probability distribution supported on $T$:
\begin{equation}
    \cU_T(x^n) = \frac{1}{|T|}\bid\{x^n\sim T\}.
\end{equation}
Since $p$ and $\{\cU_T\}_{T\in \cT_n(\cX)}$ are all probability distributions on $\cX^n$, $(\alpha_T)_{T\in \cT_n(\cX)}$ is a probability distribution on $\cT_n(\cX)$. In particular, since we have $|\cT_n(\cX)|\le (n+1)^{|\cX|}$ there is a type $T^\star$ such that 
\begin{equation}
    \alpha_{T^\star} \ge \frac{1}{ (n+1)^{|\cX|}}
\end{equation}
 because otherwise $\sum_{T\in \cT_n(\cX)}\alpha_T< \sum_{T\in \cT_n(\cX)} \frac{1}{ (n+1)^{|\cX|}} = |\cT_n(\cX)|\cdot\frac{1}{ (n+1)^{|\cX|}} \le 1$ which contradicts the fact that $(\alpha_T)_{T\in \cT_n(\cX)}$ is a probability distribution on $\cT_n(\cX)$.
 
 \section{Chebyshev approximations}
\begin{lemma}
\label{lem:cheg}
Let  $W:\mathcal{X} \rightarrow \mathcal{Y}$ be a classical channel and  $V \in \mathcal{S}_W$. For any $x^n \in \mathcal{X}^{n}$ of type $t(x^n)$, we define 
\begin{align}
    \mathcal{G}(x^n,V)\coloneqq \left\{y^n: \log \frac{V^{\otimes n}(y^n|x^n)}{W^{\otimes n}(y^n|x^n)}
\leq n\mathbb{E}_{x\sim  t(x^n)} D(V(\cdot|x)\|W(\cdot|x)) +2\sqrt{n \mathbb{E}_{x\sim  t(x^n)} \var\left(V(\cdot|x)\middle\|W(\cdot|x)\right)}\right\}.
\end{align}
Then, we have
\begin{equation}
\sum_{y^n \in \mathcal{G}(x^n,V)}V^{\otimes n}(y^n|x^n) \geq \frac{3}{4}.
\end{equation}
\end{lemma}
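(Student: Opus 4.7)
The plan is a direct one-sided Chebyshev bound applied to the log-likelihood ratio, viewed as a sum of independent (non-identically distributed) random variables under $V^{\otimes n}(\cdot|x^n)$.

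First, I would let $Y^n = (Y_1,\dots,Y_n)$ be drawn from $V^{\otimes n}(\cdot|x^n)$, so that the coordinates $Y_i$ are independent with $Y_i \sim V(\cdot|x_i)$. Introduce the random variables
\begin{equation}
Z_i \coloneqq \log \frac{V(Y_i|x_i)}{W(Y_i|x_i)}, \qquad i = 1,\dots,n,
\end{equation}
which are well-defined (and finite with probability one) because $V\in\mathcal{S}_W$ ensures $\supp(V(\cdot|x_i))\subseteq \supp(W(\cdot|x_i))$. Then by independence,
\begin{equation}
\sum_{i=1}^{n} Z_i = \log \frac{V^{\otimes n}(Y^n|x^n)}{W^{\otimes n}(Y^n|x^n)}.
\end{equation}

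Next I would compute the first two moments, grouping the sum by the type $t(x^n)$. By definition of the relative entropy and of the variance $\var(p\|q)$ of the log-likelihood ratio,
\begin{equation}
\mathbb{E}[Z_i] = D(V(\cdot|x_i)\|W(\cdot|x_i)), \qquad \var(Z_i) = \var(V(\cdot|x_i)\|W(\cdot|x_i)),
\end{equation}
and hence, using the type decomposition $\frac{1}{n}\sum_{i=1}^n f(x_i) = \mathbb{E}_{x\sim t(x^n)} f(x)$ together with independence,
\begin{align}
\mu_n &\coloneqq \mathbb{E}\left[\sum_{i=1}^n Z_i\right] = n\,\mathbb{E}_{x\sim t(x^n)} D(V(\cdot|x)\|W(\cdot|x)),\\
\sigma_n^2 &\coloneqq \var\left(\sum_{i=1}^n Z_i\right) = n\,\mathbb{E}_{x\sim t(x^n)} \var(V(\cdot|x)\|W(\cdot|x)).
\end{align}

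Finally I would apply the one-sided Chebyshev inequality at threshold $2\sigma_n$: under $V^{\otimes n}(\cdot|x^n)$,
\begin{equation}
\Pr\!\left[\sum_{i=1}^n Z_i - \mu_n \geq 2\sigma_n\right] \leq \frac{\sigma_n^2}{(2\sigma_n)^2} = \frac{1}{4}.
\end{equation}
The complementary event is precisely $Y^n \in \mathcal{G}(x^n,V)$ (by the definitions of $\mu_n$ and $\sigma_n$ above), which yields $V^{\otimes n}(\mathcal{G}(x^n,V)|x^n) \geq 3/4$ as claimed.

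There is no real obstacle here: the only mild subtlety is making sure that the log-likelihood ratio $Z_i$ is finite $V(\cdot|x_i)$-almost surely (which is exactly the role of the hypothesis $V\in \mathcal{S}_W$), and that the $\var(V(\cdot|x)\|W(\cdot|x))$ appearing in the statement of the lemma is indeed the variance of the pointwise log-ratio under $V(\cdot|x)$, which is the standard convention used elsewhere in the paper. Everything else is the textbook Chebyshev argument.
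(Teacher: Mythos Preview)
Your proof is correct and follows essentially the same approach as the paper: both view the log-likelihood ratio under $V^{\otimes n}(\cdot|x^n)$ as a random variable, compute its mean and variance (which you make slightly more explicit by writing it as a sum of independent $Z_i$'s and grouping by type), and then apply Chebyshev's inequality at threshold $2\sigma_n$ to bound the complement by $1/4$.
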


\begin{proof}
We regard $\log\frac{V^{\otimes n}(\cdot|x^n)}{W^{\otimes n}(\cdot|x^n)}$ as a random variable on $\mathcal{Y}^{n}$ and the probability measure on $\mathcal{Y}^{n}$ is $V^{\otimes n}(\cdot|x^n)$. By direct calculation
\begin{align}
    \mathbb{E}_{V^{\otimes n}(\cdot|x^n)}\left[\log\frac{V^{\otimes n}(\cdot|x^n)}{W^{\otimes n}(\cdot|x^n)}\right]&=n\mathbb{E}_{x\sim t(x^n)} D(V(\cdot|x)\|W(\cdot|x)), \\
\var_{V^{\otimes n}(\cdot|x^n)}\left[\log\frac{V^{\otimes n}(\cdot|x^n)}{W^{\otimes n}(\cdot|x^n)}\right]&=n \mathbb{E}_{x\sim t(x^n)} \var(V(\cdot|x)\|W(\cdot|x)).
\end{align}
Hence, by Chebyshev inequality, we have
\begin{align}
    &\sum_{y^n \in \overline{\mathcal{G}(x^n,V)}}V^{\otimes n}(y^n|x^n)
    \\&=V^{\otimes n}(\cdot|x^n)\left(\log \frac{V^{\otimes n}(\cdot|x^n)}{W^{\otimes n}(\cdot|x^n)}
>\mathbb{E}_{V^{\otimes n}(\cdot|x^n)}\left[\log\frac{V^{\otimes n}(\cdot|x^n)}{W^{\otimes n}(\cdot|x^n)}\right] +2\sqrt{\var_{V^{\otimes n}(\cdot|x^n)}\left[\log\frac{V^{\otimes n}(\cdot|x^n)}{W^{\otimes n}(\cdot|x^n)}\right]}\right)      
    \\
&\le \frac{\var_{V^{\otimes n}(\cdot|x^n)}\left[\log\frac{V^{\otimes n}(\cdot|x^n)}{W^{\otimes n}(\cdot|x^n)}\right]}{4\var_{V^{\otimes n}(\cdot|x^n)}\left[\log\frac{V^{\otimes n}(\cdot|x^n)}{W^{\otimes n}(\cdot|x^n)}\right]}=\frac{1}{4}.
\end{align}
\end{proof}
Using a similar argument in the proof of Lemma~\ref{lem:cheg}, we can obtain the following lemma. 
\begin{lemma}
\label{lem:ches}
Let  $W:\mathcal{X} \rightarrow \mathcal{Y}$ be a classical channel,  $V \in \mathcal{S}_W$ and $q_Y\in \cP(\cY)$. For any $x^n \in \mathcal{X}^{n}$ of type $t(x^n)$, we define 
\begin{align}
    \mathcal{S}(x^n,V, q_Y)\coloneqq \left\{y^n: \log \frac{V^{\otimes n}(y^n|x^n)}{q_{Y}^{\otimes n}(y^n)}
\leq n\mathbb{E}_{x\sim  t(x^n)} D(V(\cdot|x)\|q_Y) +2\sqrt{n \mathbb{E}_{x\sim  t(x^n)} \var\left(V(\cdot|x)\middle\|q_Y\right)}\right\}.
\end{align}
Then, we have
\begin{equation}
\sum_{y^n \in \mathcal{S}(x^n,V,q_Y)}V^{\otimes n}(y^n|x^n) \geq \frac{3}{4}.
\end{equation}
\end{lemma}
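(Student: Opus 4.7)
The plan is to mimic the proof of Lemma \ref{lem:cheg} verbatim, replacing the reference measure $W^{\otimes n}(\cdot|x^n)$ by the product measure $q_Y^{\otimes n}$. Concretely, I would view the log-likelihood ratio
\[
Z(y^n)\;\coloneqq\;\log\frac{V^{\otimes n}(y^n|x^n)}{q_Y^{\otimes n}(y^n)}
\]
as a random variable on $\mathcal{Y}^n$ under the law $V^{\otimes n}(\cdot|x^n)$. Note that $Z$ is well-defined and finite on the support of $V^{\otimes n}(\cdot|x^n)$: indeed, if some $y_i$ has $q_Y(y_i)=0$ while $V(y_i|x_i)>0$, then $\mathbb{E}_{x\sim t(x^n)} D(V(\cdot|x)\|q_Y)=+\infty$ and the event $\mathcal{S}(x^n,V,q_Y)$ is the whole sample space, so the claim is trivial.

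Next I would compute the mean and variance of $Z$ using the tensor-product structure. Since $V^{\otimes n}(y^n|x^n)=\prod_{i=1}^n V(y_i|x_i)$ and $q_Y^{\otimes n}(y^n)=\prod_{i=1}^n q_Y(y_i)$, we have $Z(y^n)=\sum_{i=1}^n \log\!\frac{V(y_i|x_i)}{q_Y(y_i)}$, a sum of independent random variables under $V^{\otimes n}(\cdot|x^n)$. Summing over $i$ and regrouping by the type $t(x^n)$ gives
\begin{align}
\mathbb{E}_{V^{\otimes n}(\cdot|x^n)}[Z] &= \sum_{i=1}^n D\big(V(\cdot|x_i)\,\big\|\,q_Y\big) = n\,\mathbb{E}_{x\sim t(x^n)} D(V(\cdot|x)\|q_Y),\\
\var_{V^{\otimes n}(\cdot|x^n)}[Z] &= \sum_{i=1}^n \var\big(V(\cdot|x_i)\,\big\|\,q_Y\big) = n\,\mathbb{E}_{x\sim t(x^n)}\var(V(\cdot|x)\|q_Y),
\end{align}
by independence and the definition of an empirical type.

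Finally I would apply Chebyshev's inequality at $2$ standard deviations: the complement $\overline{\mathcal{S}(x^n,V,q_Y)}$ is exactly the event $\{Z > \mathbb{E}[Z] + 2\sqrt{\var[Z]}\}$, so its probability under $V^{\otimes n}(\cdot|x^n)$ is at most $\var[Z]/(4\var[Z]) = 1/4$. This yields $V^{\otimes n}(\mathcal{S}(x^n,V,q_Y)|x^n)\ge 3/4$, as claimed. There is essentially no obstacle here — the only care needed is to handle the edge case where $\supp(q_Y^{\otimes n})$ may not contain $\supp(V^{\otimes n}(\cdot|x^n))$, which is handled trivially by the convention $D(V(\cdot|x)\|q_Y)=+\infty$ as above.
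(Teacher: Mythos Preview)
Your proposal is correct and follows exactly the approach the paper intends: the paper does not give a separate proof of Lemma~\ref{lem:ches} but simply says to use ``a similar argument in the proof of Lemma~\ref{lem:cheg},'' which is precisely what you do. Your extra care with the support edge case (where $D(V(\cdot|x)\|q_Y)=+\infty$) is a nice addition that the paper leaves implicit.
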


\section{A bound on the variance}

\begin{lemma}\label{lem:bound-var}
    Let $\rho \in \cS(\cH)$ and $\omega \in \cS(\cH)$ such that $[\rho, \omega]=0$ and $\supp(\rho) \subseteq \supp(\omega)$. We have that 
    \begin{align}
    \var(\rho\| \omega) \le 2\log^2(|\cH|)+\log^2(\|\omega\|_{\min})+4,
\end{align}
where $\|\omega\|_{\rm{min}}$ is the minimal non-zero eigenvalue of $\omega$.
\end{lemma}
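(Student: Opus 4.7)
Since $\rho$ and $\omega$ commute, my first step is to diagonalize them in a common eigenbasis $\{\ket{i}\}_{i=1}^d$ with $d = |\cH|$, writing $\rho = \sum_i p_i \proj{i}$ and $\omega = \sum_i q_i \proj{i}$, where $(p_i), (q_i)$ are probability distributions with $q_i \ge \|\omega\|_{\min}$ whenever $p_i > 0$. The quantum variance then reduces to the classical expression
\[
\var(\rho\|\omega) = \sum_i p_i \log^2(p_i/q_i) - D(\rho\|\omega)^2,
\]
so the task becomes an inequality about probability distributions on $d$ atoms.

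Next, I would expand $\log^2(p_i/q_i) = \log^2 p_i - 2\log p_i \log q_i + \log^2 q_i$ and peel off the easy parts. The $-D(\rho\|\omega)^2$ contribution is nonpositive and is dropped. The cross term $-2\sum_i p_i \log p_i \log q_i$ is also nonpositive, because $\log p_i \le 0$ and $\log q_i \le 0$ force $\log p_i \log q_i \ge 0$; this observation is important, since it prevents picking up an extra factor of $2$ in front of $\log^2\|\omega\|_{\min}$ that the stated bound does not allow. Using $q_i \in [\|\omega\|_{\min},1]$, the remaining sum $\sum_i p_i \log^2 q_i \le \log^2 \|\omega\|_{\min}$ is immediate.

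What is left is the purely combinatorial bound $\sum_i p_i \log^2 p_i \le 2\log^2 d + 4$, which is the technical core. I would split the support into $I_1 = \{i : p_i \ge 1/d\}$ and $I_2 = \{i : 0 < p_i < 1/d\}$. On $I_1$ we have $|\log p_i| \le \log d$, so the contribution is at most $\log^2 d$. On $I_2$ I reparametrize $p_i = \tfrac{1}{d}e^{-u_i}$ with $u_i > 0$, so each term equals $\tfrac{1}{d}\,g(u_i)$ where $g(u) = e^{-u}(\log d + u)^2$; since $|I_2| \le d$, the $I_2$-contribution is at most $\max_{u\ge 0} g(u)$. A short one-variable calculus computation shows $\max_{u\ge 0} g(u) = \log^2 d$ when $d \ge e^2$ and $\max_{u\ge 0} g(u) = 4d/e^2 \le 4$ otherwise, so in either case $\sum_i p_i \log^2 p_i \le \log^2 d + \max(\log^2 d, 4) \le 2\log^2 d + 4$. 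Combining with the previous step yields the lemma.

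The main obstacle is precisely the bound $\sum_i p_i \log^2 p_i \le 2\log^2 d + 4$: because $\log^2 p_i$ blows up for small $p_i$, no uniform pointwise estimate works, and one must carefully balance the per-index blow-up against the damping by the weight $p_i$ itself. The truncation at the threshold $1/d$ combined with the extremal analysis of $g(u) = e^{-u}(\log d + u)^2$ is what forces the right-hand side to scale only as $2\log^2 d$, exactly matching the lemma's claim.
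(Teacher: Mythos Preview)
Your argument is correct, and both the reduction to the inequality $\sum_i p_i \log^2 p_i \le 2\log^2 d + 4$ and its proof differ from the paper's route. For the reduction, the paper does not expand the square; instead it splits the sum according to whether $p_i \le q_i$ or $p_i > q_i$ and uses that on the first set $\log^2(p_i/q_i) \le \log^2(1/p_i)$ (since $q_i\le 1$) while on the second set $\log^2(p_i/q_i) \le \log^2(1/q_i)$ (since $p_i\le 1$), arriving at the same two summands $\sum_i p_i\log^2 p_i$ and $\log^2\|\omega\|_{\min}$. Your expansion-plus-sign observation on the cross term is more direct and makes transparent why no factor $2$ appears in front of $\log^2\|\omega\|_{\min}$. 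For the key sub-lemma, the paper thresholds at $1/e$ rather than $1/d$: on $\{p_i > 1/e\}$ there are at most two indices and each contributes at most $1$, while on $\{p_i \le 1/e\}$ the paper exploits the concavity of $x\log^2 x$ on $[0,1/e]$ via Jensen's inequality together with $\log^2(ab)\le 2\log^2 a + 2\log^2 b$. Your approach replaces this structural concavity argument by an elementary one-variable maximization of $g(u)=e^{-u}(\log d + u)^2$, which is perhaps more hands-on but equally effective; the paper's concavity argument, on the other hand, would generalize more readily if one ever needed a weighted or continuous variant.
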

\begin{proof}
   Since $[\rho, \omega]=0$, we have that 
   \begin{align}
        \var(\rho\| \omega) &\le  \tr{\rho \log^2\big(\frac{\rho}{\omega}\big) \bid\{\rho\mle \omega\}}+\tr{\rho \log^2\big(\frac{\rho}{\omega}\big) \bid\{\rho\mge \omega\}}
       \\& \le  \tr{\rho \log^2\big(\frac{1}{\rho}\big) \bid\{\rho\mle \omega\}}+\tr{\rho \log^2\big(\frac{1}{\omega}\big) \bid\{\rho\mge \omega\}}
       \\&\le 2+2\log^2(|\cH|)+2+\log^2(\|\omega\|_{\min})
   \end{align}
   where we used $\tr{\rho \log^2\rho}\le 2\log^2(|\cH|)+4$ proven below and $\rho \log^2\big(\frac{1}{\omega}\big) \mle \log^2\big(\frac{1}{\|\omega\|_{\min}}\big) \rho$.

Let $\{\lambda_i\}_i$ be the eigenvalues of $\rho$ and   $S=\{ i \in |\cH| : \lambda_i \le \frac{1}{e}\}$. The choice of the threshold $\frac{1}{e}$ is justified by the fact that the function $x\mapsto x\log^2(x)$ is concave between $[0, \frac{1}{e}]$. Moreover we can bound the size of $S^c$ as follows 
\begin{align}
    |S^c|\cdot \frac{1}{e}\le \sum_{i\in S^c} \lambda_i \le \sum_{i=1}^{|\cH|} \lambda_i=1 \Rightarrow |S^c|\le e \Rightarrow |S^c|\le 2. 
\end{align}
On the other hand using the concavity of $x\mapsto x\log^2(x)$  between $[0, \frac{1}{e}]$ we have that:
\begin{align}
    \frac{1}{|S|}\textstyle\sum_{i\in S} \lambda_i \log^2(\lambda_i) &\le \left(\frac{1}{|S|}\textstyle\sum_{i\in S}\lambda_i\right)\log^2\left(\frac{1}{|S|}\textstyle\sum_{i\in S}\lambda_i\right)
    \\&\le 2\left(\frac{1}{|S|}\textstyle\sum_{i\in S}\lambda_i\right)\log^2\left(\textstyle\sum_{i\in S}\lambda_i\right)+2\left(\frac{1}{|S|}\textstyle\sum_{i\in S}\lambda_i\right)\log^2\left(|S|\right)
    \\&\le \frac{2}{|S|}+\frac{2\log^2(|\cH|)}{|S|}
\end{align}
where the second inequality uses $\log^2(ab)=(\log a +\log b)^2\le 2\log^2(a)+2\log^2(b)$ and the third inequality uses $\sum_{i\in S}\lambda_i \le 1$, $x\log^2(x)\le 1$ for $x\in [0,1]$ and $|S|\le |\cH|$.  Therefore using again $x\log^2(x)\le 1$ for $x\in[0,1]$
\begin{align}
   \tr{\rho \log^2\rho}= \sum_{i=1}^{|\cH|} \lambda_i\log^2(\lambda_i) &= \sum_{i\in S} \lambda_i \log^2(\lambda_i)+ \sum_{i\in S^c} \lambda_i \log^2(\lambda_i)
    \\&\le |S| \cdot\left(\frac{2}{|S|}+\frac{2\log^2(|\cH|)}{|S|}\right)+ |S^c|
    \\&\le 2+2\log^2(|\cH|)+2. 
\end{align}
\end{proof}
In the general case of non-commuting $\rho$ and $\omega$, one could start from \cite[Lemma III.4]{Dupuis2019Jul} to prove such bounds.
\end{document}